\begin{document}

\frontmatter          
\pagestyle{plain}  
\mainmatter              

\setcounter{secnumdepth}{3}
\setcounter{tocdepth}{3}

\title{Secure Two-Party Quantum Computation \texorpdfstring{\\}{} Over Classical Channels}

\author{Michele Ciampi\inst{1}, Alexandru Cojocaru\inst{2}, Elham Kashefi\inst{1,3}, Atul Mantri\inst{4}}

\institute{School of Informatics, The University of Edinburgh \\
\href{mailto:mciampi@ed.ac.uk}{michele.ciampi@ed.ac.uk}\\
\and
Inria \\ \href{mailto:dragos-alexandru.cojocaru@inria.fr}{dragos-alexandru.cojocaru@inria.fr}, \and
Laboratoire d'Informatique de Paris 6 (LIP6), Sorbonne Universit\'{e}, \\
\href{mailto:ekashefi@inf.ed.ac.uk}{ekashefi@inf.ed.ac.uk}
\and
Joint Center for Quantum Information and Computer Science (QuICS),\\ University of Maryland, College Park, USA \\ \href{mailto:amantri@umd.edu}{amantri@umd.edu}
}

\maketitle

\begin{abstract}
Secure two-party computation considers the problem of two parties computing a joint function of their private inputs without revealing anything beyond the output of the computation. 
In this work, we take the first steps towards understanding the setting where: 1) the two parties (Alice and Bob) can communicate only via a classical channel, 2) the input of Bob is quantum, and 3) the input of Alice is classical. Our first result indicates that in this setting it is in general \emph{impossible} to realize a two-party quantum functionality with black-box simulation in the case of malicious quantum adversaries. In particular, we show that the existence of a secure quantum computing protocol that relies only on classical channels would contradict the quantum no-cloning argument. 

We circumvent this impossibility following three different approaches. The first is by considering a weaker security notion called \emph{one-sided simulation} security. This notion protects the input of one party (the quantum Bob) in the standard simulation-based sense and protects the privacy of the other party's input (the classical Alice). We show how to realize a protocol that satisfies this notion relying on the learning with errors assumption. The second way to circumvent the impossibility result,  while at the same time providing standard simulation-based security also against a malicious Bob, is by assuming that the quantum input has an efficient classical representation. 

Finally, we focus our attention on the class of zero-knowledge functionalities and provide a compiler that takes as input a classical proof of quantum knowledge (PoQK) protocol for a \QMA{} relation $\mathcal{R}$ (a classical PoQK is a PoQK that can be verified by a classical verifier) and outputs a zero-knowledge PoQK for $\mathcal{R}$ that can be verified by classical parties. The direct implication of our result is that Mahadev's protocol for classical verification of quantum computations (FOCS'18) can be turned into a zero-knowledge proof of quantum knowledge with classical verifiers. To the best of our knowledge, we are the first to instantiate such a primitive. 

\end{abstract}

\section{Introduction}
Secure multi-party computation (\mpc{})~\cite{DBLP:conf/stoc/GoldreichMW87,yao1986generate} allows multiple distrusting parties to jointly compute any function of their joint input, such that no information is leaked about their private inputs apart from what can be inferred from the output of the computation~\cite{yao1982protocols,goldreich2009foundations}. Unfortunately, it is well known that \mpc{} is not possible with information-theoretic security when all but one party is corrupted.
Luckily, it is possible to circumvent this 
impossibility by considering computationally bounded adversaries or the existence of trusted third parties.

In the quantum world, secure quantum multiparty computation (\qmpc{}) and the case of quantum 2-party computation (\qtwopc{}) have been originally proposed in~\cite{van1997towards,Ben-Or2006,crepeau2002secure} and~\cite{dupuis2010secure,dupuis2012actively} and further explored in \cite{kashefi2017garbled,kashefi2017quantum,sun2018application,dulek2020secure}.  

All the previous works either realize post-quantum secure classical functionalities over classical networks or require all the involved parties to possess quantum resources and access to quantum channels. This means that quantum communication channels have been a must for securely implementing a quantum function and hence, put a rather heavy burden on the parties involved. Nonetheless, similar to the classical setting, we know that information-theoretic secure \qmpc{} with a dishonest majority is not possible. Therefore, the security achievable in \qmpc{} protocols with quantum channels are at best computational, despite requiring parties to have powerful quantum devices and access to the quantum channel. Hence, a natural question regarding the trade-off between the functionality achieved and the resources needed is the following:

\begin{center}
\emph{Do all parties require quantum devices and need to share quantum channels to securely evaluate a quantum function?} 
\end{center}

In this work, we study the case where two parties (Alice and Bob) have access to a classical channel, and only one party (Bob) has a quantum machine. Ideally Alice and Bob want to compute any arbitrary quantum computation on the classical-quantum (joint) input.
Getting insights into the minimum requirements for \qmpc{} protocols is important not only from the foundational perspective, but will also lay the stepping stone for practical \qmpc{} protocols. 

Unfortunately, our first result shows that achieving black-box simulation-based security in this setting is in general impossible. 
We recall that the notion of black-box simulation-based security
 guarantees the existence of a simulator that works by having only black-box access\footnote{Following~\cite{unruh2012quantum}, by black-box access here we mean that the simulator has oracle access to the unitary $M$ and $M^\dagger$, where $M$ is the quantum adversary.} to the adversary.
 To circumvent the impossibility, we follow three different (natural) approaches. The first is to weaken the security definition by considering the notion of \emph{one-sided simulation}. This notion protects the input of one party (the quantum Bob) in the standard simulation-based sense and protects only the privacy\footnote{By privacy of Alice's input we mean that the adversary is not able to figure out whether the input of Alice is $0$ or $1$.} of the other party's input (the classical Alice).
 The second approach we adopt to circumvent the impossibility is to restrict the class of functionalities that can be computed. In particular, we argue that if the quantum input of Bob has an efficient classical representation (known to Bob), then we can realize any functionality with standard black-box simulation-based security.
For our third approach, we restrict our attention to the  \emph{zero-knowledge} functionality for \QMA{}. In particular, we 
show how to realize the zero-knowledge functionality by properly combining classical proof of quantum knowledge (\PoQK) protocols\footnote{A classical proof of quantum knowledge (\CPoQK{}) is a proof of quantum knowledge (\PoQK) that can be verified by a classical verifier.} for \QMA{} with well-known existing post-quantum secure cryptographic primitives.
As an immediate corollary of our result, we obtain the first zero-knowledge proof of quantum knowledge protocol (\zkpoqk) with classical verifiers and quantum prover for a specific class of \QMA\ relations. The corollary is obtained by using as input of our compiler the \CPoQK{} protocol of Mahadev~\cite{mahadev2018classical,vidick2020classical}. To the best of our knowledge, our work is the first to provide such a primitive. We now provide more details on our results.

\subsection{Our Contributions}
All the results in this work concern secure quantum two-party computation (\qtwopc{}) over a classical channel\footnote{Unless otherwise specified, all the protocols and sub-protocols considered in this work allow the two parties to communicate only via classical channels. Moreover, only one party has access to a quantum machine}. We logically divide our contributions into three sets (we summarize our results in Figure~\ref{fig:diagram_res}).

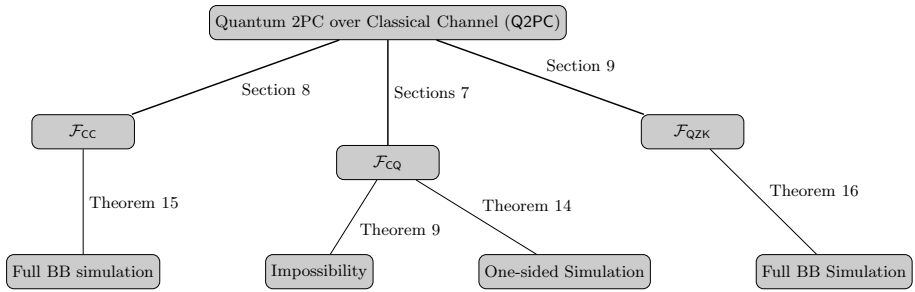
\begin{figure}[tbp]
    \centering 
    \tikzstyle{block} = [draw, fill=black!20, rectangle, rounded corners, minimum height=2em, minimum width=6em]
   \resizebox{\textwidth}{!}{
    \begin{tikzpicture}[auto, node distance=2cm]
        \node [block, left] (blk1) {Quantum 2PC over Classical Channel (\qtwopc{})};
        \node [block, below left=of blk1] (blk11) {\Fcc{}};
        \node [block, below =of blk1] (blk12) { \Fcq{}};
        \node [block, below right=of blk1] (blk13) {\Fqzk{}};
        \node [block, below =of blk11] (blk21) {Full BB simulation};
        \node [block, right=of blk21] (blk22) {Impossibility};
        \node [block, right=of blk22] (blk23) {One-sided Simulation};
        \node [block, right=of blk23] (blk24) {Full BB Simulation};
        \draw [draw,-,thick] (blk1) -- node {Section~\ref{sec:full_simulation_two_pc}} (blk11);
        \draw [draw,-,thick] (blk1) -- node {Sections~\ref{sec:oqfe}} (blk12);
        \draw [draw,-,thick] (blk1) -- node {Section~\ref{sec:compiler_zkpoqk}} (blk13);
        \draw [draw,-] (blk11) -- node {Theorem~\ref{thm:full_sim_qtwopc}} (blk21);
         \draw [draw,-] (blk12) -- node {Theorem~\ref{cor:imposs} } (blk22);
         \draw [draw,-] (blk12) -- node {Theorem~\ref{th:oqfeproof}} (blk23);
         \draw [draw,-] (blk13) -- node {Theorem~\ref{thm:zkpoqk}} (blk24);
    \end{tikzpicture}
    }
    \caption{Summary of our results. Our results on quantum two-party computation over a classical channel (\qtwopc{}) can be characterized into two levels: Functionality (first level) and Security (second level). We consider three different types of functionalities depending on the input of Alice and Bob. \Fcc{} is a joint quantum function with a classical input $x$ from Alice and a quantum input $\sigma_y$ from Bob, where the quantum input $\sigma_y$ has (efficient) classical description known to Bob. Similarly, \Fcq{} takes input $x$ and $\rho_y$, where $\rho_y$ is any quantum state with no classical description known to Bob. \Fqzk{} takes as input an instance $x$ that belongs to a \QMA{} language $L$ and a quantum state $\ket{\psi}$ from Bob, and returns to Alice either $0$ or $1$ depending on whether the verifier for the \QMA{} relation would have accepted $x$ and $|\psi \rangle$. The second level represents different notions of simulation-based security. The chart shows relevant sections and results that we study in this paper.}
    \label{fig:diagram_res}
\end{figure}

\paragraph{Security of \qtwopc{} with classical channel.}
What is the best possible security, achievable when a quantum two-party functionality is realized over classical networks? To answer this question, we analyze the difficulty underlying the process of achieving simulation-based security with black-box access to the adversary. To elaborate on this, we formalize a connection between the existence of a simulator for any malicious adversary in (quantum) two-party computation and the existence of an extractor (with the same run-time) in (quantum) \emph{agree-and-prove} (AaP) protocols~\cite{badertscher2019agree,vidick2020classical}. 
The notion of AaP is a generalization of the notion of proof-of-knowledge. In the latter, a prover and a verifier have a statement $x$ as a common input and the prover can construct a proof that convinces the verifier about the knowledge of a witness for $x$. In AaP the statements are not given as a common input, but it can be the output of an interactive protocol which acts as an agree-phase. At the end of the agree-phase, the prover and the verifier have a common input and might have some secret information. 
By observing that a $\sf{2PC}$ protocol can be used to realize the \emph{prove} phase of an AaP protocol and by leveraging on the impossibility of constructing  AaP protocols for quantum-money verification~\cite{vidick2020classical}, we also rule out the possibility of securely evaluating some functionalities in the setting where no quantum communication is allowed.

\begin{theorem}[Informal]
Secure quantum two-party computation with quantum input over a classical channel with black-box simulation is in general impossible.
\end{theorem}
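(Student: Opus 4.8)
The plan is to reduce the existence of a black-box simulator for a secure \qtwopc{} protocol over classical channels to the existence of an extractor for an agree-and-prove protocol realizing quantum-money verification, and then invoke the known impossibility of such an extractor (the no-cloning argument of \cite{vidick2020classical}). Concretely, I would fix a target functionality whose realization is known to be impossible in the AaP framework: the functionality that takes a classical input $x$ from Alice (a serial number or verification key for a money state) and a quantum input from Bob (a purported banknote $\rho$), and outputs to Alice the bit indicating whether the quantum-money verification predicate accepts $(x,\rho)$. The key conceptual step, already announced in the excerpt, is that a secure \twopc{} protocol for this functionality can be used to instantiate the \emph{prove} phase of an AaP protocol for quantum-money verification.

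**Key steps, in order.** First I would set up the AaP syntax for quantum money: an agree phase in which the mint publishes $x$ and keeps secret state, and a prove phase in which a prover demonstrates possession of a valid banknote for $x$; completeness holds because an honest prover holding a genuine banknote makes the verifier accept, and the relevant security property is that no efficient malicious prover can make two independent verifier instances accept for the same $x$ (this is where unforgeability/no-cloning bites). Second, I would implement the prove phase by having the prover and verifier run the assumed secure \qtwopc{} protocol for the money-verification functionality, with the verifier playing Alice (classical input $x$) and the prover playing Bob (quantum input $\rho$). Third — the heart of the argument — I would use the black-box simulator guaranteed by the security definition against a malicious Bob: given any (possibly cheating) prover $P^\star$, the simulator $\Sim$ interacts with the ideal functionality and, with black-box (including inverse-unitary) access to $P^\star$, must produce a quantum input $\tilde\rho$ that it submits to the ideal functionality, reproducing $P^\star$'s acceptance probability. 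This extracted state $\tilde\rho$ plays the role of the \emph{witness extracted} by an AaP extractor. Fourth, I would observe that a successful prover's interaction can be rewound/simulated on two independent copies of the simulator's extracted state (or that the extractor can be invoked to output a state that simultaneously passes two verifications), contradicting the no-cloning-based impossibility: if $P^\star$ holds a single valid banknote yet a same-runtime extractor can produce a state accepted by the verification predicate, one can clone the banknote, which \cite{vidick2020classical} rules out.

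**Main obstacle.** The delicate point is making the reduction quantitatively faithful: the AaP impossibility is about an extractor with \emph{essentially the same runtime} as the prover, so I must ensure the simulator-to-extractor translation does not blow up resources, and that the state the simulator feeds to the ideal functionality is genuinely a single quantum register (not, say, classical side information plus fresh preparation) so that extracting it really amounts to cloning Bob's input. I also need to handle the gap between ``simulator reproduces the output distribution'' and ``extractor outputs a witness accepted by the predicate'': this requires arguing that because the ideal functionality literally applies the verification predicate to the submitted state, a simulator that matches a high acceptance probability must be submitting a state that passes with comparable probability, hence is a bona fide extracted banknote. Formalizing that the black-box access to $P^\star$ and $P^{\star\dagger}$ used by $\Sim$ is exactly the access an AaP extractor is permitted — and nothing more — is the step I expect to require the most care, and it is precisely the place where the ``classical channel'' hypothesis is used: over a quantum channel the functionality could be realized by literally teleporting Bob's state into the computation, so no extraction of a standalone classical-or-quantum description is forced.
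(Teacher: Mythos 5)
Your proposal follows essentially the same route as the paper: instantiate the functionality as quantum-money verification (Alice holding the classical verification data, Bob the purported banknote), view the \qtwopc{} protocol as the prove phase of an agree-and-prove scheme, identify the black-box simulator against malicious Bob with the AaP extractor, and derive a contradiction from the no-cloning impossibility of \cite{vidick2020classical}. The only notable difference is presentational: the paper makes your ``Bob still holds the banknote after extraction'' concern precise by choosing a cheating Bob whose operations are purified so that the interaction is perfectly non-destructive (Definition~\ref{def:nondestructive-interaction}) --- possible exactly because all exchanged messages are classical --- and then invoking Theorem~\ref{thm:no-nondestructive} directly.
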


Due to the above result, we present three relaxations ---one on the security model (we consider the notion of \emph{one-sided simulation}) and the other two on the class of functionalities that can be computed by the parties. 

\paragraph{One-sided simulation. }
We consider the notion of one-sided simulation which provides standard security against the classical party (Alice), and indistinguishably based security against the quantum party (Bob).
 For the sake of simplicity, we first present a protocol that realizes the functionality  \onetwooqfe{}. 
In \onetwooqfe{}, Alice has as input a bit $b$, while Bob's input is a single qubit state $\psi$. The target computation is 1 out of 2 possible functions $f_0$ and $f_1$. The \onetwooqfe{} functionality ensures that Alice obtains $f_b(\psi)$ without ``learning'' anything about the other function applied on Bob's input (i.e. $f_{1 \oplus b}(\psi)$), while Bob ``learns'' nothing about Alice's input $b$. 

 At a high level, our protocol is based on the measurement-based model of quantum computation. Alice's input is encoded in the measurement angles, Bob's input is the first layer of the underlying graph state and the last layer represents Alice's output. Alice remotely prepares the (auxiliary) qubits corresponding to the graph state using a cryptographic primitive known as remote state preparation (\rsp{})\footnote{In \rsp{}, a classical party (Alice) instructs a quantum party (Bob) to generate a quantum state remotely on Bob's side using classical communication only. The description of the generated quantum state is known to Alice but not to Bob. Such a task is only possible under computational assumptions.}. 
 We present two protocols based on this idea. One of which is non-interactive, secure against semi-honest Alice, and protects the input of Alice against a malicious Bob.  For this result, we rely on a  (computationally secure) classical-client remote state preparation protocol as a sub-module. Specifically, given the underlying connection to remote state preparation, this construction reduces the cryptographic complexity of \qtwopc{} to injective homomorphic trapdoor quantum one-way functions. 
 In the other protocol, we uplift the security of our semi-honest protocol against malicious Alice achieving one-sided simulation. Finally, we show how to extend the protocol to realize any functionality that belongs to \Fcq, where \Fcq{} denotes the class of two-input functionalities that admit one quantum input and one classical input\footnote{The subscript $\sf{CQ}$ for \Fcq{} denotes that the input of one party is classical whereas the input of the second party is quantum.}. Hence, we obtain the following result.

\begin{theorem}[Informal]
There exists a \qtwopc{} protocol that securely realizes \Fcq{}  with one-sided simulation, assuming the hardness of LWEs.
\end{theorem}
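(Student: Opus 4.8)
The plan is to follow the three-part structure sketched above: construct and analyse a protocol for the toy functionality \onetwooqfe{}, lift it to a malicious Alice, and then bootstrap to all of \Fcq{}. For \onetwooqfe{} I would instantiate the measurement-based template. Fix a small graph that supports the measurement patterns for both candidate functions $f_0,f_1$, whose first layer will carry Bob's input qubit $\psi$ and whose last layer produces the (classical) output. Using a classical-client \rsp{} protocol, Alice instructs Bob to prepare the auxiliary single-qubit states $\ket{+_{\theta_i}}$ with the angles $\theta_i$ known to Alice but hidden from Bob; Bob then applies the $\mathsf{CZ}$ edges, plugging $\psi$ into the first layer, to obtain the graph state. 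Alice drives the computation in \ubqc{} style: she encodes $b$ (which selects $f_b$) into the true measurement angles $\phi_i$, sends masked angles $\delta_i=\phi_i+\theta_i+r_i\pi$ for uniformly random $r_i$, and Bob returns the measurement bit $s_i$ of qubit $i$ in the $\{\ket{+_{\delta_i}},\ket{-_{\delta_i}}\}$ basis; finally Bob measures the output layer and reports the outcomes, from which Alice --- knowing all $\theta_i,\phi_i,r_i$ --- decodes $f_b(\psi)$. Correctness is immediate from the correctness of \rsp{} and of the measurement pattern (with its standard by-product and dependency corrections) for $f_b$.

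For privacy against a malicious Bob it suffices --- since Bob receives no output --- to show that his view is computationally independent of $b$. His view is the \rsp{} transcript together with the masked angles $\{\delta_i\}$ and his own measurement outcomes. I would argue by a hybrid: by the hiding property of \rsp{}, which reduces to LWE, Bob's view is indistinguishable from one in which the $\theta_i$ are replaced by fresh uniform angles uncorrelated with everything Bob saw during \rsp{}; in that hybrid each $\delta_i$ is a one-time pad of $\phi_i$, hence uniform, so (invoking the blindness argument of \ubqc{} to handle the outcomes) the entire transcript is perfectly independent of $b$. Thus a distinguisher between $b=0$ and $b=1$ breaks \rsp{}, and therefore LWE.

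The step I expect to be the main obstacle is full simulation against a malicious Alice. Against a semi-honest Alice a simulator is straightforward: from Alice's input and randomness it reads off $b$, queries the ideal \onetwooqfe{} to obtain $y=f_b(\psi)$, runs \rsp{} honestly while playing Bob on a dummy qubit, and then samples the reported measurement bits so that Alice's honest decoding yields exactly $y$ --- this is possible because, by the quantum-one-time-pad / Pauli-twirl structure of a measurement pattern, conditioned on the output the reported bits are uniform over an affine subspace that does not depend on $\psi$. To obtain \emph{one-sided} (i.e.\ full) simulation against a \emph{malicious} Alice, I would compile the semi-honest protocol GMW-style, crucially exploiting that Alice is \emph{classical}: Alice commits to $b$ and to her \rsp{} randomness using a classical, hence rewinding-extractable, commitment and proves with a sound argument of knowledge that every message she sends --- in particular each $\delta_i$ --- is honestly computed from the committed values. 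The simulator rewinds this classical Alice to extract $b$, queries the ideal functionality, and then samples consistent outcomes as in the semi-honest case; soundness guarantees that a cheating Alice cannot steer the execution away from the committed input, so the fibre sampled from is always the correct, non-empty one. The delicate points are (i) that classical-client \rsp{} is typically only game-based secure, so using it inside a simulation-based proof calls for either a simulation-secure instantiation or a direct LWE reduction that absorbs the \rsp{} sub-protocol, and (ii) verifying that the ``consistent-outcome'' distribution matches the real one for every admissible (committed) behaviour of Alice, including the bookkeeping of the dependency corrections.

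Finally, nothing above used that Bob's input was a single qubit or that there were only two candidate functions. Replacing the small graph by a universal brickwork state, letting Bob plug an arbitrary $n$-qubit input $\rho_y$ into the first layer, and letting Alice encode an arbitrary classical string $x$ into the measurement angles, the correctness argument, the \ubqc{}-blindness-plus-\rsp{} privacy argument against Bob, and the extract-then-sample simulator against Alice all carry over, yielding a \qtwopc{} protocol that realises \Fcq{} with one-sided simulation under LWE.
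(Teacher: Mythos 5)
Your overall route is the same as the paper's: build \onetwooqfe{} from a small measurement pattern whose auxiliary qubits are delegated via classical-client \rsp{} (the paper uses the one-bit-teleportation circuit with QFactory), prove privacy against Bob by reducing to the blindness of \rsp{}/\ubqc{}, handle malicious Alice by commit-and-prove plus extraction from an argument of knowledge, and then generalize to \Fcq{} over a brickwork state. However, there is one genuine gap in the malicious-Alice step. You only have Alice commit to her input and to \emph{her own} \rsp{} randomness and prove consistency of every $\delta_i$ with the committed values. This does not prevent Alice from committing to adversarially chosen (``bad'') randomness for the trapdoor-function key generation underlying \rsp{}. The consistency proof then certifies honest use of a maliciously structured key, for which the guarantees the simulation relies on (2-regularity of $f_k$, well-definedness and uniformity of the hidden angle bits, the fact that Bob's \rsp{} transcript is input-independent and samplable from the trapdoor) need not hold. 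Concretely, with a degenerate key the honest Bob's final outcomes can depend on $\psi$ in a way that is not of the form ``$f_b(\psi)$ masked by quantities the simulator knows,'' so no simulator with a single query to the ideal functionality can reproduce the real distribution --- in the real world such an Alice may learn more about $\psi$ than any $f_b(\psi)$. Extraction of the committed randomness does not help, because the problem is distributional, not one of knowledge. The paper closes exactly this hole with a coin-tossing step: Alice commits to a share $r_f^A$, Bob then sends a fresh share $r_f^B$, the key is generated from $r_f^A \oplus r_f^B$, and Alice proves (via a post-quantum ZK argument of knowledge) that the key was produced with that combined randomness; binding plus Bob's fresh share forces an honestly distributed key, after which your ``extract, query, sample consistent outcomes'' simulator goes through (this is also how the full protocol for \Fcq{} is organized, with commitments to the angles $\phi_{i,j}$ and per-round consistency proofs).

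A secondary point: in the paper's one-sided-simulation definition the adversary controlling Alice is an arbitrary QPT machine, even though the honest Alice is classical, so you cannot justify extraction by ``Alice is classical, hence rewinding-extractable.'' You need a post-quantum extractable argument of knowledge and post-quantum hiding commitments (the paper instantiates these from LWE), which also resolves your worry (i), since the game-based blindness of QFactory is only invoked for privacy against Bob, not inside the simulation against Alice. Also, be careful that 4-state QFactory hides only the basis bit $\theta_2$ (not all of $\theta$), so your hybrid replacing all $\theta_i$ by fresh uniform angles overstates the \rsp{} guarantee; the masking of the $\pi$-component comes from Alice's fresh bits $r_i$, and for the general protocol the paper instead invokes the composition theorem for \ubqc{} with QFactory.
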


\begin{remark}
We emphasize that our approach is the first showing a relation between \rsp{}~\cite{CCKW18,cojocaru2019qfactory,gheorghiu2019computationally,badertscher2020security} and \oqfe, thus leveraging on the \rsp{} techniques to replace quantum channels with (more practical) classical channels in \mpc. We believe that this approach could naturally be generalized to the case where there are multiple classical and quantum parties.
\end{remark}

Our second result enhances the previous result by combining it with a sequence of zero-knowledge protocols, allowing us to obtain a fully simulatable secure protocol. To do this, we restrict ourselves to the class of quantum functionalities that admits only classical inputs (which we denote with \Fcc{}), then we show how to uplift the security of the one-sided-simulation protocol to make it fully simulatable.

\begin{theorem}[Informal]
There exists a \qtwopc{} protocol that realizes \Fcc{} with fully black-box simulation-based security, assuming the hardness of LWEs.
\end{theorem}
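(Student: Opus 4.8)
The plan is to obtain the protocol by a GMW-style compilation of the one-sided-simulation protocol for \Fcq{} of Theorem~\ref{th:oqfeproof}, exploiting the observation that restricting to \Fcc{} makes \emph{both} parties' inputs classical (Bob's quantum input comes with a classical description that he knows). Hence the single ingredient that one-sided simulation was missing --- the ability of a simulator to recover a malicious Bob's \emph{effective} input --- is no longer obstructed by no-cloning and can instead be supplied by standard post-quantum machinery: an extractable commitment to Bob's classical input together with arguments of knowledge of ``honest behavior''. Concretely, starting from the protocol $\pi$ of Theorem~\ref{th:oqfeproof}, which already yields a full simulator against a malicious classical Alice, I would add: (i) at the outset, a post-quantum extractable commitment in which Bob commits to his classical input $y$ and to the randomness he will use; (ii) a coin-tossing step fixing Bob's protocol randomness, or equivalently the direct use of the committed randomness together with its extraction by the simulator; (iii) after each of his messages, a post-quantum zero-knowledge argument of knowledge by Bob that the message was produced by the prescribed $\pi$-strategy from the committed input and randomness, and symmetrically an argument by Alice that she behaved honestly with respect to her committed input. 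All of (i)--(iii) are instantiable from LWE, so the assumption in the statement is preserved; the honest-behavior statements are in $\mathsf{NP}$ because, conditioned on the committed classical input and randomness, each party's prescribed computation in $\pi$ is a polynomial-time \emph{classical} next-message function.

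I would then analyse the two corruption cases. Against a malicious classical Alice, the compiled protocol inherits the full simulator of $\pi$: the messages newly sent by Bob are the extractable commitment (which is hiding) and the ZK arguments (which are simulatable), so Alice's view is indistinguishable from her view in $\pi$, and this side does not rely on the soundness of Alice's own proofs. Against a malicious quantum Bob --- the new content --- the simulator $\Sim_B$ proceeds as follows: (a) run the extractor of Bob's input commitment (equivalently, of his first argument of knowledge) to recover a candidate classical input $\tilde y$; (b) send $\tilde y$ to \Fcc{} and obtain the ideal output; (c) complete the interaction by running the honest-Alice strategy of $\pi$ on a dummy input $0$, committing to $0$, and producing the corresponding honest-behavior proofs; (d) if the functionality delivers output to Bob, embed the ideal output into Bob's view using the output-reconstruction structure of $\pi$. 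Indistinguishability from the real execution follows from a hybrid argument: first replace Alice's honest-behavior proofs by ZK-simulated ones so that they no longer refer to a witness; then switch Alice's input commitment from $x$ to $0$ (hiding) and her $\pi$-messages from $x$ to $0$ (the one-sided indistinguishability of Alice's input against Bob, already guaranteed by Theorem~\ref{th:oqfeproof}); then restore honest proofs for the dummy execution (ZK again); and, separately, use argument-of-knowledge soundness of Bob's proofs to conclude that, except with negligible probability, his messages in the real execution are consistent with the extracted $\tilde y$, so that Alice's real output equals the ideal output $f(x,\tilde y)$ up to the correctness error already analysed for $\pi$.

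The main obstacle I anticipate is the composition of \emph{quantum} extraction with \emph{quantum} simulation: $\Sim_B$ must extract Bob's classical input while also rewinding to simulate Alice's arguments (and, if needed, to embed Bob's output), all against an adversary holding arbitrary quantum side information, and these rewindings must not interfere with one another or damage the state of the adversary $M$. I would address this by (1) ordering the hybrids so that at most one ``hard'' quantum-rewinding step is active at a time, and (2) preferably instantiating the input commitment with a \emph{straight-line}, non-rewinding quantum-extractable commitment from LWE, which removes rewinding from the input-extraction part and leaves only the (by now standard) post-quantum ZK simulation/extraction in the style of Watrous and Unruh; this is what keeps the overall simulator black-box in the sense of the footnote, i.e.\ using only oracle access to $M$ and $M^\dagger$. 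A secondary point to verify is that the measurement-based and \rsp{}-based sub-structure of $\pi$ remains sound once the honest-behavior proofs are added; but since those proofs certify a purely classical next-message function of each party's classical view, this reduces to the analysis already carried out for $\pi$ together with the soundness of the added arguments.
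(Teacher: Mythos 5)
Your overall architecture is the same as the paper's compiler (Protocol~\ref{protocol:full_sim_qtwopc}): Bob commits to the classical description of his input, proves knowledge of the committed value so that the simulator can extract it and query \Fcc{}, the parties run the one-sided protocol $\pi$ of Theorem~\ref{th:oqfeproof}, Bob proves that his $\pi$-messages are consistent with the committed input, and the hybrid argument against a malicious Bob uses ZK simulation of Alice's side, the hiding of the commitment, the one-sided privacy of Alice's input, and soundness of Bob's consistency proofs --- exactly as in the proof of Theorem~\ref{thm:full_sim_qtwopc}.

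However, there is a genuine gap in the step where you classify Bob's ``honest-behavior'' statements. You claim that, conditioned on the committed classical input and committed randomness, ``each party's prescribed computation in $\pi$ is a polynomial-time \emph{classical} next-message function,'' so that Bob's consistency proofs can be \NP{} arguments of knowledge. This is false for Bob. In Protocol~\ref{protocol:oqfe_protocol} Bob's prescribed messages are outcomes of genuinely quantum operations: he prepares states via the QFactory/\rsp{} subroutine, entangles them and his input into a graph state, and measures qubits in the bases $\{\ket{+_{\delta_{i,j}}},\ket{-_{\delta_{i,j}}}\}$, sending the measurement outcomes. These outcomes are not a deterministic (or efficiently classically samplable) function of a classical input and a pre-committed classical random tape --- even when the input state has an efficient classical description, honest behavior cannot in general be certified by an \NP{} witness, and committing to ``the randomness Bob will use'' is not even well defined for measurement randomness. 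The statement ``message $m_i$ was computed correctly according to $\pi$ on the input committed in $com_y$'' is a quantum statement; this is precisely why the paper uses a post-quantum zero-knowledge protocol for \QMA{} (instantiated with \cite{vidick_zhang_ZK_20}) for this step, while only the proof of knowledge of the committed classical input needs to be an \NP{} ZKPoK (and, as the paper notes, no proof-of-quantum-knowledge property is needed for the \QMA{} part). Replacing that component by an \NP{} argument of knowledge, as you propose, leaves the compiler without any sound mechanism tying Bob's quantum measurement outcomes to the extracted input, so your final soundness hybrid (``his messages are consistent with $\tilde y$'') does not go through. The remaining deviations --- the extra coin-tossing for Bob, the symmetric per-message proofs for Alice (already enforced inside $\pi$), and the output-embedding for Bob (the formal theorem only delivers output to Alice) --- are redundant but harmless; the \NP{}-versus-\QMA{} issue is the one that must be fixed, and fixing it essentially lands you on the paper's construction.
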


We also argue how to obtain one-sided simulation from circuit private Quantum Fully Homomorphic Encryption Approach (QFHE)~\cite{DBLP:journals/iacr/Malavolta20}. We can then use our compiler to obtain a fully simulatable \qtwopc{} protocol. We stress that QFHE without circuit privacy would not suffice to obtain our results.\footnote{The first circuit private QFHE protocol is proposed in~\cite{DBLP:journals/iacr/Malavolta20} which appeared after the first submission of this paper. We note that without the result proposed in ~\cite{DBLP:journals/iacr/Malavolta20} our protocol was the first to allow secure computation over a classical channel in the case of semi-honest parties.}

For our third result, instead, we use a completely different approach, showing how to promote any proof of quantum knowledge (\PoQK) protocol to a
\qtwopc{} protocol for the zero-knowledge functionality $\mathcal{F}_\mathsf{QZK}$ for \QMA. $\mathcal{F}_\mathsf{QZK}$ takes as input an instance and $x$ that belongs to a \QMA{} language $L$ and a quantum state $\ket{\psi}$ from Bob, and returns to Alice either $0$ or $1$ depending on whether the verifier for the \QMA{} relation would have accepted $x$ and $|\psi \rangle$. 
More precisely,  for this result, we use as the main tool a classical proof of quantum knowledge (\poqk{}) which enjoys the property of \emph{\mesind}. This property guarantees that the verifier can compute the protocol messages without looking at the messages received from the prover\footnote{Note that any public-coin protocol trivially enjoys this property.}.
We  propose a generic compiler that takes any \poqk{} that enjoys the property of \mesind\ and turns it into a secure \qtwopc{} protocol for the \Fqzk{} functionality.

 \begin{theorem}[Informal]
 If there exists a \poqk{} for the \QMA\ relation $\mathcal{R}$ that enjoys the property of  \mesind\ then there exists \qtwopc{} protocol that realizes \Fqzk{} (for the same relation $\mathcal{R}$) with fully black-box simulation-based security, assuming the hardness of LWEs.
 \end{theorem}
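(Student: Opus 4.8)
The plan is to build the \qtwopc{} protocol $\pi$ for \Fqzk{} by wrapping the assumed \poqk{} protocol $(\mathcal{P},\mathcal{V})$ for $\mathcal{R}$ with three post-quantum primitives that all follow from LWE: a statistically-binding, computationally-hiding commitment $\mathrm{Com}$, a post-quantum zero-knowledge argument of knowledge $\Pi_{\mathrm{ZK}}$ for $\NP$, and a maliciously secure post-quantum two-party computation protocol for classical functions $\pi_{\mathrm{2PC}}$ (it is convenient to design $\pi$ in the hybrid model where these are ideal, e.g.\ over $\fzk$ augmented with an ideal $\mathcal{F}_{\mathrm{2PC}}$). The guiding principle is that Alice should (a)~never see any of Bob's protocol messages in the clear, (b)~be forced to act as the honest verifier for a once-and-for-all committed instance, and (c)~learn nothing except the one-bit verdict, which is released to her only through $\pi_{\mathrm{2PC}}$. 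This is where \mesind\ is essential: since $\mathcal{V}$'s messages do not depend on the prover's, Alice can commit up front to her instance $x$ and verifier randomness $r$ and still compute all of her messages while Bob's messages stay hidden inside commitments. Concretely, Alice sends $c_A\leftarrow\mathrm{Com}(x\|r)$ and, in round $i$, sends $m_i=\mathcal{V}_i(x;r)$ with a $\Pi_{\mathrm{ZK}}$ proof that $m_i$ is consistent with $c_A$; Bob runs $\mathcal{P}$ on $(\ket{\psi},m_1,\dots,m_i)$ to obtain its reply $p_i$ but only sends $\gamma_i\leftarrow\mathrm{Com}(p_i)$; finally the parties invoke $\pi_{\mathrm{2PC}}$ on the function that uses Alice's $(x,r)$ (bound to $c_A$), Bob's committed replies $\{p_i\}_i$ (bound to $\{\gamma_i\}_i$) and the recorded transcript, re-derives and checks all messages, and --- if consistent --- returns the verdict $b=\mathcal{V}_{\mathrm{out}}(x,r;m_1,p_1,\dots)$ to Alice and nothing to Bob (else $0$ to Alice). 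Completeness of the \poqk{} gives that honest executions output $\fqzk(x,\ket\psi)$ up to the completeness error.

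For a malicious Alice $\mathcal{A}^\star$ I would construct a simulator that runs $\mathcal{A}^\star$, plays Bob with the all-zero ``witness'', extracts $(x,r)$ via the knowledge extractor of $\Pi_{\mathrm{ZK}}$ applied to $\mathcal{A}^\star$'s first argument, forwards $x$ to \Fqzk{} to receive $b$, answers the remaining rounds with $\gamma_i\leftarrow\mathrm{Com}(0)$, and simulates the final $\pi_{\mathrm{2PC}}$ call (using its simulator against corrupted Alice) so that the delivered output is $b$ --- or $0$, if $\mathcal{A}^\star$'s in-protocol messages were inconsistent. Indistinguishability is a short hybrid argument: replace $\pi_{\mathrm{2PC}}$ by its ideal evaluation, replace every $\gamma_i$ by a commitment to $0$ (hiding), and replace the computed verdict by the one from \Fqzk{}, which is sound because binding of $c_A$ and soundness of the $\Pi_{\mathrm{ZK}}$ proofs pin $\mathcal{A}^\star$'s messages to the honest verifier messages for the extracted $(x,r)$, so the verdict computed inside $\pi_{\mathrm{2PC}}$ equals the \poqk{} verdict of an honest execution. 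Crucially, no zero-knowledge property of the \poqk{} itself is needed: zero-knowledge against Alice comes for free from the hiding of Bob's committed replies and the security of $\pi_{\mathrm{2PC}}$.

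For a malicious Bob $\mathcal{B}^\star$, his view consists only of honest-Alice messages, which are independent of the honest witness and hence trivially simulated; the real task is to feed \Fqzk{} a quantum state on Bob's behalf so that the bit delivered to Alice is right. I would have the simulator play the honest verifier against $\mathcal{B}^\star$ and assemble from it a \emph{plain} \poqk{} prover $\tilde{\mathcal{P}}^\star$ that runs $\mathcal{B}^\star$, substitutes for each committed reply $p_i$ the opening extracted by the $\pi_{\mathrm{2PC}}$-simulator run against $\mathcal{B}^\star$, and outputs $\bot$ (forcing rejection) whenever an opening or a consistency check fails; then it runs the \poqk{} knowledge extractor on $\tilde{\mathcal{P}}^\star$ to obtain a candidate $\ket{\tilde\psi}$ and sends it to \Fqzk{}. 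By the proof-of-quantum-knowledge guarantee, if the real execution makes Alice accept with probability noticeably above the knowledge error, then $\ket{\tilde\psi}\in\mathcal{R}_x$ with the probability promised by the extractor, so the bit \Fqzk{} returns matches the real verdict.

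The step I expect to be the main obstacle is exactly this last matching of the single output bit. A generic \poqk{} only promises a successful extraction when the prover convinces with probability noticeably above the knowledge error $\negligible$, and the extracted state may be accepted only with probability polynomial in that gap rather than with essentially the same probability $p$; to get a negligibly-close simulation of Alice's output I therefore need the \poqk{} to have negligible knowledge error \emph{and} an extractor whose output is accepted by the \QMA{} verifier with probability at least $p-\negligible(\cdot)$ --- a form into which Mahadev's protocol can be cast after the standard repetition, which is what lets the compiler apply to it and also aligns the honest-prover verdict with the \QMA{} verifier's decision that \Fqzk{} demands. A secondary but genuine subtlety is that extracting $(x,r)$ from a \emph{quantum} $\mathcal{A}^\star$ must be straight-line / state-preserving (hence the reliance on LWE-based extractable primitives rather than generic quantum rewinding), and that the $\tilde{\mathcal{P}}^\star$ built from $\mathcal{B}^\star$ is an efficient unitary wrapper once purified, so the \poqk{}-extractor's black-box access to $\tilde{\mathcal{P}}^\star$ and $\tilde{\mathcal{P}}^{\star\dagger}$ is available, consistent with the black-box simulation model used throughout.
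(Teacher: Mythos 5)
Your construction is sound in outline but follows a genuinely different route from the paper. The paper's compiler (Protocol~\ref{protocol:zkcpoqk}) does not force Alice to commit to her verifier randomness at all: instead, Bob commits to a secret key $sk$ for a post-quantum private-key encryption scheme, proves knowledge of the committed $sk$ with a classical \zkpok{}, then runs the \poqk{} with each of his prover messages \emph{encrypted} under $sk$ (here \mesind{} plays the same role as in your design: the verifier can still compute her messages without ever seeing Bob's plaintexts), and finally gives a post-quantum ZK proof that the decryptions of his ciphertexts, under the committed key, form an accepting transcript. Zero-knowledge then follows from hiding, semantic security and the ZK subprotocols, and extraction composes the \zkpok{} extractor (to get $sk$) with the \poqk{} extractor, yielding knowledge error $\kappa_1\kappa_2$; realizing \Fqzk{} is then argued directly from this \zkpoqk{}. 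Your version replaces encryption-plus-final-ZK with commitments on Bob's replies plus an end-of-protocol maliciously secure classical 2PC that opens, checks and outputs the verdict, and additionally pins Alice to a committed $(x,r)$ with per-round consistency proofs. What your route buys is a very direct realization of \Fqzk{} (the output bit is produced inside the 2PC) and no reliance on an encryption scheme; what the paper's route buys is a much simpler malicious-Bob analysis, as explained next.

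The soft spot in your argument is the malicious-Bob simulator, precisely where the designs diverge. The \poqk{} extractor needs interactive black-box access to a prover that returns \emph{plaintext} replies round by round, possibly for verifier messages of the extractor's own choosing. In your protocol Bob's replies only ever become available (implicitly) through the single 2PC invocation at the very end, so your wrapper prover $\tilde{\mathcal{P}}^\star$ must, at round $i$, already produce $p_i$: this forces it to run $\mathcal{B}^\star$ forward through a self-chosen continuation up to the 2PC, extract the openings there, copy out $p_i$, and then undo the continuation before answering the extractor's next message. Making this rigorous requires purification, uncomputation of intermediate classical messages, and a gentle-measurement argument that hinges on statistical binding and on handling the event that $\mathcal{B}^\star$ fails to open consistently --- none of which your sketch addresses, and the needed rewinding sits uneasily with the straight-line extraction you yourself insist on for the other direction. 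The paper's design avoids this entirely: because $sk$ is committed and proven known \emph{before} the \poqk{} starts, the simulator extracts $sk$ once, up front, and can thereafter decrypt Bob's messages online, handing the \poqk{} extractor exactly the per-round black-box prover it expects. Either adopt that early-extraction structure (e.g., have Bob commit to and prove knowledge of a key or of per-round opening information at the start), or supply the missing uncomputation argument; as written, this step does not go through. Your closing remarks on extraction quality and on straight-line extraction against a quantum Alice are legitimate and are consistent with the paper's reliance on the $\QMA^*$-type instantiation and LWE-based primitives.
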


This result combined with the recent \poqk{} protocol proposed in~\cite{vidick2020classical} yields to a \qtwopc{} for \Fqzk{} for \QMA{} relations that satisfies some specific properties (we refer the reader to the technical part of the paper for more detail).
We stress that our compiler makes black-box use of the underling \poqk{} protocol, hence any advancement in the area of classical proof of quantum knowledge would immediately translate to a better \qtwopc{} for \Fqzk{}.

\subsection{Related Works}
The first work that studied the question of \mpc{} in the quantum domain is~\cite{crepeau2002secure}, where a secure $\sf{QMPC}$ protocol is proposed based on the ideas of verifiable quantum secret sharing of the inputs of the parties. This result has been extended in~\cite{crepeau2002secure,Ben-Or2006,lipinska2020secure}. In a series of works by Dupuis et al. \cite{dupuis2010secure,dupuis2012actively} the setting of two-party quantum computation is presented using tools from classical \mpc{} and quantum authentication codes developed in~\cite{aharonov2008interactive}. This protocol is generalized to the multi-party setting with a dishonest majority in a recent work by Dulek et al.~\cite{dulek2020secure} and extended to security with identifiable abort in~\cite{alonround20}. 
In a recent work of \cite{BY20}, the authors propose a garbling scheme for quantum circuits. 

A different approach inspired by delegated quantum computing~\cite{broadbent2009universal,dunjko2014composable} towards secure two-party computation, similar to a quantum analog of Yao's protocol from classical \mpc{}~\cite{yao1982protocols}, is studied in~\cite{kashefi2017garbled,kashefi2017quantum} and towards (composable) secure multi-party quantum computation in~\cite{kashefi2017multiparty,houshmand2018composable,mantri2019secure}. 

All previous works on secure two-party and multi-party quantum computation rely on quantum channels shared between parties and require more than one party to posses quantum devices. However, in our work, we propose a secure two-party quantum computation protocol over classical channels, removing the need for a quantum channel. Additionally, our constructions require only one party to have access to quantum resources.

 Unruh in~\cite{unruh2010universally}, building upon the works on quantum oblivious $\sf{OT}$~\cite{bennett1991practical} and \mpc{}~\cite{ishai2008founding}, proposed a UC-secure protocol for classical multi-party computations using only commitments and a quantum channel. More recently, secure (quantum) MPC based on quantum channels and one-way functions has been proposed  in~\cite{bartusek2020one,grilo2020oblivious}. Such a task is known to be impossible in a purely classical setting~\cite{gertner2000relationship}. It is not clear whether a quantum channel is necessary to achieve \mpc{} with quantum parties just relying on commitments. However, in this work, we show that a quantum channel is necessary to achieve general two-party quantum computation (\qtwopc{}) in the setting of malicious parties (and hence in the UC security model as well). Moreover, our modular construction of \qtwopc{} from \rsp{} establishes the latter primitive as a candidate for a universal primitive.  It is worth mentioning that due to this direct link, any further optimisation of the complexity of \rsp{} will also provide answers to the complexity of \qtwopc{}. This, in turn, will enhance our understanding about the resources required for important cryptographic primitives such as delegated quantum computing~\cite{fitzsimons2017private} and classical verification of quantum computing~\cite{gheorghiu2019computationally,mahadev2018classical}, in addition to $\sf{QMPC}$.

\section{Technical Overview}

\subsubsection*{Limitations of black-box secure \qtwopc{}.}

There are classical two-party functionalities that cannot be securely realized even if the parties have access to quantum resources~\cite{colbeck2007impossibility,salvail2009power,buhrman2012complete}. This automatically implies that quantum 2-PC cannot be achieved information-theoretically over classical channels. In this work, we show that there are quantum functionalities that cannot be securely realized (in a black-box way)  even against \emph{computationally-bounded} adversaries in the case where only classical channels are available. This features a striking trade-off between the resources needed to achieve the desired functionality and the level of security for quantum two-party computation (\qtwopc{}). We start by establishing a connection between black-box security for two-party computation and secure agree-and-prove protocols (which are a generalization of proofs/arguments of knowledge for both classical and quantum two-party functionalities). 

The high-level idea behind the connection of two-party computation and the existence of proof of knowledge (and agree-and-prove protocols) is the following. Saying that a two-party protocol realizes a functionality with black-box simulation-based security means that for every adversarial party there exists a simulator $\Sim$ that can extract the input from the malicious party. In the case of quantum adversaries (with quantum inputs), this implies that $\Sim$ must be able to extract Bob's input in quantum-polynomial time.
We can now consider a $\sf{2PC}$ protocol that realizes the \emph{prove} phase of the agree-and-prove protocol, where the prover (Bob) is proving the knowledge of a quantum secret.
In~\cite{vidick2020classical} the authors show that the existence of certain kinds of secure agree-and-prove protocols (in particular for quantum money scenario) implies cloning. This in turns implies that some functionalities cannot be securely realized in our model.

Next, we move towards our positive results, which are achieved by weakening either the security model or the class of functionality that we realize.

\subsection{Our Protocols}
The one-sided two-party classical-quantum setup consists of two parties --- Alice ($\alice$) and Bob ($\bob$) --- that have their private inputs and wish to perform a joint computation, but where only one of the parties receives the output. In the remainder of this work, we will use the following convention: a) Bob is a quantum party and b) Alice is a classical party, and is the only one receiving the output. One of the potential applications of such a setting is the following. Imagine a scenario where one of the parties, Bob, has a quantum database and the other party, Alice, queries the database in such a way that i) Bob would like to keep the entries of the database secure except the one which is queried, and ii) Alice would like to maintain the privacy of her requested query. In more detail, we consider the following setting.

\begin{enumerate}
    \item Alice has as input (a classical description of) a quantum function $f$, where $f$ has quantum input and classical output.
    \item Bob has as input a quantum state $\psi$.
    \item Alice obtains $f(\psi)$ and ``learns nothing'' more than this information. At the same time, Bob receives no output and ``learns nothing'' about $f$.
\end{enumerate}

In general, $\psi$ could be an arbitrary quantum state and since Alice is classical, $f$ denotes the quantum map that consists of a unitary $U$ followed by measurement in the computational basis. We provide a modular approach towards the construction of our protocol and prove its security in the one-sided simulation-based framework. At a high level, we first provide a protocol that achieves privacy against quantum Bob and (statistical) security against semi-honest Alice. Then, in the second construction, using cryptographic tools such as secure commitment schemes and zero-knowledge proof of knowledge, we uplift the security to full simulation-based security against malicious Alice.

To simplify the understanding of our protocol, we first present a construction for a simplified functionality called \onetwooqfe{} (Definition~\ref{def:one_sided_simulation_1_2_oqfe}), where Alice's input is a single bit $b$ and Bob's input is a single qubit state $\psi$. As mentioned before, this functionality ensures that Alice obtains $f_b(\psi)$ without ``revealing" anything about $f_{1 \oplus b}(\psi)$ to Alice as well as ``hiding" Alice's input $b$ from Bob. The notion of ``revealing" and ``hiding" is formalised using one-sided simulation framework (Section~\ref{sec:one-sided}). Aside from being instrumental in the construction and security proofs of the full \qtwopc{} protocol, this simple functionality of \onetwooqfe{} can be of independent interest.

The central idea behind the construction of this protocol (we refer to Protocol~\ref{protocol:quot_1_2_hbc_alice} for the formal description) is inspired by one-bit teleportation circuits where Bob's quantum input state is measured in one of two possible angles, which is dictated by Alice's (private) input. The output of such a (simple) two-party computation is obtained by Alice. Our protocol also relies on a remote state preparation (\rsp{}) --cryptographic primitive-- that enables an (honest) classical user to remotely prepare a quantum state on the (untrustworthy) quantum server, using only a classical communication channel.  \rsp{} plays an important role in our protocol to eliminate the need for quantum communication between Alice (user) and Bob (server). Although such a primitive cannot be information-theoretically secure, we use a computationally secure construction based on (post-quantum) cryptographic assumptions. While this means that the security of our protocol for \onetwooqfe{} holds only against semi-honest Alice, we can show that it holds in the statistical regime. On the other hand, for Bob, we show that the privacy of Alice's input is based on the hardness of the LWE problem. To sum up, we prove the following theorem. 

\begin{theorem}[Informal]
The exists a protocol that realizes the \onetwooqfe{} functionality which achieves privacy against malicious Bob and is statistically secure against semi-honest Alice.
\end{theorem}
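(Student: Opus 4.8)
The plan is to instantiate the one-bit-teleportation idea sketched above as the protocol $\onetwooqfeprotone$ of Protocol~\ref{protocol:quot_1_2_hbc_alice} and to establish its three properties --- correctness, privacy of $\alice$'s input against a malicious $\bob$, and statistical simulation security against a semi-honest $\alice$ --- essentially independently of one another. First I would recall the protocol precisely: on input $b$, $\alice$ runs the classical-client $\rsp$ sub-protocol with $\bob$ so that $\bob$ ends up holding a single auxiliary qubit $\ket{+_\theta}$, where the angle $\theta$ is a multiple of $\pi/4$ known only to $\alice$; $\bob$ then applies a $\mathsf{CZ}$ between his input register $\psi$ and the auxiliary register, measures the $\psi$-register in the basis $\{\ket{+_\delta},\ket{-_\delta}\}$ with $\delta=\phi_b+\theta+r\pi$ for a bit $r\leftarrow\zo$ chosen by $\alice$ and $\phi_0,\phi_1$ the angles encoding $f_0,f_1$, and finally measures the auxiliary register in the computational basis; from the two returned outcomes $m_1,m_2$ together with $r$ and $\theta$, $\alice$ reconstructs her output. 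Conditioned on the (honestly executed) $\rsp$ having correctly prepared $\ket{+_\theta}$, the measurement-based-computation identity shows that the bit reconstructed by $\alice$ is exactly $f_b(\psi)$, so correctness holds up to the negligible correctness error of the underlying $\rsp$.

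\emph{Privacy against malicious $\bob$.} Next I would show that for every malicious quantum polynomial-time $\bob^\star$, the view of $\bob^\star$ when $\alice$'s input is $0$ is computationally indistinguishable from its view when $\alice$'s input is $1$. The only component of $\bob^\star$'s view that depends on $b$ is the angle $\delta=\phi_b+\theta+r\pi$, since the $\rsp$ sub-protocol does not take $b$ as input. The argument is a reduction to the hiding property of $\rsp$ (which holds under LWE): conditioned on the $\rsp$ transcript, the angle $\theta$ stays computationally close to uniform over the multiples of $\pi/4$ from $\bob^\star$'s viewpoint, hence $\delta$ is computationally close to uniform and independent of $b$; any distinguisher for $b=0$ versus $b=1$ would therefore break the $\rsp$ hiding game. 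I expect the bulk of the work here to be in phrasing $\rsp$ security in precisely the ``the prepared angle stays hidden'' form required, and in propagating the $r\pi$ one-time pad correctly through the $\mathsf{CZ}$ so that $\delta$ really does look uniform to $\bob^\star$.

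\emph{Statistical security against semi-honest $\alice$.} Finally I would construct a simulator $\Sim_\alice$ that, given only $\alice$'s input $b$ and the prescribed output $o=f_b(\psi)$, reproduces $\alice$'s real view up to negligible statistical distance. Since $\alice$ is semi-honest, $\Sim_\alice$ samples $\alice$'s randomness honestly and produces the $\rsp$ transcript by internally running $\alice$'s honest client algorithm against the honest (efficient) $\bob$-server algorithm --- this is distributed exactly as a real $\rsp$ execution and additionally hands $\Sim_\alice$ the angle $\theta$ and the associated correction data. It remains to simulate $\bob$'s two outcomes: by the teleportation structure, against an honest $\bob$ the byproduct $m_1$ is exactly uniform and independent of $\psi$, while $m_2$ equals $o$ masked by a bit $c(m_1,r,\theta)$ that $\alice$ (hence $\Sim_\alice$) can compute; so $\Sim_\alice$ draws $m_1\leftarrow\zo$ uniformly and sets $m_2=o\oplus c(m_1,r,\theta)$. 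The resulting transcript coincides with the real one except on the negligible-probability event that the $\rsp$ fails to prepare the intended auxiliary state, so the two views are statistically close. The main obstacle here is the clean justification that, against an honest $\bob$, nothing in $\alice$'s view other than the output carries information about $\psi$ --- i.e., that $m_1$ is exactly uniform and $m_2$ is exactly the claimed deterministic shift of $o$ --- and then absorbing the $\rsp$ correctness error into the final statistical distance.
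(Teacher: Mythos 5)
Your overall decomposition is the same as the paper's: correctness from the teleportation/MBQC identity plus the correctness of the classical-client \rsp{}, privacy against a malicious $\bob$ by a reduction to the QFactory hiding property under LWE, and statistical one-sided simulation against a semi-honest $\alice$ by reproducing the \rsp{} transcript and outputting a uniform byproduct bit together with the output XORed with a mask $\alice$ can compute. The genuine problem is the gadget you ``recall'', which is not Protocol~\ref{protocol:quot_1_2_hbc_alice} and does not work. You use only two qubits: $\mathsf{CZ}$ between $\psi$ and the \rsp{} qubit $\ket{+_\theta}$, then measure the $\psi$-register at angle $\delta$ and the \rsp{} register in the computational basis. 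But the rotation $R_z(\theta)$ sits on the qubit that is measured computationally and is diagonal in that basis, so $\theta$ drops out of the statistics entirely: a one-line calculation shows $m_1\oplus m_2$ is distributed exactly as a measurement of $\psi$ in the $\{\ket{\pm_\delta}\}$ basis. Hence your protocol measures $\psi$ at the \emph{blinded} angle $\delta=\phi_b+\theta+r\pi$, not at $\phi_b$; whenever the $\pi/2$-component of $\theta$ is $1$ (probability $1/2$) no classical post-processing by $\alice$ can turn this into $f_b(\psi)$, so correctness fails, and with it your later claim that $m_2=o\oplus c(m_1,r,\theta)$. The paper instead uses a three-qubit chain (input, \rsp{} qubit, ancilla $\ket{+}$): the input is measured in the $X$ basis (outcome $m_0$), the qubit actually prepared at angle $\theta$ is the one measured at angle $\delta$ (outcome $m_1$), and the ancilla carries the output; only then does $\theta$ cancel inside the rotation, leaving $R_x(-\phi_b)\ket{\psi_{in}}$ up to Pauli byproducts that $\alice$ removes classically. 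Note also that in the paper $\delta=\phi_b+\theta_2\cdot\frac{\pi}{2}+r_A\pi$, i.e.\ only the basis bit $\theta_2$ enters $\delta$, and $\theta_1$ is used only in $\alice$'s final correction.

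Two further points on the security arguments. For privacy against a malicious $\bob$, the QFactory guarantee is $4$-states \emph{basis} blindness: only $\theta_2$ is hidden, and nothing is claimed about $\theta_1$, so your phrasing ``$\theta$ stays computationally close to uniform'' is both stronger than what the primitive provides and unnecessary; with the correct gadget, $\delta$ depends on $b$ only through $b+\theta_2 \bmod 2$ (times $\pi/2$) plus a $\pi$-component masked by the uniform $r_A$, and the reduction should target exactly the $\theta_2$-hardcore property, as the paper's reduction does. Your semi-honest-$\alice$ simulator is sound once the gadget is fixed: $m_0$ is exactly uniform because after the $\mathsf{CZ}$ the input register is dephased ($\braket{+_\theta}{-_\theta}=0$), and $\bar{s}_b$ equals the ideal output XOR $(\theta_1\oplus r_A\oplus m_0\cdot b)$, so drawing $m_0$ uniformly and setting $\bar{s}_b$ accordingly matches the real view; generating the \rsp{} transcript by internally running the honest server reproduces that part perfectly (the paper instead samples $y$ uniformly in $\mathrm{Im}\, f_k$ and $m$ uniformly and argues statistical closeness), at the cost of a QPT rather than PPT simulator, which Definition~\ref{def:one} permits since the closeness of the output distributions remains statistical.
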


To uplift the security from semi-honest Alice to a malicious Alice, we need to be able to validate the transcripts Alice is sending to Bob during the run of the protocol.
To do that, we let Alice and Bob engage in a coin-tossing protocol where only Alice receives the input. Then Alice proves that she has used the randomness generated from the coin-tossing to generate the messages of the \onetwooqfe{} protocol. We can therefore claim the following.

\begin{theorem}[Informal]
Assuming the hardness of LWE, there exists a protocol that realizes the \onetwooqfe{} functionality with one-sided simulation.
\end{theorem}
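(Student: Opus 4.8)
The plan is to start from the semi-honest protocol $\onetwooqfeprotone$ (which already gives statistical security against a semi-honest Alice and LWE-based privacy for Alice's input against a malicious Bob) and wrap it with a standard ``GMW-style'' compiler adapted to the quantum-communication-free setting. Concretely, I would have Alice and Bob first run a coin-tossing subprotocol in which \emph{only Alice learns the output} $r$ (this is realizable over a classical channel against a malicious quantum receiver from any post-quantum commitment plus a post-quantum zero-knowledge argument, both of which follow from LWE). Alice then runs the honest $\onetwooqfeprotone$ using $r$ as her randomness, and after each message she sends, she attaches a post-quantum zero-knowledge \emph{argument of knowledge} that (i) she knows an input bit $b$ and randomness $r'$ consistent with the committed coin-tossing value, and (ii) the message she just sent was produced by the honest next-message function of $\onetwooqfeprotone$ on $(b,r')$. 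Bob aborts if any argument fails. Since $\onetwooqfeprotone$ is non-interactive on Bob's side (Bob only prepares states, measures, and replies), a constant number of such checked rounds suffices.

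The key steps, in order, are: (1) specify the compiled protocol $\onetwooqfeprottwo$ as above and fix the underlying primitives — a post-quantum computationally-hiding, statistically-binding commitment, a post-quantum zero-knowledge argument of knowledge with a black-box quantum extractor, and a one-sided-output coin-tossing protocol — all instantiable from LWE; (2) prove security against a malicious (quantum) Bob by exhibiting $\Sim_\bob$: it runs the coin-tossing simulator to control $r$, simulates the zero-knowledge arguments, and otherwise plays honest-Alice messages, so that indistinguishability of the real and ideal executions reduces to the privacy of Alice's input in $\onetwooqfeprotone$ plus the zero-knowledge and coin-tossing simulations — this yields the indistinguishability-based (privacy) guarantee for Bob's side, exactly as in the one-sided notion; (3) prove full simulation security against a malicious (quantum) Alice by exhibiting $\Sim_\alice$: it uses the knowledge extractor of the coin-tossing commitment and of the zero-knowledge arguments of knowledge to pull out Alice's committed bit $b$, submits $b$ to the ideal functionality $\onetwooqfefunctmath$, receives $f_b(\psi)$, and finishes the protocol honestly; soundness of the arguments forces the transcript Alice actually sends to be the honest one for $b$, so the extracted $b$ is correct except with negligible probability, and the statistical security of $\onetwooqfeprotone$ against semi-honest Alice upgrades (via the enforced honest behaviour) to statistical indistinguishability of Alice's view.

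The main obstacle I expect is step (3), specifically making the quantum rewinding/extraction arguments compose correctly. Extracting $b$ from a malicious quantum Alice requires a black-box quantum extractor for the argument of knowledge (à la Unruh), and one must ensure that running this extractor does not disturb Bob's quantum register (which holds $\psi$ and the prepared graph-state qubits) in a way that is not reproducible by the ideal-world simulator; the cleanest route is to perform all extraction \emph{before} Bob touches his quantum input, i.e.\ to arrange the protocol so that the coin-tossing and the knowledge-extractable commitment to $b$ complete before any quantum operation on $\psi$, so that the extractor acts only on a classical-message transcript and auxiliary state. A secondary subtlety is arguing that ``Alice's messages equal the honest next-message function output'' is an NP statement with an efficiently checkable witness given the committed randomness — this is where we use that $\onetwooqfeprotone$'s message generation, including the RSP instructions, is classical polynomial-time computable from $(b,r')$. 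Once extraction is confined to the classical prefix and the honest-behaviour enforcement is in place, the remaining steps are routine hybrid arguments.
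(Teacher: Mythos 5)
Your compiler is sound and would prove the theorem, but it is heavier than what the paper actually does, and the difference is instructive. The paper's protocol \onetwooqfeprottwo{} also starts from the commit/XOR coin-tossing idea, but it attaches a zero-knowledge argument of knowledge \emph{only} to the key-generation message: Alice commits to $r_f^A$, Bob sends $r_f^B$, and Alice proves that the trapdoor-function key $k$ was produced by $Gen_{\mathcal{F}}$ on randomness $r_f^A\oplus r_f^B$ with $r_f^A$ the committed value. Crucially, no proof at all is attached to the message $\delta$, and Alice's bit $b$ is never committed or extracted from any argument of knowledge about $b$. Instead, the simulator extracts the coin-tossing randomness via the AoK extractor, reruns $Gen_{\mathcal{F}}$ to learn the hardcore bit $\theta_2$, and then \emph{decodes} an effective input directly from whatever $\delta^*$ the malicious Alice sends, as $b^*=(\delta^*-\theta_2 \bmod 4)\bmod 2$ (the free mask $r_A$ only shifts by $2$ mod $4$, so it does not affect $b^*$); it then queries the functionality and fakes Bob's replies $(m_{qf},m_0,\bar s_b)$ using the semi-honest-Alice simulator. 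So the paper buys a lighter protocol (one ZK AoK instance, no per-message ``honest next-message'' statements) by exploiting that every $\delta^*$ is consistent with \emph{some} input once $\theta_2$ is fixed; your approach, which proves honest computation of every Alice message and extracts $(b,r')$ from those proofs, is the more generic commit-and-prove route, and is essentially what the paper ends up doing anyway for the general \oqfe{} protocol of Section~\ref{sec:oqfe}, where the angles carry more structure and such per-angle proofs are genuinely needed. Two small corrections to your write-up: the resulting security against malicious Alice is computational, not statistical, unless the commitment is statistically binding and the ZK system is a proof (the paper notes exactly this strengthening as a remark); and your concern about extraction disturbing Bob's quantum register is moot, since in the ideal world the simulator never holds $\ket{\psi_{in}}$ at all---it only needs $s_b$ from \onetwooqfefunct{} and reproduces Bob's messages with the semi-honest simulator, so the extractor acts purely on Alice's classical interaction. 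Likewise, for Bob's side one-sided simulation requires only view indistinguishability across Alice's inputs, so no simulator for Bob is needed; your hybrid (simulate ZK, reduce to the privacy of \onetwooqfeprotone{}) is exactly the paper's Lemma on privacy against malicious Bob.
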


We then extend the previous construction to obtain a protocol that realizes any functionality \Fcq{} with one-sided simulation-based security.  More concretely, our construction for one-sided simulation secure \qtwopc{} is based on the measurement-based model of quantum computation, by combining the blind quantum computation protocol~\cite{broadbent2009universal} with \rsp{} as subroutines. 
Also, in this case, to enforce the honest behavior of Alice we use a combination of a coin-tossing protocol and zero-knowledge proofs. 

\begin{theorem}[Informal]
Assuming the hardness of LWE, there exists a protocol that realizes \Fcq{} with one-sided simulation.
\end{theorem}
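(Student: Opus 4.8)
The plan is to lift the protocol for \onetwooqfe{} to an arbitrary functionality in \Fcq{} by replacing its one-bit-teleportation gadget with a full measurement-based computation, and then re-running the coin-tossing plus zero-knowledge compilation essentially verbatim. Fix such a functionality; by definition it maps a classical input $x$ from $\alice$ and an arbitrary quantum input $\rho_y$ held by $\bob$ to $\alice$'s output, and we write it as a unitary $U$ (on the input register together with fresh ancillas) followed by a computational-basis measurement. Fix a universal brickwork graph state $G$ realizing $U$ in the measurement-based model, whose first layer is the input layer. The protocol then follows the blind quantum computation template of~\cite{broadbent2009universal}: $\bob$ places his quantum input $\rho_y$ on the input layer of $G$; $\alice$ instructs $\bob$, through the classical-client remote state preparation sub-protocol in place of a quantum channel, to prepare the auxiliary qubits of $G$ as $\ket{+_{\theta_i}}$ with the angles $\theta_i$ kept hidden from $\bob$; $\bob$ applies the entangling operations prescribed by $G$; and $\alice$ and $\bob$ run the interactive measurement phase, where $\alice$'s input $x$ and the structure of $U$ enter through the true angles $\phi_i$, transmitted only in the blinded form $\delta_i = \phi_i + \theta_i + r_i\pi$ for one-time-pad bits $r_i$. $\bob$ reports all measurement outcomes, and after $\alice$'s de-padding the output-layer outcomes equal $f(x,\rho_y)$, which only $\alice$ obtains. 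Correctness is inherited from that of blind quantum computation and of the remote state preparation sub-protocol; and against a semi-honest $\alice$, her view (the preparation transcripts plus the blinded outcomes $b_i$) is simulable from $f(x,\rho_y)$ alone, since the non-output outcomes are uniformly random pads and the output-layer outcomes are determined by $f(x,\rho_y)$ and $\alice$'s own $r_i$'s, so nothing about $\rho_y$ beyond $f(x,\rho_y)$ leaks.

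For privacy against a malicious $\bob$ I would give a hybrid argument combining three ingredients: blindness of the blind quantum computation protocol, which makes the blinded angles $\delta_i$ (and hence the entire measurement phase) independent of $x$ once $(\theta_i,r_i)$ are fixed; security of the classical-client remote state preparation sub-protocol under LWE, which hides the $\theta_i$'s from $\bob$ in the relevant sense; and the hiding of the commitments and coin-tossing messages together with the zero-knowledge property of $\alice$'s proofs, so that the compilation layer carries no information about $x$. Chaining the hybrids shows that $\bob$'s view when $\alice$'s input is $x_0$ is computationally indistinguishable from his view when it is $x_1$.

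The bulk of the work, and the step I expect to be the main obstacle, is the malicious-$\alice$ case, where we must exhibit a black-box simulator $\Sim_{\alice}$ that extracts $\alice$'s input and reproduces the ideal execution. As in the \onetwooqfe{} compilation, $\alice$ and $\bob$ first run a coin-tossing-into-the-well whose outcome --- the randomness $R$ consisting of the angles $\theta_i$, the pads $r_i$, and the client-side coins of the preparation sub-protocol --- is delivered only to $\alice$ and is committed during the toss; then every message $\alice$ sends in the preparation and measurement phases must be accompanied by a \zkpok{} that it was computed from the committed $R$ and some fixed input $x$. $\Sim_{\alice}$ runs the knowledge extractor to recover $(x,R)$, forwards $x$ to the ideal \Fcq{} functionality (so that the ideal $\alice$ receives $f(x,\rho_y)$), and, now holding $R$, answers the internal $\alice$ with simulated $\bob$-messages: uniformly random outcomes on the non-output qubits, output-layer outcomes consistent with $f(x,\rho_y)$ and $R$, and a simulated transcript of $\bob$'s side of the preparation sub-protocol computed from $R$. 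Indistinguishability of the simulated and real interactions then follows from the zero-knowledge and soundness of the proofs, the security of the coin toss, and the correctness and blindness of the blind quantum computation protocol, all reducible to LWE. The two delicate points I anticipate are: (a) that the knowledge extractor runs in quantum polynomial time against a possibly quantum malicious $\alice$ --- forcing the use of a post-quantum \zkpok{} with an extractor compatible with quantum rewinding (or a straight-line one), and a check that extraction does not disturb the joint state shared with $\bob$; and (b) that the coin-tossing and \zkpok{} layer composes cleanly with the remote state preparation sub-protocol inside the stand-alone one-sided simulation model, i.e. that the various simulators and security reductions can be interleaved without interfering with one another.
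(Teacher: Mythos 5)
Your proposal is correct and follows essentially the same route as the paper: Protocol~\ref{protocol:oqfe_protocol} is exactly the \ubqc{}-over-brickwork-state generalization of the \onetwooqfe{} gadget with QFactory replacing the quantum channel, a coin-tossing-plus-commitment layer with zero-knowledge (argument-of-knowledge) proofs accompanying Alice's keys and measurement angles, privacy against malicious Bob reduced to the blindness of UBQC composed with \rsp{} (under LWE) together with hiding and zero knowledge, and a malicious-Alice simulator that extracts her committed input via the PoK extractor, queries \Fcq{}, and then plays the semi-honest simulator for the measurement phase. Your anticipated delicate points (post-quantum extraction and composition of the coin-toss/ZK layer with \rsp{}) are exactly what the paper handles by instantiating with the post-quantum \zkpok{} of~\cite{ananth2020concurrent} and by reusing the extraction argument of Theorem~\ref{lemma:one_sided_simulation}.
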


It is easy to see that any protocol that realizes \Fcq{} can also be used to realize \Fcc{}.\footnote{We recall that this is the class of quantum functionality that accepts only inputs that have efficient classical descriptions.} Finally, we would like to remark that the cryptographic complexity of our \qtwopc{} proposal can be reduced to the cryptographic assumptions required to achieve classical-client \rsp. More specifically, our construction can be instantiated with injective homomorphic trapdoor quantum one-way functions and any zero-knowledge proof of knowledge.

\subsubsection*{From One-Sided Simulation to Full Simulation.} We propose a generic compiler that turns any protocol $\pi$ that realizes the class of functionalities \Fcc{} with one-sided simulation into a protocol  $\pi'$ that realizes the same class of functionalities with full simulation-based security. 
We recall that $\pi$ offers simulation-based security against the malicious Alice, and privacy against the malicious Bob. Hence, we just need to 
employ a mechanism that forces Bob to behave honestly and that at the same time allows extracting the input used by malicious Bob when running $\pi$\footnote{We recall that we need this extraction mechanism because we need to construct a simulator that in the ideal world acts on the behalf of Bob, hence it needs to be able to extract Bob's input.}.
One trivial solution would be to force Bob to provide, for each message, a classical zero-knowledge proof of quantum knowledge, that shows that Bob is executing correctly the protocol messages and that he knows what is the input and the randomness used in the computation. In the security proof against malicious Bob, we can then rely on the extractor of the classical proof of quantum knowledge protocol to retrieve the input of the malicious Bob and query the ideal functionality. Unfortunately, we are not aware of any such classical proof of quantum knowledge. Indeed all existing protocols (including ours) work only for a specific class of \QMA\ relations.

 Therefore, we follow a different approach. We require Bob to commit to its input $\phi_c$ and provide a classical proof of knowledge about the knowledge of the committed value (note that for this purpose a classical post-quantum secure proof of knowledge for NP relations suffices). After that the commitment and the proof have been computed, Alice and Bob run $\pi$, and when the last message of $\pi$ has been computed Bob provides a zero-knowledge proof that proves that the input used to compute the messages of $\pi$ is the same as the input committed in the very beginning of the protocol. Note that in this case we need a post-quantum secure zero-knowledge protocol for \QMA, but we require no properties of proof (argument) of quantum knowledge. We also observe that in the above protocol only Alice gets the output. However, this is without loss of generality as it is always possible to turn such a protocol into a protocol where also Bob gets the output (under the condition that the output is classical). Let us assume that Alice and Bob want to compute a function $f$ that belongs to the class \Fcc{} which takes two inputs $(\psi_c,x)$ and returns two outputs $y_B,y_A$. We now consider the function $\tilde f$ which takes two inputs $(k_1,k_2,\psi_c,x)$, where $k_1$ is the key of a one-time authentication scheme, and $k_2$ is the key for a one-time encryption scheme\footnote{We note that both this primitives can be instantiated information-theoretically.}. $\tilde f$ internally runs $f$, and outputs $(\mathsf{Enc_B},y_A)$, where $y_A$ represents the second output of $f$ and $\mathsf{Enc_B}$ represents the encryption computed using the key $k_2$ of the value $y_B$ authenticated with the key $k_1$.

Alice and Bob now can run the protocol $\pi'$ for $\tilde f$, and Alice, upon receiving $(\mathsf{Enc_B},y_A)$ sends $\mathsf{Enc_B}$ to Bob who decrypts it using $k_2$ and checks if the values are correctly authenticated with respect to $k_1$.  Note that this prevents Alice from seeing or tampering with 
the output dedicated to Bob.
Similar techniques have been used in many previous works~\cite{DBLP:conf/eurocrypt/AsharovJLTVW12,DBLP:journals/joc/LindellP09,DBLP:conf/crypto/IshaiKKP15}. This allows us to claim the next theorem.

\begin{theorem}[Informal]
Assuming the hardness of LWE, there exists a protocol that realizes \Fcc{} with simulation-based security.
\end{theorem}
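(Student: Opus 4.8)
The plan is to establish the theorem as a direct corollary of the one-sided-simulation result for $\Fcq$ (and hence for $\Fcc$) combined with the generic compiler described in the technical overview. First I would invoke the earlier theorem giving a protocol $\pi$ that realizes $\Fcc$ with one-sided simulation under LWE; recall this means $\pi$ is simulation-secure against a malicious Alice and merely input-private against a malicious Bob. The task is then to upgrade Bob-side privacy to full simulation security, i.e.\ to construct an extractor/simulator that pulls Bob's committed quantum input out and forwards it to the ideal functionality for $\Fcc$.

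The construction proceeds in the order sketched above. (1) Have Bob commit to a classical description of his input $\phi_c$ using a post-quantum (extractable, e.g.\ via a proof of knowledge) commitment, and attach a classical post-quantum zero-knowledge proof of knowledge for the NP statement ``I know the committed value'' --- here only a classical PoK for NP is needed since, for $\Fcc$, the quantum input has an efficient classical representation known to Bob. (2) Run $\pi$ on inputs $(\phi_c,x)$. (3) After the last message of $\pi$, have Bob give a post-quantum zero-knowledge proof (not a PoK, no quantum-knowledge extraction required) that the transcript of $\pi$ was generated honestly from the committed $\phi_c$ and some randomness. (4) To restore output delivery to Bob without letting Alice tamper, replace $f$ by $\tilde f$ that additionally takes a one-time authentication key $k_1$ and a one-time-pad key $k_2$, outputs $(\mathsf{Enc_B},y_A)$ with $\mathsf{Enc_B}$ the one-time-pad encryption under $k_2$ of $y_B$ authenticated under $k_1$; run the previous steps on $\tilde f$, and have Alice forward $\mathsf{Enc_B}$ to Bob, who decrypts and checks authentication, aborting on failure. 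Correctness is immediate from correctness of $\pi$ and of the information-theoretic one-time primitives.

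For security I would argue the two corruption cases separately in the UC/stand-alone simulation framework. Against malicious Alice: the simulator runs the $\pi$-simulator (which exists by one-sided simulation) to extract Alice's input $x$, queries the ideal functionality, and simulates Bob's commitment, PoK, and final ZK proof using the ZK simulators and the hiding of the commitment; indistinguishability follows by a hybrid argument replacing real proofs and commitments by simulated ones and invoking the one-sided-simulation guarantee of $\pi$. Against malicious Bob: the simulator first runs the PoK extractor on Bob's commitment+proof to recover $\phi_c$, then uses soundness of the final ZK proof to conclude that, except with negligible probability, the $\pi$-transcript Bob produced is consistent with that very $\phi_c$; it therefore feeds $\phi_c$ (suitably prepared as the quantum state, together with freshly sampled $k_1,k_2$) to the ideal functionality, obtains $(\mathsf{Enc_B},y_A)$, and simulates Alice's side of $\pi$ to be consistent with output $y_A$. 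One-time-pad perfectly hides $y_B$ from Alice's view and the one-time authentication detects any tampering with $\mathsf{Enc_B}$ except with negligible probability, so Bob's output in the simulation matches the real execution. Chaining the hybrids and collecting the negligible terms from LWE (via $\pi$, the ZK/PoK, and the commitment) yields the claimed security.

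The main obstacle --- and the step deserving the most care --- is the malicious-Bob analysis: one must ensure that the classical PoK extractor and the soundness of the post-quantum ZK proof compose correctly with the (black-box, rewinding-free in the quantum sense) extraction so that the quantum state actually handed to the ideal functionality is the right one. In particular, since the extracted object is a \emph{classical description} $\phi_c$, we rely crucially on the $\Fcc$ restriction that Bob knows this description; the argument does not go through for $\Fcq$, which is exactly why this theorem is stated for $\Fcc$ only. A secondary subtlety is making sure the coin-tossing / ZK machinery used to tame Alice in the underlying one-sided protocol still composes with the new commit-and-prove wrapper around Bob; this is handled by keeping the two mechanisms modular and appealing to sequential composition of the simulators.
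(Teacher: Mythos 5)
Your proposal follows essentially the same route as the paper: commit to the classical description of Bob's input, give a classical post-quantum ZKPoK of the committed value, run the one-sided-simulation protocol, and then have Bob give a post-quantum zero-knowledge proof (for a \QMA{} statement, instantiable via known constructions, with no quantum-knowledge extraction needed) that the transcript is consistent with the committed input, with the malicious-Bob simulator built from the PoK extractor plus ZK soundness and the malicious-Alice case handled by the underlying one-sided simulator and sequential composition; the $\tilde f$ one-time-authentication/encryption wrapper for delivering Bob's output also matches the paper's treatment. The argument is correct and matches the paper's compiler and its proof in structure and in the reductions invoked (binding, PoK extraction, ZK soundness, hiding, LWE).
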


\paragraph{\qtwopc{} from quantum fully-homomorphic encryption ($\sf{QFHE}$).}
We can obtain the above results starting from circuit private $\sf{QFHE}$. In particular, we can construct the following one-sided-simulation protocol that would make use of a classical $\sf{QFHE}$ scheme (for quantum computations). Alice and Bob first run a coin-tossing protocol. Then using the randomness resulting from this protocol, Alice generates the public key $pk$ of the $\sf{QFHE}$ and sends $pk$ to Bob together with proof that the public-key is generated accordingly the randomness obtained from the coin-tossing procedure. Bob then runs the function evaluation using $pk$, his input, and the function that needs to be computed and sends back the output to Alice.
Alice, upon receiving the encrypted message, decrypts it and obtains the outcome of the computation. We can argue that this protocol is a one-sided-simulation-based secure, Hence, we can use our generic compilers to achieve full simulation security.

\subsubsection*{\qtwopc{} for \Fqzk{}.}
For our last result, we do not put restrictions on whether the input of Bob can or cannot be represented classically, but we focus on the zero-knowledge functionality for \QMA{}, \Fqzk{}. Our approach in this case completely departs from what we have done so far. We first observe that to realize \Fqzk{} in our model, we \emph{only} need to construct a zero-knowledge proof of quantum knowledge for \QMA{} for classical verifiers.

We use as the main building block a classical proof of quantum knowledge $\Pi^{\mathsf{CPoQK}}$  (which admits a classical verifier). We recall that the messages of a \poqk{} protocol could leak information about the witness. Therefore, a first approach to solve this problem would be to let the prover and the verifier run $\Pi^{\mathsf{CPoQK}}$, where the messages of the prover are encrypted. At the end of the execution of $\Pi^{\mathsf{CPoQK}}$, the prover, using a zero-knowledge protocol $\Pi^\mathsf{ZK}$ proves an \NP{} statement of the following ``\emph{The verifier of $\Pi^{\mathsf{CPoQK}}$ would have accepted the transcript that consists of the encrypted messages}''. This approach unfortunately only yields a zero-knowledge proof for \QMA, since it is unclear how to argue that the overall protocol retains the \poqk{} property. To solve this issue, we add the following additional step at the beginning of the protocol. The prover commits to a secret key $\sk$ for an encryption scheme thus obtaining $\mathsf{com}$. Then he uses a zero-knowledge proof of knowledge protocol $\Pi^\mathsf{ZKPoK}$ to prove the following NP-statement ``\emph{I know the message committed in $\mathsf{com}$}''. The prover and the verifier then run $\Pi^{\mathsf{CPoQK}}$ as before (where the prover encrypts his messages), with the difference that we slightly modify the statement proved using $\Pi^\mathsf{ZK}$ as follows ``\emph{The verifier of $\Pi^{\mathsf{CPoQK}}$ would have accepted the transcript that consists of the encrypted messages, moreover the messages are encrypted with a secret key committed in $\mathsf{com}$}''.

We can prove that this protocol is zero-knowledge relying on the zero-knowledge property of $\Pi^\mathsf{ZKPoK}$ and $\Pi^\mathsf{ZK}$, on the hiding of the commitment scheme and the security of the encryption scheme. To prove that our protocol is a proof of quantum knowledge, we need to exhibit an extractor. Our extractor first extracts the secret key $\sk$ from the proof computed using $\Pi^\mathsf{ZKPoK}$, and then it can run the extractor of the protocol $\Pi^{\mathsf{CPoQK}}$ which exists by definition.

One limitation of our compiler is that it requires $\Pi^{\mathsf{CPoQK}}$ to have the property of \mesind. These properties require the verifier to be able to compute his messages without looking at the messages received from the prover. We note that this is a property enjoyed by all the public coin protocols and by existing protocols like the one proposed in~\cite{vidick2020classical}.

\subsection{Organization of paper}
In Section~\ref{sec:prelim}, we define the notations and relevant classical and quantum cryptographic primitives. In Section~\ref{sec:ideal_oqfe}, we present the definition of quantum two-party computation over a classical channel, ideal functionalities, and different simulation-based notions of security. In Section~\ref{app:imposs} we present the impossibility proof of black-box secure \qtwopc. Then, we present two concrete protocols for 1-out-of-2 oblivious quantum function evaluation and analyse the security against semi-honest Alice and malicious parties (one-sided simulation-based model) in Section~\ref{sec:1_2_oqfe_semihonest}. An extension from 1-out-of-2 oblivious quantum function evaluation (\onetwooqfe{}) to (general) secure two-party computation protocol along with its (one-sided simulation) security is presented in Section~\ref{sec:oqfe}.
In Section~\ref{sec:full_simulation_two_pc}, we show how to uplift our one-sided simulation secure \qtwopc{} protocol to a secure black-box \qtwopc{} assuming that Bob has a classical description of his input. Finally, in Section~\ref{sec:compiler_zkpoqk} we present a general compiler for constructing post-quantum zero-knowledge classical proof of quantum knowledge from simpler primitives.

In Appendix~\ref{app:ct} we describe more definitions related to complexity classes and cryptographic primitives concerning our work. Appendix~\ref{app_B_qfactory} describes the remote state preparation (\rsp{}) protocol used for our protocols, together with the security conditions it satisfies. In Appendix~\ref{app:onesided} we describe the \onetwooqfe{} protocol secure against malicious Alice.
Appendix~\ref{App_C_all_proofs} presents all the deferred proofs from Sections~\ref{sec:1_2_oqfe_semihonest}-\ref{sec:compiler_zkpoqk}.

\newcommand{\reg}[1]{{\mathsf{#1}}}
\section{Preliminaries \label{sec:prelim}}

\subsection{Notations}

In this paper when we talk about distributions being indistinguishable for any probabilistic polynomial-time (PPT) adversary we will use the symbol $\approx_c$, if they are indistinguishable for any quantum polynomial-time (QPT) adversary, we will use $\approx_q$ and if they are indistinguishable for an unbounded adversary we will use $\approx_u$. Additionally, when testing for equality we will use directly the symbol ``=''.
For a protocol $\mathcal{P} = (P_1, P_2)$ with two interacting algorithms $P_1$ and $P_2$ denoting the two participating parties, let $(r_1, r_2) \gets \left\langle P_1, P_2 \right\rangle$ denote the execution of the two algorithms, exchanging messages, with $P_1$'s output $r_1$ and $P_2$'s output $r_2$.
Let $\cA$ and $\cB$ be two Hilbert spaces. The set $\cL(\cA,\cB)$ is the set of all linear maps from $\cA$ to $\cB$. The set $\cL(\cA) = \cL (\cA,\cA)$ is the set of all linear maps on $\cA$ and the mapping $\varphi: \cL (\cA) \mapsto \cL (\cB)$ is also called super-operator. If $\varphi$ is completely positive and preserves the trace then such a super-operator is also known as quantum operation or CPTP map. We denote identity operator as $\mathbb{I}$ and $\cA \otimes \cB$ denotes the space of two such quantum registers. 
We will also denote by $M_Z$ a measurement of a quantum state in the computational basis. We will use the notation $R_x(-\alpha)$ to refer to the rotation of a single qubit around the $x$-axis with the angle $\alpha$, and $R_z(-\delta)$ to refer the rotation of a single qubit around z-axis with the angle $\delta$. For any function $f : A \rightarrow B$, we define the controlled-unitary $U_f$, as acting in the following way:
$U_f\ket{x}\ket{y} = \ket{x}\ket{y \oplus f(x)}$ for any $x \in A$ and $y \in B$,
where we name the first register $\ket{x}$ control and the second register $\ket{y}$ target. For more details on the quantum background, we refer the readers to~\cite{nielsen2002quantum}. 

In the rest of the section, we provide definition for black-box two-party quantum computation along with the definitions of classical and quantum primitives used in this work. Some parts of this section are taken from~\cite{unruh2012quantum,broadbent2019zero,coladangelo2020non}. The remaining definitions related to interactive quantum machines and (quantum) oracle are presented in Appendix~\ref{app:ct}.

\subsection{Classical and Quantum Cryptographic Primitives}
Polynomial time relation $\Rel$ is a subset of $\{0, 1\}^* \times \{0, 1\}^*$ such that membership of $(x, w)$ in $\Rel$ can be decided in time polynomial $|x|$. For a polynomial-time relation $\Rel$, we define the $\NP$ language $L_{\Rel} := \{x \, | \, \exists w \text{ such that } (x, w) \in Rel\}$. Next, we define proof of knowledge in both the classical and quantum settings. 

A \emph{Proof of Knowledge} ($\sf{PoK}$) is an interactive proof system for some relation $R$ such that if the verifier accepts a proof with respect to some input $x$ with high enough probability, then she is ``convinced'' that the prover ``knows'' some witness $w$ such that $(x,w) \in R$.  This notion is formalized by requiring the existence of an efficient \emph{extractor}~$\Extract$, that can return a witness for $x$ when given oracle access to the prover  
 (including ability to rewind its actions, in the classical case).

\begin{definition}[Post-Quantum Proof of Knowledge (from \cite{ananth2020concurrent})]
We say that an interactive proof system $(P, V)$ for a relation $R$ satisfies $(\epsilon, \delta)$-proof of knowledge property if the following holds: suppose there exists a malicious prover $P^*$ such that for every $x$ and quantum
state $\rho$ we have that:
$$ Pr[ (\tilde{\rho}, \text{d}) \leftarrow \langle P^*(x, \rho), V(x) \rangle \land \text{ d = accept}] = \epsilon $$
(where $\tilde{\rho}$ represents the output of $P^*$ and $d$ is the output of $V$), then there exists a quantum polynomial-time extractor $E$, such that:
$$Pr[(\rho', w) \leftarrow E(x, \rho)] = \delta $$
\end{definition}

\begin{definition}[Simulatability] We say a proof of knowledge system $(P, V)$ is simulatable if the following holds. Denote $\tilde{\rho}$ the state output by a malicious $P^*$ at the end of a protocol accepted by $V$ and $\rho'$ the output of the extractor at the end of a protocol accepted by $V$. Then $(P, V)$ is simulatable if $\rho \approx_q \rho'$.
\end{definition}

\begin{definition}[Black-Box Access to a Quantum Machine $M^*$, \cite{unruh2012quantum}] We say that a quantum algorithm $A$ has (rewinding) black-box access to a machine $M^*$ if by denoting with $U$ the unitary describing one activation of $M^*$, then $A$ can invoke $U$ (this corresponds to running $U$), or its inverse ${U}^\dagger$ (this corresponds to rewinding $M^*$ by one activation), and $A$ can also read/write a shared register $N$ used for exchanging messages with $M^*$ \end{definition}

\begin{definition}[Classical Proof of Quantum Knowledge ] $(P, V)$, where $P$ is $\qpt$ interactive machine and $V$ is a $\ppt$ interactive machine is a classical proof of quantum knowledge with knowledge error $k(\cdot)$ for \QMA{} relation $Rel$ if the following are satisfied:
\begin{itemize}
    \item \textit{Correctness}: For any $(x, \rho) \in Rel$, we have: $Pr[\langle P(\rho), V \rangle (x) = 1] = 1 - negl(|x|)$
    \item \textit{Soundness}: There exists QPT algorithm $E$, called extractor such that: for any prover $P^*$ and any $x$ accepted by $V$ when interacting with $P^*$ with probability $\epsilon(x) > k(x)$, we have:
    $$ Pr[(x, \rho) \in Rel : \rho \leftarrow E^{P^*}(x)) ] \geq 1 - \delta(\epsilon)$$
where $\delta$ is such that $\delta(\epsilon) < 1$ for all $\epsilon > k$.
\end{itemize}
\label{def:cpoqk}
\end{definition}

\subsection{Useful Sub-Protocols}

 \begin{definition}[Remote State Preparation] A resource $\cS$ is a Remote State Preparation (\rsp) resource if it outputs on the right interface (Bob's interface) a quantum state $\rho$ and the left interface (Alice's interface) a classical description of a state $\rho'$ such that the states $\rho$ and $\rho'$ are close in trace distance.
  
The security guarantee is that Bob should not be able to learn anything more about the classical description of $\rho'$ (known to Alice) than what he can learn from a single copy of the state $\rho$.
\end{definition}

For a more general definition of remote state preparation, we refer to \cite{badertscher2020security}.
\textbf{Instantiation.} A remote state preparation primitive can be instantiated using the QFactory protocol \cite{cojocaru2019qfactory}. This protocol is described in Appendix~\ref{app_B_qfactory}. 

In our constructions we employ QFactory as an \rsp{} primitive, which ensures that Bob produces one of the BB84 states 
$\left\{H^{B_1}X^{B_2}\ket{0} \, | \, B_1, B_2 \in \{0, 1\}\right\}$ on his side. At the same time, Alice learns the complete classical description of this state, consisting of the 2 bits - $B_1$ and $B_2$. Roughly speaking, the security of QFactory guarantees that Bob learns nothing about $B_1$.

\begin{definition}[Delegated Quantum Computation (Informal)] A resource $\cS$ is a Delegated Quantum Computing resource if it takes on the left interface (Alice's interface) a description of a computation $\Psi$ such that $\Psi$ encodes both the input state $\psi$ and the unitary $U$. It outputs on the left interface (Alice's interface) the output $U(\psi)$  and on the right interface outputs the circuit dimensions. 
\end{definition}

\textbf{Instantiation.} A delegated quantum computing over quantum channel and classical channel can be instantiated using the protocol \cite{broadbent2009universal} and \cite{badertscher2020security}, respectively.  

\paragraph{Measurement-Based Quantum Computing model ($\sf{MBQC}$) and Universal Blind Quantum Computing protocol (\ubqc{}).} 
Some of our protocols rely upon the principles of the measurement-based model for quantum computing. This model is known to be equivalent to the quantum circuit model (up to polynomial overhead in resources).
For more details on measurement-based quantum computation, we refer the reader to~\cite{raussendorf2001one,nielsen2006cluster} and to this excellent tutorial~\cite{browne2006one}. This model of quantum computing is particularly useful in a well-known delegated quantum computing protocol, also known as the universal blind quantum computing (\ubqc{}) protocol. This was first proposed in \cite{broadbent2009universal}, where a computationally-weak user (Alice) delegates an arbitrary quantum computation to a quantum server (Bob), in such a way that her input, the quantum computation, and the output of the computation are information-theoretically hidden from Bob. In \ubqc{}, Alice prepares single qubits of the form $\Ket{+_{\theta}} := \frac{1}{\sqrt{2}}(\ket{0} + e^{i \theta}\ket{1})$, where $\theta \in \{0, \frac{\pi}{4}, \ldots, \frac{7\pi}{4}\}$ and send these quantum states to Bob at the beginning of the protocol, the rest of the communication between the two parties being classical.

\section{Our Model and Impossibility of Black-Box \texorpdfstring{\qtwopc}{}} \label{sec:ideal_oqfe}

\begin{definition}[\qtwopc{} over classical channel]
\label{def:q2pc}
  An $m$-round (classical Alice, quantum Bob) quantum protocol $\cP = (\alice, \bob, m)$ over classical communication channel consists of: 
\begin{enumerate}
    \item input spaces $S_{0}$ and $S'_{0}$ consisting of (classical) input $x_A$ and (quantum) input $\psi_B$ for parties $\alice$ and $\bob$, respectively.
    \item memory spaces $\textsf{S} \coloneqq (S_{1}, \ldots, S_{m})$ for $\alice$ and  $\textsf{S}' \coloneqq (S'_{1}, \ldots, S'_{m})$ for $\bob$ and (classical) communication spaces $\textsf{N} \coloneqq (N_{1}, \ldots, N_{m})$ and $\textsf{N}' \coloneqq (N'_{1}, \ldots, N'_{m})$. 
    \item an $m$-tuple of stochastic operations $(\opA_1, \ldots, \opA_m)$ for $\alice$, where $\opA_1: \cL(S_0) \mapsto  \cL(S_1 \otimes N_1)$, and $\opA_i: \cL(S_{i-1} \otimes N'_{i-1} ) \mapsto  \cL(S_i \otimes N_i)$, ($2 \leq i \leq m$).
    \item  an $m$-tuple of quantum operations $(\opB_1, \ldots, \opB_m)$ for $\bob$, where $\opB_i: \cL(S'_{i-1} \otimes N_{i} ) \mapsto  \cL(S'_i \otimes N'_i)$, ($1 \leq i \leq m-1$) and $\opB_m: \cL(S'_{m-1} \otimes N_m) \mapsto  \cL(S'_m)$.
\end{enumerate}
\end{definition}


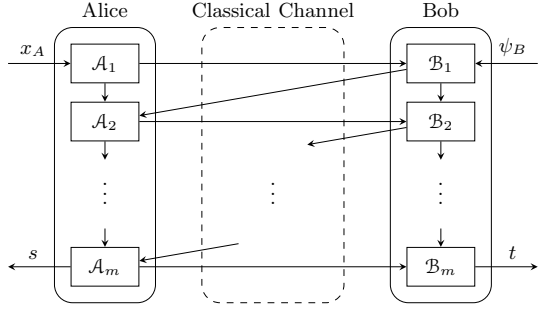
\begin{figure}[htb]
 \begin{center}
 \resizebox{.6\textwidth}{!}{
 \begin{tikzpicture}[opnode/.style={minimum width=1.05cm,minimum height=.6cm}]
 \small
 \def\t{4.1}
 \def\u{2.6}
 \def\v{.9}
 \node[draw,opnode] (a1) at (-\u,0) {$\opA_1$};
 \node[draw,opnode] (a2) at (-\u,-\v) {$\opA_2$};
 \node[opnode] (a3) at (-\u,-2*\v) {};
 \node[opnode] (a35) at (-\u,-2.12*\v) {$\vdots$};
 \node[opnode] (a4) at (-\u,-2.5*\v) {};
 \node[draw,opnode] (a5) at (-\u,-3.5*\v) {$\opA_m$};
 \node[draw,inner sep=.25cm,fit=(a1)(a5),rounded corners=8] (a) {};
 \node[above=1pt] at (a.north) {Alice};
 \node (alice1) at (-\t,0) {};
 \node (alice2) at (-\t,-3.5*\v) {};
 \node[draw,opnode] (b1) at (\u,0) {$\opB_1$};
 \node[draw,opnode] (b2) at (\u,-\v) {$\opB_2$};
 \node[opnode] (b3) at (\u,-2*\v) {};
 \node[opnode] (b35) at (\u,-2.12*\v) {$\vdots$};
 \node[opnode] (b4) at (\u,-2.5*\v) {};
 \node[draw,opnode] (b5) at (\u,-3.5*\v) {$\opB_{m}$};
 \node[draw,inner sep=.25cm,fit=(b1)(b5), rounded corners=8] (b) {};
 \node[above=1pt] at (b.north) {Bob};
 \node (bob1) at (\t,0) {};
 \node (bob2) at (\t,-3.5*\v) {};
 \node[opnode] (c1) at (0,0) {};
 \node[opnode] (c2) at (0,-1.5*\v) {};
 \node[opnode] (c35) at (0,-2.12*\v) {$\vdots$};
 \node[opnode] (c4) at (0,-3*\v) {};
 \node[opnode] (c5) at (0,-3.5*\v) {};
 \node[draw,minimum width=2.2cm,inner sep=.25cm,fit=(c1)(c5), rounded corners=8, dashed] (c) {};
 \node[above = 1pt] at (c.north) {Classical Channel};
 \draw[sArrow] (alice1.center) to node[auto,pos=.4] {$x_A$} (a1);
 \draw[sArrow] (bob1.center) to node[auto,pos=.4,swap] {$\psi_B$} (b1);
 \draw[sArrow] (a1) to (b1);
 \draw[sArrow] (a1) to (a2);
 \draw[sArrow] (b1) to (a2);
 \draw[sArrow] (b1) to (b2);
 \draw[sArrow] (a2) to (b2);
 \draw[sArrow] (a2) to (a3);
 \draw[sArrow] (b2) to (c2);
 \draw[sArrow] (b2) to (b3);
 \draw[sArrow] (c4) to (a5);
 \draw[sArrow] (a4) to (a5);
 \draw[sArrow] (b4) to (b5);
 \draw[sArrow] (a5) to (b5);
 \draw[sArrow] (a5) to node[auto,swap,pos=.6] {$s$} (alice2.center);
  \draw[sArrow] (b5) to node[auto,pos=.6] {$t$} (bob2.center);
 \end{tikzpicture}}
 \end{center}
 \caption[Two-party quantum protocol with classical Alice and quantum Bob~\cite{dunjko2014composable}]{\label{fig:q2pc}Two-party quantum protocol with classical Alice and quantum Bob.  The left and the right box represents the parties (classical) Alice and (quantum) Bob, and the dashed box depicts the classical communication channel between them. The input of Alice is classical while Bob's input could be quantum or classical. Both the parties obtain classical output. }
 \end{figure}

Let $\cF$ be a joint quantum function $\cF: \cL(\cA_{in}, \cB_{in}) \mapsto \cL(\cA_{out})$ that (classical) Alice and (quantum) Bob would like jointly compute on their private input. Without loss of generality, we assume that only Alice obtains the (classical) output. We define two sets of quantum functionalities. Let  $\cF_{CC}$ be a joint quantum function with input $x$ and $y$ from Alice and Bob, respectively, where $x$ and $y$ are the (efficient) classical description of their quantum input. Similarly, $\cF_{CQ}$ be a joint quantum function with input $x$ and $\rho_y$ from Alice and Bob, respectively. In this case $x$ is an (efficient) classical description of Alice's (quantum) input and
$\rho_y$ is an arbitrary quantum state representing Bob's input. We also define the zero-knowledge functionality $\mathcal{F}_\mathsf{QZK}$ (that is parametrized by a \QMA{} relation). $\mathcal{F}_\mathsf{QZK}$ takes as input an instance $x$ that belongs to a \QMA{} language $L$ and a quantum state $\ket{\psi}$ from Bob, and returns to Alice either $0$ or $1$ depending on whether the verifier for the \QMA{} relation would have accepted $x$ and $|\psi \rangle$.

\subsection{Simulation Based Security}
\label{sec:one-sided}
Let $\myreal_{\Pi, \mathcal{A}(z), i}(x, y, 1^{\lambda})$ and $\ideal_{\mathcal{F}, \mathcal{S}(z), i}(x, y, 1^{\lambda})$ be the output in the real and ideal execution for $\cF$ when the adversary $\mathcal{A}$ is controlling party $i \in\{\alice,\bob\}$ with the auxiliary input $z$. The $view_{\Pi, \mathcal{A}(z), i}(x, y, 1^{\lambda})$ denote the view of the adversary $\mathcal{A}$ after a real execution of $\Pi$. 

\begin{definition}[One-Sided Secure Realization of $\mathcal{F}$]\label{def:one}
 We say a protocol $\Pi = (\alice, \bob)$, where $\alice$ is classical and $\bob$ is quantum, securely computes $\mathcal{F}$ with one-sided simulation (adapted from Def.2.6.2 in \cite{CLBook10})) if for all inputs $(x,y) \in D(A_{in} \otimes B_{in}\otimes R)$ we have:
     \begin{enumerate}
     \item For every non-uniform QPT adversary $\mathcal{A}$ controlling $\alice$ in the real model, there exists a non-uniform QPT adversary $\mathcal{S}$ for the ideal model, such that:    
    \begin{equation} \label{eq:gen_alice_sec_ours}
         \{\myreal_{\Pi, \mathcal{A}(z), \alice}(x, y, 1^{\lambda})\}_{x, y, z, \lambda} \approx_{c} \{\ideal_{\mathcal{F}, \mathcal{S}(z), \alice}(x, y, 1^{\lambda})\}_{x, y, z, \lambda}
    \end{equation}
     \item For every non-uniform QPT adversary $\mathcal{A}$ controlling $\bob$, we have that the following distributions are indistinguishable to any QPT distinguisher $\mathcal{D}$:
     \begin{equation} \label{eq:gen_bob_sec_ours}
         \begin{split}
         & \{view_{\Pi, \mathcal{A}(z), \bob}(x, y, 1^{\lambda})\}_{x,x',y, z, \lambda} \approx_q \{ view_{\Pi, \mathcal{A}(z), \bob}(x',y, 1^{\lambda})\}_{x,x',y, z, \lambda} \\
         & \text{where } |x| = |x'|. 
         \end{split}
     \end{equation}
 \end{enumerate}
 \end{definition}

\sloppy Similarly, we extend the definition of black-box 2-party computation from~\cite{C:OstRicSca15} to quantum functionalities.

\begin{definition}[Black-box (Quantum) 2-party computation]\label{def:Qtwopc}
We say that a protocol $\Pi = (\alice, \bob)$, where $\alice$ is classical and $\bob$ is quantum, securely computes $\cF$ if for every $i\in\zo$, for all inputs $(x,y) \in D(A_{in} \otimes B_{in}\otimes R)$, for every non-uniform quantum-polynomial-time ($\qpt$) adversary $P_i^\star$ controlling $P_i$ in the real model, there exists a non-uniform  quantum polynomial-time adversary $\Sim_i$ (having black-box access to $P_i^*$) for the ideal world such that:
\[
\{\myreal_{\Pi,P_i^\star(z)}(x, y, 1^\lambda)\}_{x, y, z, \lambda} \approx_q \{\ideal_{\cF,\Sim_i(z)}(x, y, 1^\lambda)\}_{x, y, z, \lambda}
\]
where $z$ is the auxiliary input.
\end{definition}

\section{Impossibility of Secure Quantum 2PC }
\label{app:imposs}

In this section, we show that it is in general impossible to obtain a protocol that is fully simulatable (i.e. that satisfies  Definition~\ref{def:Qtwopc}) when only classical channels are available. To show this we will argue that  for all the protocols realizing a specific function there exists an adversarial Bob that cannot be simulated.
We note that the same no-go argument remains valid for \oqfe{} as well.

\begin{theorem}
\label{cor:imposs}
Secure quantum two-party computation over a classical channel with fully black-box simulation is not possible.
\end{theorem}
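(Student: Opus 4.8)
The plan is to instantiate the general connection between black-box simulation-secure \qtwopc{} and agree-and-prove protocols for \QMA{} relations, and then invoke the no-cloning-based impossibility of \cite{vidick2020classical} for the quantum-money agree-and-prove task. Concretely, I would fix a \QMA{} relation associated with a public-key quantum money scheme: the statement $x$ is a (classical) description of a valid serial number / verification key, and a witness is a valid money state $|\$\rangle_x$. The verification predicate is exactly the quantum-money verifier. I would define the functionality $\cF$ (a special case of $\cF_\mathsf{QZK}$, and also realizable via \oqfe{}) in which Alice holds $x$, Bob holds a quantum state $\psi$, and Alice learns the bit indicating whether the money verifier accepts $(x,\psi)$. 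Crucially, Bob receives no output.

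Next I would argue the reduction. Suppose some protocol $\Pi$ securely realizes $\cF$ with fully black-box simulation in the sense of Definition~\ref{def:Qtwopc}. Consider a malicious Bob $\bob^\star$ that simply runs the honest prover strategy on an honestly prepared money state; since the honest interaction makes Alice output $\accept$ with overwhelming probability, by the security guarantee there is a black-box simulator $\Sim_{\bob}$ (running in \qpt{}, with oracle access to the unitary $M$ of $\bob^\star$ and $M^\dagger$) that, in the ideal world, must query the ideal functionality with an input $\rho$ on which the money verifier accepts, i.e. $\Sim_{\bob}^{\bob^\star}$ extracts a valid money state. This is precisely an extractor for the \emph{prove}-phase of an agree-and-prove protocol for quantum money of the kind ruled out in \cite{vidick2020classical}: having black-box (rewinding) access to a single copy of $\bob^\star$ — which internally holds a single money state — the simulator outputs a valid money state, while the real adversary's internal register can, after the interaction, still be made to hold (or be rewound to hold) a valid money state. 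Composing the simulator's output with the residual adversary state yields two accepting money states from one, contradicting no-cloning / unforgeability of the money scheme.

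The main obstacle — and the step that needs the most care — is the bookkeeping of quantum state: in the classical-rewinding intuition the simulator both extracts a witness \emph{and} leaves the prover's state intact, but here everything is quantum and a priori the simulator's oracle queries to $M$ and $M^\dagger$ might disturb $\bob^\star$'s internal money register so that no second copy survives. This is exactly the subtlety that \cite{vidick2020classical} resolves for the agree-and-prove setting, so the proof should be structured to reduce \emph{verbatim} to their impossibility statement: I would show that a black-box simulator for $\Pi$ yields an object satisfying their definition of a secure agree-and-prove extractor for quantum money (the \qtwopc{} protocol $\Pi$ supplies the \emph{prove} phase, and the common input $x$ plus the trivial agree phase supplies the rest), and then quote their theorem. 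Two further minor points to check: first, that $\cF$ as defined is a legitimate quantum two-party functionality in our model (Alice classical, Bob quantum, only Alice gets output — yes, by construction); second, that the reduction only needs the simulator to be \qpt{} and black-box, which is exactly what Definition~\ref{def:Qtwopc} provides. With these in place the theorem follows, and the informal statement that such a protocol would ``contradict the quantum no-cloning argument'' is made precise.
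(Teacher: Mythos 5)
Your proposal takes essentially the same route as the paper: define the two-party functionality as quantum-money verification (i.e.\ the prove phase of the agree-and-prove scenario of Definition~\ref{def:AaPqmoney}), observe that a black-box simulator guaranteed by Definition~\ref{def:Qtwopc} would act as a knowledge extractor for that prove phase, and reduce to the no-cloning-based impossibility of \cite{vidick2020classical} (Theorem~\ref{thm:no-nondestructive}), deferring the state-bookkeeping subtlety to that result exactly as the paper does. The one ingredient the paper makes explicit that you leave implicit is that the cheating Bob must be chosen so that the interaction is perfectly \emph{non-destructive} in the sense of Definition~\ref{def:nondestructive-interaction} --- achievable precisely because the channel is classical, by running the honest prover coherently --- since non-destructiveness is the hypothesis under which the cited cloning theorem applies; your ``rewound to hold'' remark gestures at this, but the reduction should state it.
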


\begin{proof}
We choose the same notations for Alice and Bob as presented in Definition~\ref{def:q2pc} and assume that Bob has a quantum input while Alice's input is completely classical. For simplicity, we assume that only Alice obtains the output. Let us define a cheating strategy in the following way. We take an unclonable state, $\rho$ in the input space $S_0'$ of Bob, while the general quantum operations ($B_1, \ldots, B_m$) performed on Bob's end are such that the interaction between Alice and Bob is perfectly \emph{non-destructive} (as defined in Def.~\ref{def:nondestructive-interaction}). Similarly, Alice's input is completely classical and denoted as $a$. 

In the case when the functionality takes the (classical) input, we define $a$ as a triple $(x,public, secret)$ and similarly, Bob's input $\rho_w$ can be divided into classical $(x,public)$ as well as quantum state $\rho$. We define the two-party functionality such that it takes the input $(a,\rho_w)$, runs a ($\qpt$) verification algorithm $\reg{Ver} (a,\rho_w) \equiv \reg{Ver}(x, public, secret, \rho)$ and outputs a bit $b\in\{0,1\}$ to Alice. The idea behind such functionality is to emulate the proof phase of a specific secure agree-and-prove scheme (Definition~\ref{def:AaPqmoney}), that are generalized proofs of (quantum) knowledge for the quantum money functionality (as defined in~\cite{vidick2020classical}). In other words, non-destructive secure agree-and-prove protocols between a classical verifier and the quantum prover (for a quantum money scenario) are a special case of quantum two-party computation over a classical channel. However, non-destructive proofs of quantum knowledge for such a quantum money scenario imply cloning (Theorem~\ref{thm:no-nondestructive}).  
\end{proof}

As mentioned before, a natural step forward is to relax the security requirement to one-sided simulation. We present a concrete \oqfe{} protocol and analyse the security in the case when either of the party is malicious. In particular, we show simulation-based security against a $\qpt$ Alice and input privacy against a $\qpt$ Bob.  Such compromise seems inevitable given our previous result and the challenges to construct a quantum proof of knowledge, which shows that full simulation-based security in the presence of malicious adversaries is not possible.

\section{\texorpdfstring{\onetwooqfe}{}} \label{sec:1_2_oqfe_semihonest}


Before describing a candidate construction for \oqfesmall{} that satisfies the one-sided simulation-based security (Def.~\ref{def:one}), we introduce a simplified functionality \onetwooqfe{}, which can be understood as Alice having 2 possible functions $f_0$ and $f_1$ that she can choose from. The reason for introducing \onetwooqfe{} is that it sets the stage for the general two-party quantum computing functionality, is simpler in construction, and provides a good intuition for the \oqfe{} protocol.

We model \onetwooqfe{} functionality, denoted as \onetwooqfefunct{}, in the following way: Bob's private quantum computation is parameterized by a quantum state $\ket{\psi_{in}}$ and a set of angles $(\phi_0, \phi_1)$. The measurement outputs $s_i$ are then given by the computational basis measurement of the unitary $R_x \left(\phi_{i} \right)$ on the input state $\ket{\psi_{in}}$ for $i \in \{0,1\}$. For simplicity, we choose $\phi_0 = 0$ and $\phi_1 = \pi/2$. Therefore, the functionality \onetwooqfefunct{} is given by:
\begin{equation}
((s_0, s_1), b) \rightarrow (\lambda, s_b, \epsilon) 
\end{equation}
where $\lambda$ denotes the empty string, (1- $\epsilon$) is the probability of Alice obtaining the correct outcome and $s_b = M_Z R_x \left( - \frac{\pi}{2} \cdot b \right) \ket{\psi_{in}}$

\begin{definition} \label{def:one_sided_simulation_1_2_oqfe}
(Ideal Functionality \texorpdfstring{\onetwooqfefunct}{Fqot})
A \onetwooqfe{} functionality \onetwooqfefunct{} is defined as follows. When both the parties ($\alice$ and $\bob$) are honest then \onetwooqfe{} takes the input $b$ and $(s_0, s_1)$ from Alice and Bob, respectively and outputs $s_b$ at Alice's side. Here $(s_0, s_1)$ corresponds to measurement output of Bob's (server's) private quantum computation and $b$ denotes the choice of computation Alice (user) wishes to retrieve.
\end{definition}

In this section, we present a two-party quantum computing protocol over a classical channel inspired by the one-bit teleportation circuit~\cite{zhou2000methodology,gottesman1999demonstrating}. We also employ a cryptographic primitive known as remote state preparation (\rsp) as a subroutine between a classical party (Alice) and quantum party (Bob) to delegate the (quantum) computation from Alice to Bob. \rsp{} was first proposed in \cite{cojocaru2019qfactory,gheorghiu2019computationally} in the context of classical-client delegated quantum computing. This simple two-party protocol, which we call as \onetwooqfe{}, serves as a stepping stone for the general two-party quantum computing protocol presented in Section~\ref{sec:full_simulation_two_pc}. 

In \onetwooqfe{}, we have two parties: a (classical) Alice and a (quantum) Bob, where Alice's input is represented by a bit $b$ and Bob has a quantum input $\Ket{\psi_{in}}$. As mentioned before, we will consider Alice to be the party that receives the output at the end of the protocol. Without loss of generality, we assume that the goal of Alice and Bob is to perform a unitary operation $U_b$ from a set $\cU$ on Bob's private input state $\Ket{\psi_{in}}$. Here, the set of unitaries $\cU := \{U_b\}_{b \in \{0, 1\}}$ is known to both parties and we denote the output of the joint computation as  $\Ket{\psi_{out}}$, where $\Ket{\psi_{out}} := U_b\Ket{\psi_{in}}$. To this end, we present two \onetwooqfe{} protocols, one with the semi-honest Alice (Protocol~\ref{protocol:quot_1_2_hbc_alice}) and the other with the malicious Alice (Protocol~\ref{protocol:quot_1_2_malicious_alice}). In both these protocols, Bob could be completely malicious.  In this section, we will denote Alice's input with a bit and Bob's input with a single qubit quantum state. Furthermore, Alice's input bit $b$ is encoded using an angle $\phi_b$, where $b = 0$ corresponds to angle $\phi_0 := 0$ and $b = 1$ corresponds to $\phi_1 := \pi/2$ while Bob's private input is represented as $\ket{\psi_{in}}$.

\subsection{Semi-honest Alice}
Our first construction is a 2-message, non-interactive protocol for semi-honest Alice (Protocol~\ref{protocol:quot_1_2_hbc_alice}) and proceeds in two-stages.
In stage I (step 1 and step 2) of \onetwooqfeprotone{}, Alice encrypts her private bit $b$ (parameterized with angle $\phi_b$) using one-time pad with the random key $\theta_2$ and $r_A$, where both $\theta_2$  and $r_A$ are randomly chosen bits, to obtain an angle $\delta$. This is given by the following equation: 
\begin{equation}\label{eq:delta}
\delta := \phi_b + \theta_2 \cdot \frac{\pi}{2} + r_A \cdot \pi
\end{equation}
Since $\phi_b = b\cdot\pi/2$, we can rewrite the angle $\delta$ as
\begin{equation}\label{eq:delta2}
\delta = (b + \theta_2) \cdot \frac{\pi}{2} + r_A \cdot \pi = \frac{\pi}{2} \cdot \left[b + \theta_2 + 2r_A \bmod 4 \right] 
\end{equation}
From here on we will drop the $\pi/2$ coefficient and refer to the angle $\delta$ as  $b + \theta_2 + 2r_A \bmod 4$ instead of $\frac{\pi}{2} \cdot \left[b + \theta_2 + 2r_A \bmod 4 \right]$. 
As the next step, Alice and Bob execute the remote state preparation ($\sf{RSP}$) subroutine (Protocol~\ref{protocol:rsp}) to allow Alice to remotely prepare the following single-qubit state on Bob's side while hiding $\theta$ from him. The remotely prepared quantum state lies in the (X,Y)-plane of the Bloch sphere and is parameterised with an angle $\theta$. This is denoted in the following way.
\begin{equation}
\label{eq:rspstate}
\Ket{\psi_A} = \Ket{+_{\theta}} \coloneqq \frac{1}{\sqrt{2}}(\ket{0} + e^{i \theta}\ket{1})
\end{equation} 
where $\theta \in \{0,\pi/2,\pi,3\pi/2\}$ is an angle represented as a two bit string $(\theta_1, \theta_2) \in \{0, 1\}^2$ and $\theta_2$ is exactly the bit used above by Alice in angle $\delta$ (Eq.~\ref{eq:delta2}).
It is important to emphasize that the exact construction of remote state preparation used here relies on a two-way communication classical channel. Finally, Alice transmits to Bob the classical message $\delta$ along with the classical messages required in Protocol~\ref{protocol:rsp}. 
In stage II (step 3-6), Bob initializes an ancillary register, $\Ket{\psi_B}$, in the $|+\rangle := \frac{1}{\sqrt{2}}(|0\rangle + |1\rangle)$ state and performs entangling operations, controlled-Z ($\sf{CZ}$) gates, as shown in Fig.~\ref{fig:q_comp_bob_q_alice}.

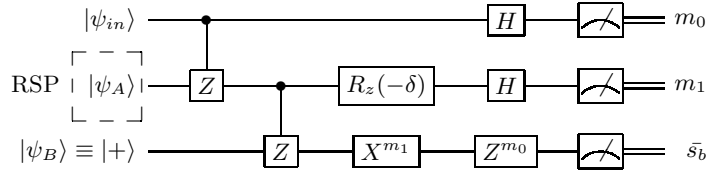
\begin{figure}[ht]
\[
\Qcircuit @C=1.7em @R=1.2em{
	& \lstick{\Ket{\psi_{in}}} &  \ctrl{1}    & \qw & \qw & \gate{H}  & \meter & \cw \, \, \,  \, \, \,  \, \, \,  \, \, \, \, m_0
    \\
    & \lstick{\text{RSP} \quad \Ket{\psi_A}} & \gate{Z}  & \ctrl{1} &  \gate{R_z(-\delta)} & \gate{H} & \meter & \cw \, \, \,  \, \, \,  \, \, \,  \, \, \, \, m_1
 \\
    & \lstick{\Ket{\psi_B} \equiv \Ket{+} } &  \qw  & \gate{Z}  & \gate{X^{m_1}} & \gate{Z^{m_0}} & \meter & \cw \, \, \,  \, \, \,  \, \, \, \, \, \, \, \, \, \,  \bar{s_b} \gategroup{2}{1}{2}{1}{2.8em}{--}
}
\]
\caption{Quantum computations performed by Bob in steps 3-7 of Protocol \ref{protocol:quot_1_2_hbc_alice}. The quantum state $|\psi_{in}\rangle$ represents Bob's input. The quantum state $|\psi_{A}\rangle$ (inside the dashed box) is generated using classical-client remote state preparation Protocol~\ref{protocol:rsp} between Alice and Bob. The angle $\delta$ encodes Alice's private (classical) input $b$.  }
\label{fig:q_comp_bob_q_alice}
\end{figure}

In the end, Bob measures the first two registers corresponding to his private input state and to the quantum state obtained from the $\sf{RSP}$ procedure. The two measurements are performed in the Hadamard and $\{\ket{\pm_{\delta}}\}$ basis, respectively, where $\ket{\pm_{\delta}} \coloneqq R_z(-\delta)|\pm\rangle$ and $R_z(-\delta)$ is the rotation around z-axis with the angle $\delta$. This step results in the measurement outcomes $m_0$ and $m_1$ on Bob's side. Finally, Bob performs a correction operator $Z^{m_0} X^{m_1}$ on the ancillary register and measures it in computational basis to obtain $\bar{s_b}$. Bob sends $m_0$ and $\bar{s_b}$ to Alice. Finally, Alice can compute her output $s_b$ as follows: $s_b \coloneqq \bar{s_b} \, \oplus \, \theta_1 \, \oplus \, r_A \, \oplus \, m_0 \cdot b$. \\

\begin{breakablealgorithm}
\caption{\texorpdfstring{\onetwooqfe}{} Protocol, \onetwooqfeprotone{}, with Semi-Honest Classical Alice} \label{protocol:quot_1_2_hbc_alice}
\vspace{\baselineskip}
\noindent \textbf{Inputs:} Bob: single qubit state $\Ket{\psi_{in}}$ and Alice: $b \in \{0,1\}$ \\
\noindent \textbf{Output:} Alice: $s_b$, where $s_b \coloneqq M_Z [R_x\left(-\frac{\pi}{2} \cdot b \right)\ket{\psi_{in}}]$ \\ 
\textbf{Requirements:} 
A 2-regular trapdoor one-way family $\mathcal{F}$ and homomorphic hardcore predicate $h_k$.
\begin{enumerate}
    \item Alice and Bob run a classical-client remote state preparation protocol (See, Appendix~\ref{protocol:rsp}). Alice obtains $\theta_2$ and Bob obtains $\Ket{\psi_A}$ (See, Eq.~\ref{eq:rspstate}) as follows:
    \begin{enumerate}
        \item Alice runs the algorithm $(k,t_k) \leftarrow \text{Gen}_{\mathcal{F}}(1^n)$ and sends $k$ to Bob. 
        \item Bob prepares a quantum state: $\ket{0}_{R1}\otimes \ket{+}^{n}_{R2}\otimes \ket{0}^{m}_{R3}$, applies $U_{f_k}$ using the second register (R2) as control and the third (R3) as target and measures the R3 register in the computational basis to obtain the (classical) outcome $y$.
        \item Bob applies $U_{h_k}$  on R2 as control and R1 as target, and measures the R2 register to obtain measurement outcome $m$. Bob finally applies $HR_z(-\pi/2)$ on R1 register to obtain the desired state $|\psi_A\rangle$. Bob sends $m_{qf} := (y,m)$ to Alice. 
    \end{enumerate}
    \item Alice encodes her input as $\phi_b := b \cdot \frac{\pi}{2}$, uniformly samples $r_A  \overset{\$}{\leftarrow} \{0, 1\}$, computes the angle $\delta$ (See, Eq.~\ref{eq:delta}) and sends $\delta$ to Bob.
    \item Bob performs the following entangling operations on his private input register $\Ket{\psi_{in}}$, the state $\ket{\psi_A}$, and the ancillary register $\Ket{\psi_B}$: $ (\mathbb{I} \otimes CZ) (CZ \otimes \mathbb{I}) (\ket{\psi_{in}} \otimes \ket{\psi_A} \otimes \Ket{\psi_B})$, where $\ket{\psi_B}$ is in the state $\ket{+}$.
    \item Bob performs the measurement of first register in the X-basis and the second register in the (X,Y)-plane with an angle $\delta$ to obtain the measurement outcomes $m_0\in \{0, 1\}$ and $m_1 \in \{0, 1\}$. 
    \item Bob applies $ X^{m_1} Z^{m_0}$ to the resulting quantum state to obtain:
        \begin{equation}
        \ket{out_b} = X^{\theta_1  \, \oplus \,  r_A  \, \oplus \, m_0\cdot b} [HR_z(-\phi_b)H \ket{\psi_{in}}] 
        \end{equation}
    \item Bob measures $\Ket{out_b}$ in the computational basis and obtains a measurement outcome $\bar{s_b}$ and sends $(m_0, \bar{s_b})$ to Alice. 
    \item  Alice computes (efficiently) $\theta_1$ from $m_{qf}$ using the trapdoor $t_k$ and performs the following (classical) operation to get her desired outcome:  $s_b \coloneqq \bar{s_b} \, \oplus \, \theta_1 \, \oplus \, r_A \, \oplus \, m_0 \cdot b$.
\end{enumerate}
\end{breakablealgorithm}

\begin{theorem}[Correctness] \label{thm:correctness_1_2_oqfe_semihonest}
In an honest run of \onetwooqfe{} Protocol~\ref{protocol:quot_1_2_hbc_alice}, \onetwooqfeprotone{}, when both parties follow  protocol specifications, Alice obtains the outcome $s_b = M_Z [R_x\left(-b \cdot \frac{\pi}{2} \right) \ket{\psi_{in}}]$, where $b$ is Alice's input and $\ket{\psi_{in}}$ is Bob's input.
\end{theorem}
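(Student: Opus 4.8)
The plan is to trace the quantum state through the circuit in Figure~\ref{fig:q_comp_bob_q_alice} step by step, track all classical corrections, and show that the final computational-basis measurement on Bob's ancilla, once adjusted by Alice's classical post-processing, equals $M_Z[R_x(-b\cdot\tfrac{\pi}{2})\ket{\psi_{in}}]$. The key conceptual point is that this circuit is exactly a one-bit (gate) teleportation gadget, where the rotation angle is $\delta$ but $\delta$ has been one-time-padded by $\theta_2$ and $r_A$, and the resource qubit $\ket{\psi_A}=\ket{+_\theta}$ carries the compensating $\theta=(\theta_1,\theta_2)$ phase that Alice knows via the trapdoor; the pad therefore cancels and what remains applied to $\ket{\psi_{in}}$ is the intended $HR_z(-\phi_b)H = R_x(-\phi_b)$ (up to the global conventions used in the paper) together with a Pauli byproduct that Alice removes classically.

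First I would set up notation: write $\ket{\psi_A}=\ket{+_\theta}$ with $\theta=\theta_1\pi+\theta_2\tfrac{\pi}{2}$ (as fixed by the RSP subroutine, which by Theorem-level correctness of Protocol~\ref{protocol:rsp} and the $HR_z(-\pi/2)$ post-rotation indeed produces this state with $(\theta_1,\theta_2)$ recoverable by Alice from $m_{qf}$ and $t_k$), and $\ket{\psi_B}=\ket{+}$. Then apply the two $\mathsf{CZ}$ gates in step~3 to $\ket{\psi_{in}}\otimes\ket{\psi_A}\otimes\ket{\psi_B}$. Next, perform the two measurements of step~4: the first register in the Hadamard basis, yielding $m_0$, and the second register in the $\{\ket{\pm_\delta}\}$ basis, yielding $m_1$. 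Using the standard MBQC/teleportation identities — measuring a qubit entangled via $\mathsf{CZ}$ in the $\ket{\pm_\phi}$ basis implements, up to a Pauli $X^{m}$ byproduct, the map $HR_z(-\phi)$ on the neighbour — the post-measurement state on the third register is $X^{m_1}Z^{m_0}$ applied to $HR_z(-\delta)H\ket{\psi_{in}}$ conjugated appropriately by the $\theta$-dependence of $\ket{\psi_A}$. I would then substitute $\delta=\phi_b+\theta_2\tfrac{\pi}{2}+r_A\pi$ from Eq.~\ref{eq:delta} and use the commutation relations of $R_z(\theta_2\tfrac{\pi}{2})$ and $R_z(r_A\pi)=Z^{r_A}$ past $H$ to see that the $\theta_2$ contribution from $\delta$ is exactly cancelled by the $\theta_2$ phase already present in $\ket{\psi_A}$, leaving $HR_z(-\phi_b)H\ket{\psi_{in}}$ up to a Pauli $X^{\theta_1\oplus r_A\oplus(\text{something})}Z^{(\cdots)}$. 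This is precisely the claim of step~5, $\ket{out_b}=X^{\theta_1\oplus r_A\oplus m_0\cdot b}[HR_z(-\phi_b)H\ket{\psi_{in}}]$, so I would verify the exponents match, paying attention to the fact that $R_z(-\phi_b)$ commutes with $X$ only up to a phase flip of $\phi_b$, which is why the byproduct picks up the $m_0\cdot b$ term.

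Finally, measuring $\ket{out_b}$ in the computational basis gives $\bar{s_b}$, and since $X$ on the state before a $Z$-basis measurement flips the outcome bit, $\bar{s_b}=\big(M_Z[HR_z(-\phi_b)H\ket{\psi_{in}}]\big)\oplus\theta_1\oplus r_A\oplus m_0\cdot b$. Noting $HR_z(-\phi_b)H = R_x(-\phi_b)$ and $\phi_b=b\cdot\tfrac{\pi}{2}$, Alice's post-processing $s_b:=\bar{s_b}\oplus\theta_1\oplus r_A\oplus m_0\cdot b$ removes exactly these terms, yielding $s_b=M_Z[R_x(-b\cdot\tfrac{\pi}{2})\ket{\psi_{in}}]$, as required.

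I expect the main obstacle to be the careful bookkeeping of the Pauli byproduct operators and the phase subtleties: specifically, checking that when $R_z(-\phi_b)$ is pushed past the $X^{m_1}$ byproduct it contributes the $m_0\cdot b$ correction, and confirming that the $Z^{m_0}$ byproduct together with the $r_A$ and $\theta_1$ bits assemble into exactly the exponent claimed in step~5 — with the conventions for $R_z$, $R_x$, and the $\ket{\pm_\delta}$ measurement basis all consistent. Everything else is a routine propagation of Clifford operations through the circuit. I would double-check the identity $HR_z(-\delta)H\ket{\psi_{in}}$ versus $R_x$ against the paper's stated conventions ($R_x(-\alpha)$, $R_z(-\delta)$, $M_Z$) to make sure no spurious sign or global phase appears in the final bit.
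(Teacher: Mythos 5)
Your proposal is correct and follows essentially the same route as the paper's proof: propagate the state through the two chained one-bit-teleportation gadgets, substitute $\delta=\phi_b+\theta_2\tfrac{\pi}{2}+r_A\pi$, use the identities $R_z((-1)^{b_1}b_2\pi)=Z^{b_2}$ and $R_z((-1)^{b_1}b_2\tfrac{\pi}{2})=Z^{b_1b_2}R_z(b_2\tfrac{\pi}{2})$ to cancel $\theta_2$ and collect the byproduct $X^{\theta_1\oplus r_A\oplus m_0\cdot b}$, and let Alice's classical XOR correction remove it. The only nitpick is that the $m_0\cdot b$ term arises from commuting $R_z(-\delta)$ past the $X^{m_0}$ byproduct of the \emph{first} measurement (the sign $(-1)^{m_0}$), not past $X^{m_1}$, but you flag exactly this bookkeeping as the point to verify, so the plan is sound.
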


\noindent(\textit{Proof Sketch}).
Firstly, since Alice and Bob run a classical-client remote state preparation ($\sf{RSP}$) (Protocol~\ref{protocol:rsp}), we can equivalently write the Step 1 in Protocol~\ref{protocol:quot_1_2_hbc_alice} as Alice choosing a random bit $\theta_2$, preparing and sending a quantum state $\Ket{\psi_A}$ (Eq.~\ref{eq:rspstate}) to Bob via a quantum channel. This equivalence follows from the correctness of $\sf{RSP}$ protocol (See App.~\ref{app_B_qfactory}, Thm. 3.1.\cite{cojocaru2019qfactory}). Since Controlled-Z ($\sf{CZ}$) operation commutes with the Z-rotations, Step 2 - 5 can be seen as if Bob is effectively applying a $R_z(\theta-\delta)$ on his private input state ($\ket{\psi_{in}}$) to obtain $X^{m_1}HR_z(\theta-\delta)X^{m_0}H\ket{\psi_{in}}$. The interleaved Pauli-X controlled on measurement outputs are by-products of the two measurements performed on Bob's end. The key idea behind this simulation can be seen via two-consecutive runs of a simple circuit identity, also known as one-bit teleportation. Finally, Bob performs computational basis measurement on this final state to obtain $\bar{s}_b$, which Alice updates to obtain $s_b$. The role of $r_A$ (in Eq.~\ref{eq:delta}) is simply to mask the final measurement outcome (from Bob) and is unmasked by Alice in the final step, hence it does not affect the correctness.  A formal proof is presented in Appendix~\ref{app:correctness}.

\begin{theorem}[Simulation-based statistical security against semi-honest Alice]\label{thm:simulation_hbc_alice}
The \onetwooqfe{} Protocol, \onetwooqfeprotone{}, (Protocol~\ref{protocol:quot_1_2_hbc_alice}) securely computes \onetwooqfefunct{} in the presence of semi-honest adversary Alice.
\end{theorem}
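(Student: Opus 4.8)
The plan is to establish the first condition of Definition~\ref{def:one} --- simulation-based security against Alice --- in the (easier) special case of a semi-honest Alice, by exhibiting an explicit simulator $\Sim$. Since Alice is semi-honest, her view in a run of \onetwooqfeprotone{} (Protocol~\ref{protocol:quot_1_2_hbc_alice}) consists of her random tape (the coins of $\Gen_{\mathcal{F}}$, which fix $(k,t_k)$, and the masking bit $r_A$), the only two messages she receives from Bob --- namely $m_{qf}=(y,m)$ at the end of Step~1 and $(m_0,\bar{s_b})$ at the end of Step~6 --- and the value $s_b$ she reconstructs in Step~7. The structural facts I will exploit are: (a) Bob's first message $m_{qf}$ is computed only from the public key $k$ and fresh ancillae and therefore does \emph{not} depend on Bob's private input $\ket{\psi_{in}}$, so it can be sampled from the correct distribution by anyone holding $k$; (b) $m_0$ is the outcome of an $X$-basis measurement on one half of the cluster state $\sf{CZ}(\ket{\psi_{in}}\otimes\ket{\psi_A})$ in Fig.~\ref{fig:q_comp_bob_q_alice}, and a short one-bit-teleportation computation shows it is uniform on $\{0,1\}$ independently of $\ket{\psi_{in}}$; and (c) by the correctness analysis behind Theorem~\ref{thm:correctness_1_2_oqfe_semihonest}, the de-masked quantity $\bar{s_b}\oplus\theta_1\oplus r_A\oplus m_0\cdot b$ equals, up to the negligible correctness error of the \rsp{} subroutine, a fresh sample of $M_Z[R_x(-\frac{\pi}{2}b)\ket{\psi_{in}}]$ --- exactly the distribution of the ideal output $s_b$ --- and this value is independent of $m_0$ (and of $\theta_1,r_A$) because in the underlying measurement-based computation all Pauli byproducts are removed by the corrections performed in Steps~5 and~7.

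Concretely, on input $b$ the simulator $\Sim$ will: (i) run $(k,t_k)\gets\Gen_{\mathcal{F}}(1^n)$ exactly as the honest Alice; (ii) run Bob's honest Step~1b--1c subroutine on input $k$ to obtain $m_{qf}=(y,m)$, and then compute $(\theta_1,\theta_2)$ from $(y,m,t_k)$ exactly as the honest Alice; (iii) sample $r_A\gets\{0,1\}$ and set $\delta$ as in Eq.~\ref{eq:delta}; (iv) query the ideal functionality $\onetwooqfefunctmath$ on input $b$, receive $s_b$, sample $m_0\gets\{0,1\}$ uniformly, and set $\bar{s_b}:=s_b\oplus\theta_1\oplus r_A\oplus m_0\cdot b$. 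It then outputs the view assembled from $(k,t_k,r_A)$, the messages $m_{qf}$ and $(m_0,\bar{s_b})$, and the reconstructed output $s_b$.

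The indistinguishability argument will proceed block by block. The components $(k,t_k)$, $m_{qf}$, $(\theta_1,\theta_2)$, $r_A$ and $\delta$ are produced by $\Sim$ by running verbatim the same honest procedures that generate them in the real execution --- here fact~(a) is what lets $\Sim$ generate $m_{qf}$ without access to Bob's input --- so their joint law is identical in the two worlds. Conditioned on these quantities, fact~(b) gives that $m_0$ is uniform in the real world, and fact~(c) gives that in the real world $\bar{s_b}=t\oplus\theta_1\oplus r_A\oplus m_0\cdot b$ with $t$ a sample of $M_Z[R_x(-\frac{\pi}{2}b)\ket{\psi_{in}}]$ drawn independently of $m_0$ (and of $\theta_1,r_A$); this is precisely the law that $\Sim$ synthesises with $t$ replaced by the ideal output $s_b$, whose distribution coincides with that of $t$. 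Hence the real and simulated views are statistically close, which in particular gives the $\approx_c$ guarantee required by Definition~\ref{def:one} (in fact $\approx_u$).

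I expect fact~(c) --- the joint distribution of $(m_0,\bar{s_b})$ --- to be the main obstacle. Proving it requires tracking the one-bit-teleportation / $\sf{MBQC}$ byproduct operators through the circuit of Fig.~\ref{fig:q_comp_bob_q_alice}: commuting the $\sf{CZ}$ gates past the $Z$-rotations, collecting the Pauli-$X$ byproducts of the two intermediate measurements, and checking that after the corrections of Step~5 and Step~7 the bit reconstructed by Alice is exactly the target computational-basis outcome and retains no residual dependence on the intermediate outcomes. This is essentially the computation already carried out in the correctness proof (Appendix~\ref{app:correctness}), so the security proof will invoke it rather than repeat it. One point worth recording is that no computational assumption enters Alice's side of this argument --- the hardness of LWE is used only to hide Alice's bit from a malicious Bob --- which is why the guarantee here is statistical; the sole source of error is the (negligible) correctness error of the QFactory-based \rsp{} protocol (Thm.~3.1 of~\cite{cojocaru2019qfactory}), and with an ideal \rsp{} resource the simulation would be perfect.
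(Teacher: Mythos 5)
Your proposal is correct and follows essentially the same route as the paper's proof: an explicit simulator that, given $(b,s_b)$, produces the first message $m_{qf}$ input-independently and embeds the ideal output into the pair $(m_0,\bar{s}_b)$ via the de-masking relation $s_b=\bar{s}_b\oplus\theta_1\oplus r_A\oplus m_0\cdot b$, with statistical closeness resting on the correctness computation and the uniformity/independence of the intermediate outcome $m_0$. The only differences are minor: your simulator generates $m_{qf}$ by running honest Bob's (input-independent) quantum subroutine rather than sampling $(\tilde{y},\tilde{m})$ classically as the paper's $\ppt$ simulator does (which is allowed, since the definition permits a QPT simulator, and in fact sidesteps having to argue the exact distribution of $m$), and you tie the simulated view to the ideal output for both $b=0$ and $b=1$, whereas the paper does so only for $b=1$ and compares marginal views.
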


\begin{theorem}[Privacy against Malicious Bob]\label{thm:privacy_against_bob}
The \onetwooqfe{} Protocol~\ref{protocol:quot_1_2_hbc_alice} \onetwooqfeprotone{} is private against malicious Bob.
\end{theorem}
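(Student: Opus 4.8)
The plan is to establish exactly item~2 of Definition~\ref{def:one} for Protocol~\ref{protocol:quot_1_2_hbc_alice}: for every non-uniform QPT $\cA$ controlling $\bob$, every auxiliary input $z$, every input $\ket{\psi_{in}}$, and every pair of bits $b,b'\in\{0,1\}$, the views $view_{\pi_{\mathsf{SH}},\cA(z),\bob}(b,\ket{\psi_{in}})$ and $view_{\pi_{\mathsf{SH}},\cA(z),\bob}(b',\ket{\psi_{in}})$ are indistinguishable to every QPT distinguisher. The first step is to isolate where Alice's input can enter Bob's view. In $\pi_{\mathsf{SH}}$, Bob receives from Alice only two messages: the key $k$ in Step~1(a), and the angle $\delta$ in Step~2 (all other quantities --- $m_{qf}$ and $(m_0,\bar{s_b})$ --- flow from Bob to Alice, and $\theta_1$ is never transmitted). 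Since $k\gets\mathsf{Gen}_{\cF}(1^n)$ is sampled before $b$ is used, the only carrier of information about $b$ is $\delta$, which (in units of $\pi/2$, cf.\ Eq.~\ref{eq:delta} and Eq.~\ref{eq:delta2}) equals $b+\theta_2+2r_A \bmod 4$, where $r_A$ is a fresh uniform bit chosen by Alice and $\theta_2$ is the bit that Alice reconstructs from Bob's RSP message via the trapdoor $t_k$, which --- by the security of the RSP sub-protocol (Protocol~\ref{protocol:rsp}, Appendix~\ref{app_B_qfactory}) --- remains hidden from a malicious $\bob$.

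The core is a two-hybrid chain. Let $H_b$ be the experiment that runs the real protocol with Alice's input $b$, except that, \emph{only} when forming $\delta$ in Step~2, the bit $\theta_2$ is replaced by a fresh uniform bit $u\in\{0,1\}$ independent of everything else (the RSP sub-protocol itself is run unchanged, so Bob's interaction is unaffected). I claim $view_{\pi_{\mathsf{SH}},\cA(z),\bob}(b,\ket{\psi_{in}})\approx_q H_b$: a distinguisher for these two hybrids is, via a direct reduction, an adversary breaking the security of the RSP/QFactory protocol, which guarantees that no QPT $\bob$ can distinguish the bit $\theta_2=B_1$ (as computed by Alice from $t_k$ and Bob's message) from a uniform bit, even given $k$ and his entire final state; for a single bit this ``indistinguishable-from-uniform-given-the-view'' form is equivalent to the more usual statement that $B_1$ is unpredictable. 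Concretely, the reduction runs $\cA(z)$ on $\ket{\psi_{in}}$ internally, forwards $k$, relays $\cA$'s RSP message to the RSP challenger, receives a bit $\beta$ that is either the real $\theta_2$ or uniform, samples $r_A$ itself, sets $\delta := \phi_b+\beta\cdot\tfrac{\pi}{2}+r_A\pi$ as in Eq.~\ref{eq:delta}, feeds $\delta$ to $\cA$, and outputs the distinguisher's verdict; its advantage equals the distinguishing advantage between $view(b)$ and $H_b$, hence is negligible.

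Finally, $H_0$ and $H_1$ are \emph{identically} distributed. In $H_b$ we have $\delta\equiv b+u+2r_A\pmod 4$ with $u,r_A$ independent uniform bits that are independent of the rest of $\cA$'s view, and since $(u,r_A)\mapsto u+2r_A\bmod 4$ is a bijection onto $\{0,1,2,3\}$, the quantity $u+2r_A\bmod 4$ is uniform on $\{0,1,2,3\}$; hence $\delta$ is uniform on $\{0,1,2,3\}$ and independent of $b$, so the two hybrids coincide. (Note that both terms are needed: $r_A\pi$ alone leaves $\delta$ supported on a $b$-dependent pair of values, and a uniform $\theta_2$ alone does not wrap around the ``carry''.) Chaining, for all $b,b'$ we obtain $view(b)\approx_q H_b \equiv H_{b'} \approx_q view(b')$, which is the required privacy.

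I expect the first hybrid step to be the main obstacle: it requires pinning down the security statement of the classical-client RSP/QFactory sub-protocol in precisely the form needed --- (i) against a fully malicious $\bob$, (ii) with $\theta_2=B_1$ understood as a value \emph{derived} from Bob's own (possibly dishonest) message together with $t_k$ rather than sampled by Alice (including whatever convention governs the case where Bob's message is malformed), and (iii) yielding indistinguishability of $\theta_2$ from uniform \emph{conditioned on $\bob$'s final state and $k$}, with the gap to one-shot unpredictability closed by the standard single-bit equivalence, which nonetheless must be invoked. One also has to check that the reduction perfectly reproduces $\cA$'s real view --- which is immediate here because, apart from $k$ and $\delta$, every message in $\cA$'s view is produced by $\cA$ itself, so nothing else needs simulating.
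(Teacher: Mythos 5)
Your proof is correct, and it rests on the same cryptographic core as the paper's: the only message of Alice that depends on $b$ is $\delta = b + \theta_2 + 2r_A \bmod 4$, so privacy reduces entirely to the hiding of the basis bit $\theta_2$ guaranteed by the 4-states basis-blindness of QFactory (Theorem~\ref{thm:security_qfac_2_0}). The packaging differs. The paper gives a single direct reduction: from any distinguisher $\mathcal{A}$ of the views for $b=0$ versus $b=1$ it builds a predictor $\mathcal{A}'$ for $\theta_2$ that samples a guess $B_2$ and a bit $b$, hands $\mathcal{A}$ the simulated angle $b + B_2 + 2r_A \bmod 4$, and outputs $B_2$ or $B_2\oplus 1$ according to whether $\mathcal{A}$'s guess matches $b$ --- implicitly using exactly the fact you make explicit, namely that a wrong guess of $\theta_2$ produces the view corresponding to input $1\oplus b$. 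Your route instead goes through a real-or-random hybrid $H_b$ plus an exact uniformity argument for $\delta$; this is cleaner and makes the one-time-pad structure visible, but it needs one extra step that the paper's argument avoids: the basis-blindness definition (Definition~\ref{def:4basisblind}) is stated only in \emph{prediction} form, so your first hybrid step genuinely requires the standard single-bit equivalence between unpredictability and indistinguishability-from-uniform, which you correctly flag and which is easily supplied (advantage-preserving in both directions). Modulo stating that conversion, the two proofs are interchangeable; the paper's is tighter against the definition as written, yours is more modular and would adapt more readily if $\theta_2$ were replaced by a longer masking string.
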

 The proofs of Thm~\ref{thm:simulation_hbc_alice} and Thm.~\ref{thm:privacy_against_bob} can be found in Appendix~\ref{App_D_proofs_Sec_5}.

 \subsection{\texorpdfstring{\onetwooqfe}{}: Malicious Alice} \label{sec:maliciousalice}

One of the reasons why the previous Protocol~\ref{protocol:quot_1_2_hbc_alice} is not secure against malicious Alice is that there is no guarantee that the key $k$ sent in the first round represents a valid key. For example, Alice could i) generate $k$ using an algorithm different from $Gen_\mathcal{F}$, ii) run  $Gen_\mathcal{F}$ with some \emph{bad randomness} (a non-uniformly chosen random string).
To circumvent the first problem, one can modify the protocol such that Alice could prove that $k$ is a ``correct" key computed using the algorithm $Gen_\mathcal{F}$ via a zero-knowledge protocol. From the zero-knowledge property, one can ensure that trapdoor of $k$ is not leaked while at the same time Bob can be convinced about the validity of $k$. Unfortunately, this does not prevent Alice from using ``bad'' randomness. To tackle the second problem, we let Alice and Bob engage in a sort of coin-tossing protocol. We denote our protocol enhanced with the coin-tossing and the zero-knowledge proof with \onetwooqfeprottwo{} and refer 
to App.~\ref{app:onesided} for its formal description and proofs.

\section{One-sided simulation \texorpdfstring{\oqfe{}}{} \label{sec:oqfe_protocol}}
\label{sec:oqfe}

In this section we describe a protocol for general two-party quantum computation over classical channels. We call it \oqfesmall{} protocol and we will show that it realizes $\mathcal{F}_{\sf{CQ}}$ in the one-sided simulation security paradigm. 
\sloppy The \oqfe{} protocol, denoted as $\pi_{\sf{Q2PC}}$, represents a generalization of the \onetwooqfe{} construction and similar to the Protocol \ref{protocol:quot_1_2_malicious_alice}, we require the following primitives:
\begin{enumerate}
	\item A commitment scheme ${\sf COM} = (Com, Dec)$ that is hiding against quantum adversaries and computationally binding.
	\item A trapdoor one-way function $\mathcal{F} = (Gen_{\mathcal{F}}, Eval_{\mathcal{F}}, Inv_{\mathcal{F}})$ for the construction of \rsp, which in turn is used as a subroutine in \oqfe{} protocol.
	\item An argument of knowledge post-quantum zero-knowledge protocol $\Pi^\star:=(\provzk^\star,\verzk^\star)$ for the NP-relation: $\Rel=\{com, (dec,m):  \textsf{Dec}(com,dec,m)=1\}$.
	\item An argument of knowledge post-quantum zero-knowledge protocol $\Pi:=(\provzk,\verzk)$ for the NP-relation $\Rel_f$ (defined in Section~\ref{sec:maliciousalice}).
	\item An argument system post-quantum zero-knowledge protocol $\Pi:=(\provzk',\verzk')$ for the NP-relation: 	
	$Rel' = \{(\delta,\pi,s^Z,s^X,com), (r, \theta, \phi',dec): 
	 \delta = \phi' + \theta + r \pi\text{ and }\phi' = (-1)^{s^X} \phi + s^Z \pi\text{ and } \textsf{Dec} (com, dec, \phi) = 1 \}$.
\end{enumerate}

Additionally, we require the universal blind quantum computation (\ubqc{}) protocol with classical output (Protocol 2 of \cite{broadbent2009universal}) as a sub-module. The \ubqc{} protocol is interactive and is based on the measurement-based model of quantum computation. In this model, one can represent an arbitrary quantum function $f$ equivalently as a tuple $(\mathcal{G}_{n \times m}, \Phi, g)$ where $\mathcal{G}_{n \times m}$ is a highly entangled quantum state (often represented as a graph with dimension $(n,m)$ and is also known as graph state), a sequence of classical angles: $\Phi := \{\phi_{i, j}\}$ for $i \in [n]$  and $j \in [m]$, where $\phi_{i, j} \in \{0, \frac{\pi}{4}, \cdots, \frac{7\pi}{4}\}$, and $g$ denotes a set of bits dictating the dependency sets ($s_{i, j}^X$ and $s_{i, j}^Z$), known to both parties, which are required to perform certain Pauli corrections to obtain the desired deterministic computation. We assume that the bits corresponding to $g$ are known to both Alice and Bob and for our purposes we can as well ignore it. Similarly, we can fix the graph state $\mathcal{G}_{n \times m}$, except its dimension $(n,m)$, to say brickwork state~\cite{broadbent2009universal} or cluster state~\cite{raussendorf2003measurement} as both of them are known to be universal for quantum computation with $(X,Y)$-plane measurements~\cite{raussendorf2003measurement,mantri2017universality}. 

At a high level, the protocol can be described as follows. Alice and Bob first run a coin-tossing type of protocol (similar to Sec~\ref{sec:maliciousalice}) to ensure that Alice's randomness was correctly generated and that the \rsp{} primitive (which will be called several times in parallel) is run honestly by Alice. The idea behind using the \rsp{} primitive is to eliminate the need for quantum communication between Alice and Bob, which is crucial in the \ubqc{} protocol. More precisely, using \rsp, Alice prepares the graph state on Bob's side and then Bob entangles his private input as the first layer onto this graph state. Following this, Alice and Bob classically interact (similar to \ubqc{} protocol), wherein each round Alice sends a measurement angle to Bob, where these angles encode Alice's private input. Upon receiving the measurement angles, Bob performs the (projective) measurement in the (X,Y)-plane with that angle and sends the measurement outcome to Alice. This process lasts until all the qubits in the graph are measured. During these rounds, Alice and Bob will also run proof of knowledge systems to ensure the correctness of both parties.

\medskip

\begin{breakablealgorithm}
\caption{\oqfe{} Protocol, $\pi_{\sf{Q2PC}}$, with Classical Alice and Quantum Bob}
\label{protocol:oqfe_protocol}
\vspace{\baselineskip}
\noindent \textbf{Inputs:}
\begin{enumerate}
    \item Sender (Bob): an $n$-qubit state $|\psi_{in}\rangle$. 
    \item Receiver (Alice): $f$ an $n$-qubit unitary represented as the set of angles $\Phi := \{\phi_{i, j}\}_{i, j}$ of a one-way quantum computation over a brickwork state/cluster state~\cite{mantri2017universality}, of the size $n \times m$, along with the dependencies X and Z obtained via flow construction~\cite{danos2006determinism}.
\end{enumerate}
\begin{enumerate}[leftmargin=*]
\item \textbf{Preliminary phase}
\begin{enumerate}
   
    \item[1.1] Alice samples uniformly at random $r_{f, i, j}^A \leftarrow \{0, 1\}^{\lambda}$ and  $r_{i, j} \leftarrow \{0, 1\} $ for $i \in [n]$ and $j \in [m]$.  
    \item[1.2] For each $i \in [n]$ and $j \in [m]$, Alice computes $\textsf{Com}(r^{(i,j)}_{f^A}) \rightarrow (com^{(i,j)}_f, dec^{(i,j)}_f)$ and $\textsf{Com}(\phi_{i,j}) \rightarrow (com^{(i,j)}, dec^{(i,j)})$ and 
  sends $(com^{(i,j)}_f, com^{(i,j)})$ to Bob.
  \item[1.3] For each $i \in [n]$ and $j \in [m]$, Alice runs $\provzk^\star$ on input the statement $x:=com^{(i,j)}$ and the witness $(dec^{(i,j)}, \phi_{i,j})$, and Bob runs $\verzk^\star$ on input the statement $x$. If $\verzk^\star$ outputs $0$ then Bob stops, otherwise he continues with the following steps.
    \item[1.4]  For each $i \in [n]$ and $j \in [m]$, Bob samples $r_{f,i,j}^B$ uniformly at random from $\{0, 1\}^{\lambda}$.  We denote by $r_{f}^B := \{r_{f, i, j}^B\}_{i, j}$. Bob sends $r_{f}^B$ to Alice.
    \item[1.5] Alice computes $r_{f,i,j} = r_{f,i,j}^A \oplus r_{f,i,j}^B$. She then runs $Gen_{\mathcal{F}}$ $n \cdot m$ times using internal random coins $r_{f,i,j}$ and obtains $(k^{(i,j)}, t_k^{(i,j)}, hp^{(i,j)})$  for $i \in [n]$ and $j \in [m]$. Denote by $k := \{k^{(i,j)}\}_{i, j}$ is the concatenation of the $n \cdot m$ public keys. 
  \item[1.6] For each $i \in [n]$ and $j \in [m]$:
   \begin{itemize}
  	  \item[1.6.1] Alice runs $\provzk$ on input $x^{(i,j)} = (k^{(i,j)}, r^{(i,j)}_{f^B}, com^{(i,j)}_f), w = (r^{(i,j)}_{f^A}, dec^{(i,j)}_f)$ and Bob runs $\verzk$ on input $x^{(i,j)}$.
 	
  	  \item[1.6.2] If $\verzk$ outputs $0$ then Bob aborts. Otherwise, he continues to the next step.
     \end{itemize}
\end{enumerate}
\end{enumerate}

\begin{enumerate}[resume]
\item \textbf{QFactory and UBQC}
    \begin{enumerate}

    \item[2.1] For each $i \in [n]$ and $j \in [m]$, Alice on input  $t_k^{(i,j)}$, and Bob on input $k^{(i,j)}$ run an instances of 8-states QFactory protocol\footnote{An 8-states QFactory protocol is combination of two runs of 4-states QFactory Protocol~\ref{protocol:rsp} given in~\cite{cojocaru2019qfactory}. The only difference between the QFactory Protocol~\ref{protocol:rsp}, and the one used in our construction is that Alice does not execute the first step (a), since the key for the trapdoor OWFs has been already generated as described in the previous steps.} (in sequence) to obtain $\theta_{i, j}$ on client's side and $\ket{+_{{\theta}_{i, j}}}$ on server's side, where $\theta_{i, j} \gets \Z \frac{\pi}{4}$, $i \in [n]$, $j \in [m]$. For each qubit $\ket{+_{\theta_{i, j}}}$. 

    \item[2.2] Bob entangles all these qubits by applying controlled-Z gates between them in order to create a graph state $\mathcal{G}_{n \times m}$, where the first layer of the graph is Bob's input $|\psi_{in}\rangle$.
    \item[2.3] For $j \in [m]$ and $i \in [n]$:
\begin{enumerate}
\item[2.3.1] Alice computes $\delta_{i, j} = \phi_{i, j}' + \theta_{i, j} + r_{i, j}\pi$, where $\phi_{i, j}' = (-1)^{s_{i, j}^X} \phi_{i, j} + s_{i, j}^Z \pi$ and $s_{i, j}^X$ and $s_{i, j}^Z$ are computed using the previous measurement outcomes and the X and Z dependency sets. Alice then sends the measurement angle $\delta_{i, j}$ to Bob.
\item [--] Alice runs $\provzk'$ on input the statement to be proven 
$x:=(\delta_{i, j},\pi_{i, j},s_{i, j}^Z,s_{i, j}^X,com_{i, j})$ and the witness $w:=(r_{i, j}, \theta_{i, j}, \phi'_{i, j},dec_{i, j})$,  and Bob runs 
the interactive algorithm $\verzk,$ on input the statement $x$. 
Let $b$ be the output of $\verzk'$. If $b=0$ then Bob aborts, otherwise he continues as follows.
\item[2.3.2] Bob measures the qubit $\ket{+_{{\theta}_{i, j}}}$ in the basis $\{\ket{+_{{\delta}_{i, j}}}, \ket{-_{{\delta}_{i, j}}}\}$ and obtains a measurement outcome $s_{i, j}' \in \{0, 1\}$. Bob sends the updated measurement result $s_{i, j}'$ to Alice.
\item[2.3.3] Alice computes $\bar{s}_{i, j} = s_{i, j}' \oplus r_{i, j}$.
\end{enumerate}
    \end{enumerate}
\end{enumerate}
 
\noindent \textbf{Output:} Alice obtains the output $f(\Ket{\psi_{in}})$ as the concatenation of $\{\bar{s}_{i, m}\}_i$. 
\vspace{0.5\baselineskip}
\end{breakablealgorithm}
\vspace{\baselineskip}

\begin{theorem}[Correctness]
In an honest run of the \oqfe{} Protocol \ref{protocol:oqfe_protocol}, when both parties follow the protocol specifications, Alice obtains the outcome $f(\ket{\psi})$, where $f$ is Alice's input and $\ket{\psi}$ is Bob's input.
\label{thm:correctness_oqfe}
\end{theorem}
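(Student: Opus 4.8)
The plan is to prove correctness in three reductions, each peeling off one layer of the construction until we land on the correctness of universal blind quantum computation (\ubqc{}) with classical output. First I would argue that in an honest execution the preliminary phase has no effect on the quantum computation carried out: by completeness of the argument systems run with provers $\provzk^\star,\provzk,\provzk'$, an honest Alice always makes an honest Bob accept, so Bob never aborts, and the coin-tossing fixes $r_{f,i,j}=r^A_{f,i,j}\oplus r^B_{f,i,j}$, a uniform string, so the keys output by $Gen_{\mathcal F}$ are honestly generated. Next, by the correctness of the QFactory-based \rsp{} subroutine (Appendix~\ref{app_B_qfactory}; Thm.~3.1 of~\cite{cojocaru2019qfactory}), Step~2.1 prepares on Bob's side exactly the state $\ket{+_{\theta_{i,j}}}$ while Alice learns $\theta_{i,j}$, for every $i\in[n],j\in[m]$. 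Hence for the purpose of the correctness analysis we may replace Step~2.1 by the idealized step in which Alice samples $\theta_{i,j}\gets\Z\frac{\pi}{4}$ and transmits $\ket{+_{\theta_{i,j}}}$ to Bob over a quantum channel.

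After this replacement, Protocol~\ref{protocol:oqfe_protocol} is precisely the \ubqc{} protocol with classical output of~\cite{broadbent2009universal} executed on the measurement pattern $(\mathcal G_{n\times m},\Phi,g)$ associated to $f$, with Bob's input $\ket{\psi_{in}}$ forming the input layer of the graph. Alice's message $\delta_{i,j}=\phi'_{i,j}+\theta_{i,j}+r_{i,j}\pi$ one-time-pads the corrected measurement angle $\phi'_{i,j}=(-1)^{s^X_{i,j}}\phi_{i,j}+s^Z_{i,j}\pi$ by the additive mask $\theta_{i,j}$ and the outcome-flip $r_{i,j}$, where $s^X_{i,j},s^Z_{i,j}$ are the X- and Z-dependency (flow) bits computed from the already-reported outcomes. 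Since $\mathsf{CZ}$ commutes with diagonal ($Z$-axis) rotations, measuring qubit $(i,j)$ in the basis $\{\ket{+_{\delta_{i,j}}},\ket{-_{\delta_{i,j}}}\}$ is equivalent, up to the deterministic relabeling $s'_{i,j}\mapsto s'_{i,j}\oplus r_{i,j}$, to measuring it in $\{\ket{+_{\phi'_{i,j}}},\ket{-_{\phi'_{i,j}}}\}$ on the $\theta$-free graph state; this is exactly the byproduct-operator / one-bit-teleportation propagation already used in the proof of Theorem~\ref{thm:correctness_1_2_oqfe_semihonest}. Step~2.3.3, where Alice sets $\bar s_{i,j}=s'_{i,j}\oplus r_{i,j}$, undoes the $r_{i,j}\pi$ mask, so the $\bar s_{i,j}$ are distributed exactly as the outcomes of the honest MBQC run of $f$.

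Finally I would invoke the (uniform) determinism of the pattern: because $\mathcal G_{n\times m}$ is a brickwork/cluster state equipped with the flow $g$ of~\cite{danos2006determinism}, applying the corrections encoded by $s^X_{i,j},s^Z_{i,j}$ makes the pattern implement $U_f$ deterministically, so the computational-basis outcomes of the last layer $\{\bar s_{i,m}\}_i$ equal $M_Z[\,U_f\ket{\psi_{in}}\,]=f(\ket{\psi_{in}})$, which is Alice's output. The hard part will be the bookkeeping in the second step: carefully tracking how each byproduct operator $X^{s'_{i,j}}$ (and the $Z$'s it induces) commutes through subsequent $\mathsf{CZ}$ gates and measurements, and checking that the resulting corrections coincide with the dependency sets produced by the flow construction. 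This is routine but tedious — essentially the $n=1$ computation of Theorem~\ref{thm:correctness_1_2_oqfe_semihonest} (carried out in Appendix~\ref{app:correctness}) iterated along the graph — so I would relegate the full accounting to the appendix.
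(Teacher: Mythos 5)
Your proposal is correct and follows essentially the same route as the paper: the paper's own (sketch) proof likewise reduces correctness to the correctness of the QFactory-based \rsp{} subroutine (as used in Theorem~\ref{thm:correctness_1_2_oqfe_semihonest}) together with the correctness/determinism of the \ubqc{} protocol of~\cite{broadbent2009universal}, with the honest preliminary phase having no effect beyond fixing honestly generated keys. Your write-up is in fact more explicit than the paper's sketch about the $\delta_{i,j}$ masking and the $r_{i,j}$ relabeling, which is a welcome level of detail.
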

\noindent (\textit{Proof Sketch}).
One of the main differences between the protocols for \onetwooqfe{} (Protocol~\ref{protocol:quot_1_2_hbc_alice} and Protocol~\ref{protocol:quot_1_2_malicious_alice}) and \oqfe{} (Protocol~\ref{protocol:oqfe_protocol}) is that, in the former, we consider a simple linear graph state (of three qubits) instead of cluster state (as in the \oqfe{}) along with the fact that Alice's input is encoded using a larger set of angles and Bob's input is multi-qubit state. 
The rest of the steps in the \oqfe{} protocol are straightforward extensions of \onetwooqfe, where we also use standard cryptographic techniques such as zero-knowledge proofs to ensure that the classical messages of both the parties were computed honestly. The \oqfe{} protocol is based on measurement-based model of quantum computing and its correctness follows from the correctness proof of \onetwooqfe{} (\autoref{thm:correctness_1_2_oqfe_semihonest}) as well as the correctness of the UBQC protocol \cite{broadbent2009universal}. 

\begin{theorem}\label{th:oqfeproof}
Protocol $\pi_{\sf{Q2PC}}$ securely computes $\mathcal{F}_{\sf{CQ}}$ with one-sided simulation.
\label{lemma:one_sided_simulation_full_oqfe}
\end{theorem}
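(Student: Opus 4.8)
\textbf{Proof plan for Theorem~\ref{th:oqfeproof}.}
The plan is to establish the two conditions of Definition~\ref{def:one} separately. For the first condition (simulation-based security against a malicious QPT Alice $\cA$), I would build the simulator $\Sim$ as follows. $\Sim$ plays the role of Bob throughout Protocol~\ref{protocol:oqfe_protocol}, but instead of using an honest input $\ket{\psi_{in}}$ it uses a dummy state (say $\ket{0}^{\otimes n}$). In the preliminary phase, $\Sim$ runs the extractor of $\Pi^\star$ against $\cA$ acting as $\provzk^\star$ to pull out, for every $(i,j)$, the committed value $\phi_{i,j}$ together with $dec^{(i,j)}$; since $\Pi^\star$ is an argument of knowledge, this succeeds whenever $\cA$'s proof is accepting, and by the computational binding of $\sf{COM}$ the extracted $\phi_{i,j}$ are the only values consistent with $com^{(i,j)}$. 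Next, after the coin-tossing step and the $\Pi$-proofs of step 1.6, $\Sim$ similarly extracts the randomness $r_{f,i,j}^A$ and $dec_f^{(i,j)}$, which (together with $r_{f,i,j}^B$) pins down the trapdoor keys $k^{(i,j)}$ as honestly generated; from the $\Pi'$-proofs in step 2.3.1 together with the already-extracted $\theta_{i,j}$ (which $\Sim$, playing Bob in QFactory, knows because it simulates that sub-protocol) $\Sim$ recovers the ``real'' measurement angles $\phi'_{i,j}$ and hence $\phi_{i,j}$. Having reconstructed Alice's full input $f=\{\phi_{i,j}\}_{i,j}$, $\Sim$ queries the ideal functionality $\mathcal{F}_{\sf{CQ}}$, obtains nothing (Alice's side; Bob has no output here) — wait, $\Sim$ is simulating for the \emph{Alice}-corruption case, so the ideal functionality returns $f(\ket{\psi_{in}})$ to $\cA$; $\Sim$ receives this value, and must make the transcript of the $\bar s_{i,j}$'s (equivalently the $s'_{i,j}$'s it sends as Bob) consistent with it. Because the $r_{i,j}$ one-time pads are (information-theoretically) uniform and unknown to $\cA$ until the $\Pi'$-proof, $\Sim$ can freely program the last-layer outcomes $\bar s_{i,m}$ to equal the string returned by the functionality and sample the intermediate $s'_{i,j}$ uniformly; I then argue indistinguishability by a standard hybrid argument, replacing in turn the real zero-knowledge proofs by simulated ones (using the zero-knowledge property of $\Pi^\star,\Pi,\Pi'$), the honest QFactory execution by the correctness/equivalence statement of Theorem~3.1 of~\cite{cojocaru2019qfactory}, and finally the honest Bob-with-$\ket{\psi_{in}}$ by the dummy-state Bob — the last step being exactly where correctness (Theorem~\ref{thm:correctness_oqfe}) plus the one-time-pad masking makes the two transcripts identically distributed.

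For the second condition (privacy against a malicious QPT Bob $\cA$), I need to show that for $|x|=|x'|$ the views $view_{\Pi,\cA(z),\bob}(x,y,1^\lambda)$ and $view_{\Pi,\cA(z),\bob}(x',y,1^\lambda)$ are computationally indistinguishable, where $x,x'$ are two possible inputs of Alice, i.e.\ two angle-sets $\Phi,\Phi'$. The only places where $\Phi$ enters Bob's view are (i) the commitments $com^{(i,j)}$, (ii) the zero-knowledge proofs $\Pi^\star$, $\Pi$, $\Pi'$, and (iii) the measurement angles $\delta_{i,j}$. I would argue by a hybrid sequence: first replace all proofs $\Pi^\star,\Pi,\Pi'$ with their simulators (invoking quantum zero-knowledge), so that $\Phi$ no longer enters via the proofs; then switch the commitments $com^{(i,j)}$ from commitments to $\phi_{i,j}$ to commitments to $\phi'_{i,j}$, which changes the view only negligibly by the quantum-hiding property of $\sf{COM}$; finally observe that each transmitted angle $\delta_{i,j}=\phi'_{i,j}+\theta_{i,j}+r_{i,j}\pi$ is, from Bob's viewpoint, a one-time pad of $\phi'_{i,j}$ by the QFactory-generated angle $\theta_{i,j}$ — and the security of QFactory (\rsp) guarantees that $\theta_{i,j}$ (at least one of its bits, compounded over the 8-state construction to a full mask of $\Z\frac\pi4$) is computationally hidden from Bob given only the quantum state $\ket{+_{\theta_{i,j}}}$ — so the distribution of $\{\delta_{i,j}\}_{i,j}$ is computationally independent of which angle-set was used. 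Chaining these hybrids gives indistinguishability of the two views. This half essentially reduces to the \ubqc{} blindness argument of~\cite{broadbent2009universal} with the quantum channel replaced by \rsp, plus the hiding of the commitments and the zero-knowledge of the added proofs.

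\textbf{Main obstacle.} I expect the principal difficulty to be the extraction step in the Alice-corruption case: running the argument-of-knowledge extractors for $\Pi^\star,\Pi,\Pi'$ against a quantum $\cA$ — \emph{sequentially}, once for each of the $n\cdot m$ coordinates, and interleaved with the rest of the protocol — without the rewinding needed for extraction disturbing $\cA$'s quantum state in a way that spoils later messages or blows up the running time. One must appeal to a quantum proof-of-knowledge with a \emph{simulatable} (state-preserving) extractor in the sense of the Simulatability definition recalled in the preliminaries, and argue that the composition of $n\cdot m$ such extractions remains QPT and leaves $\cA$'s state negligibly close to the real one; care is also needed because the statements proved by $\Pi'$ in step 2.3.1 depend adaptively on earlier measurement outcomes, so the extractions cannot all be front-loaded. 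A secondary subtlety is making precise the ``$\theta_{i,j}$ fully masks $\phi'_{i,j}$'' claim: a single 4-state QFactory run only hides one bit of $\theta$, so one must verify that the 8-state construction used in step 2.1, together with the $r_{i,j}\pi$ pad, yields a mask that covers the relevant angle space well enough for the blindness reduction to go through — this is where I would lean on the detailed \rsp{} security statement in Appendix~\ref{app_B_qfactory}. The remaining steps (hybrid arguments over zero-knowledge, hiding, QFactory correctness) are routine given the primitives assumed.
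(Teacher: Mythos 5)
Your proposal is correct and follows essentially the same route as the paper: the theorem is split into privacy against malicious Bob (argued from the blindness of \ubqc{} composed with QFactory/\rsp{}, the zero-knowledge of the added proofs, and the hiding of the commitments) and simulation against malicious Alice (simulator extracts the committed angles via the argument-of-knowledge extractor of $\Pi^\star$, extracts the coin-tossing randomness to pin down the honestly generated trapdoor keys as in the \onetwooqfe{} case, uses the soundness of $\Pi'$ to tie the $\delta_{i,j}$'s to the extracted input, queries $\mathcal{F}_{\sf{CQ}}$, and then runs the semi-honest/\ubqc{}-stage simulator to program the outcomes). The only cosmetic difference is that you re-derive the Bob-side blindness by an explicit hybrid argument, whereas the paper cites the composed UBQC-with-QFactory privacy result of \cite{badertscher2020security} directly.
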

To complete the proof we need to prove the following two lemmata: Lemma~\ref{thm:privacy_against_bob_oqfe} and Lemma~\ref{thm:simulation_malicious_alice_oqfe}.

\begin{lemma}[Privacy against Malicious Bob]
The  \oqfe{} Protocol~\ref{protocol:oqfe_protocol} is private against malicious Bob.
\label{thm:privacy_against_bob_oqfe}
\end{lemma}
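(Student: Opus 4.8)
The plan is to establish item~2 of Definition~\ref{def:one} for Protocol~\ref{protocol:oqfe_protocol} directly by a hybrid argument, following the same template as the proof of Theorem~\ref{thm:privacy_against_bob} for the single-qubit protocol, but additionally absorbing the three zero-knowledge sub-protocols and the many parallel/sequential invocations of \rsp{}. Concretely, I would fix a non-uniform $\qpt$ adversary $\mathcal{A}$ corrupting $\bob$ and two Alice-inputs $f,f'$ encoded by angle sets $\Phi,\Phi'$ over graph states of the same dimension $n\times m$, and prove the stronger statement that $\mathcal{A}$'s view is $\approx_q$-indistinguishable from a distribution that does not depend on Alice's input at all; the required indistinguishability $view_{\Pi,\mathcal{A}(z),\bob}(\Phi,\ket{\psi_{in}},1^\lambda)\approx_q view_{\Pi,\mathcal{A}(z),\bob}(\Phi',\ket{\psi_{in}},1^\lambda)$ then follows by the triangle inequality applied to the two symmetric hybrid chains.

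The hybrid chain I have in mind has four steps. $H_0$ is the real execution with Alice holding $\Phi$. In $H_1$ the transcripts of $\provzk^\star$ (Step~1.3), $\provzk$ (Step~1.6) and $\provzk'$ (the proof step inside Step~2.3) are generated by the respective post-quantum zero-knowledge simulators, run on the honest statements but without the witnesses; since Alice is honest every proven statement is true, so $H_0\approx_q H_1$ by a hybrid over the $3nm$ proof instances, and from $H_1$ on the witnesses --- which carry the angles $\phi_{i,j}$ and the decommitments --- are unused. In $H_2$ each commitment $com^{(i,j)}$ to $\phi_{i,j}$ and each $com^{(i,j)}_f$ to $r^A_{f,i,j}$ is replaced by a commitment to a fixed dummy string; because the now-simulated zero-knowledge transcripts are an efficient function of the public statements, $H_1\approx_q H_2$ reduces to the hiding of ${\sf COM}$ against quantum adversaries. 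In $H_3$ each angle $\delta_{i,j}$ sent in Step~2.3.1 is replaced by a fresh uniformly random element of $\Z\tfrac{\pi}{4}$; this is the crux, and I would argue $H_2\approx_q H_3$ from the security of the classical-client \rsp{} (QFactory) subroutine --- which keeps the bits of $\theta_{i,j}$ that mask $\delta_{i,j}=\phi'_{i,j}+\theta_{i,j}+r_{i,j}\pi$ hidden from $\bob$ --- together with the uniform pad $r_{i,j}\pi$, exactly as in Theorem~\ref{thm:privacy_against_bob}; the reduction emulates the remaining \ubqc{} rounds using only public data and $\bob$'s own messages, and handles the $nm$ \rsp{} instances through a further hybrid. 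The quantities $s^X_{i,j},s^Z_{i,j}$ appearing in Step~2.3 are deterministic functions of $\bob$'s earlier messages and of public data, so they leak nothing new and need no separate treatment. At the end, in $H_3$ every message Alice sends is either $f$-independent by construction (the keys $k^{(i,j)}$, the commitments, the \rsp{} transcripts, the uniform $\delta_{i,j}$'s) or the output of a simulator on $f$-independent input, so $\mathcal{A}$'s view there is independent of $f$; the analogous chain started from $f'$ ends at the same distribution, and the two chains of $\approx_q$ steps close the argument.

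The hard part will be the step $H_2\approx_q H_3$. The subtlety is that the \rsp{}/QFactory security notion has to be strong enough to (i)~hold against a $\bob$ deviating arbitrarily both inside the QFactory runs and in the later measurement rounds, and (ii)~deliver a genuine one-time pad over the relevant bits of the \emph{classical} angle $\delta_{i,j}$ rather than merely over the prepared qubit --- i.e.\ it has to give classical-channel \ubqc{} blindness --- and moreover compose across the $nm$ \rsp{} instances whose outputs Bob entangles into the graph state $\mathcal{G}_{n\times m}$. A secondary but necessary point is ordering the hybrids so the reductions are not circular: the zero-knowledge proofs must be simulated first, so that once the witnesses and decommitments are gone, the reductions to hiding and to \rsp{} security can execute the rest of Alice's honest code without using $f$ or the trapdoors in any way that would reveal $\theta_{i,j}$.
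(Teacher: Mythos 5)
Your overall decomposition --- simulate the zero-knowledge arguments, switch the commitments to dummies by hiding, and then argue that the angles $\delta_{i,j}$ reveal nothing about $\Phi$ --- is structurally the same as the paper's proof, which combines exactly three ingredients: the zero-knowledge property of $\Pi$ and $\Pi'$, the hiding of ${\sf COM}$, and the privacy against Bob of the protocol obtained by composing the quantum-client \ubqc{} protocol with QFactory. The difference lies entirely in how the crux (your $H_2\approx_q H_3$) is discharged: the paper does not re-derive it from the per-instance basis-blindness game, but invokes Theorem~5.3 of \cite{badertscher2020security}, which proves blindness of \ubqc{} with its quantum channel replaced by QFactory --- precisely the ``classical-channel \ubqc{} blindness'' you identify as the missing notion.

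This is also where your proposal has a genuine gap rather than a stylistic difference. You propose to obtain $H_2\approx_q H_3$ from Definition~\ref{def:4basisblind}-style basis blindness ``exactly as in Theorem~\ref{thm:privacy_against_bob}'', via a further hybrid over the $nm$ \rsp{} instances, and then you yourself flag (correctly) that the notion needed must hold against a Bob who deviates jointly across the QFactory runs and the later measurement rounds, must pad the relevant bits of the \emph{classical} angle $\delta_{i,j}=\phi'_{i,j}+\theta_{i,j}+r_{i,j}\pi$, and must compose across $nm$ parallel instances whose outputs Bob entangles into $\mathcal{G}_{n\times m}$ --- but you never establish any of this. It is not a routine hybrid: per-instance basis blindness hides only part of $\theta_{i,j}$ (the basis bit in the $4$-state game), so the reduction embedding a single challenge instance cannot compute the honest $\delta_{i,j}$ for that instance in later rounds without the hidden bits; in the single-qubit proof of Theorem~\ref{thm:privacy_against_bob} this was handled by a one-bit guess-and-check, which does not scale naively to $nm$ angles in $\Z\frac{\pi}{4}$ with adaptive, correlated deviations. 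Indeed, classical-client \rsp{} is known not to enjoy composable security, which is exactly why the blindness of the composed \ubqc{}$\circ$QFactory protocol is proven as a dedicated game-based theorem in \cite{badertscher2020security}. To complete your argument you would either have to carry out that nontrivial proof yourself or, as the paper does, invoke it as a black box; as written, the central step of your hybrid chain remains an announced intention rather than a proof.
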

\begin{proof}

This results from the privacy against Bob of the protocol resulting from
combining the quantum-client UBQC protocol with QFactory proven in Theorem 5.3 of \cite{badertscher2020security} together with the zero-knowledge property of $\Pi$ and $\Pi'$, and the hiding of the commitment scheme.
\end{proof}

\begin{lemma}[Simulation-based Security Malicious Alice] The \oqfe{} Protocol~\ref{protocol:oqfe_protocol} is simulation-based secure against malicious Alice.
\label{thm:simulation_malicious_alice_oqfe}
\end{lemma}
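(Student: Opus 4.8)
\emph{Proof plan.} By item~1 of Definition~\ref{def:one} it is enough to exhibit, for every malicious $\qpt$ Alice $\cA$, an ideal-world simulator $\Sim$ that plays the role of $\bob$ in Protocol~\ref{protocol:oqfe_protocol}, makes a single call to $\mathcal{F}_{\sf{CQ}}$, and outputs a state indistinguishable from $\cA$'s real view. The skeleton of $\Sim$ is: (i) interact with $\cA$ as the honest $\bob$ would, but use the knowledge extractors of the arguments of knowledge to recover both the angle set $\Phi=\{\phi_{i,j}\}$ that $\cA$ is committed to (this is $\cA$'s effective input $f$) and the \rsp{} trapdoors; (ii) send $f$ to $\mathcal{F}_{\sf{CQ}}$ and receive the output string $y:=f(\ket{\psi_{in}})$; (iii) finish the protocol by faking $\bob$'s QFactory and measurement messages so that $\cA$ decodes exactly $y$.

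\textbf{Input/trapdoor extraction.} $\Sim$ runs the preliminary phase as the honest $\bob$, except that in Step~1.3 it replaces $\verzk^\star$ by the knowledge extractor of $\Pi^\star$ on each of the $n\cdot m$ instances; since $\Pi^\star$ proves knowledge of a witness for $\Rel$ with statement $com^{(i,j)}$, $\Sim$ thereby learns every committed angle $\phi_{i,j}$, hence $f=\Phi$. Likewise, in Step~1.6 it replaces $\verzk$ by the knowledge extractor of the argument of knowledge for $\Rel_f$, obtaining the values $r^{(i,j)}_{f^A}$; combining these with the $r^{(i,j)}_{f^B}$ that $\Sim$ itself sampled it forms the coins $r_{f,i,j}=r^{(i,j)}_{f^A}\oplus r^{(i,j)}_{f^B}$, re-runs $Gen_{\mathcal{F}}$ on them, and recovers the trapdoors $t_k^{(i,j)}$ --- by the soundness of that argument together with the computational binding of $\textsf{COM}$ these are consistent with the keys $\cA$ actually sent. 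Each extraction is performed with the \emph{simulatable} extractor of the preliminaries, so the residual state of $\cA$ after each of the $n\cdot m$ extractions is $\approx_q$ the state of $\cA$ after an ordinary accepting execution, and the $n\cdot m$ extractions compose with only a polynomially bounded loss. If the honest-$\bob$ code would reject any proof, $\Sim$ sends $\abort$ to $\mathcal{F}_{\sf{CQ}}$, halts, and outputs $\cA$'s current state; otherwise it queries $\mathcal{F}_{\sf{CQ}}$ with $f$ and receives $y$.

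\textbf{Faking $\bob$'s quantum messages.} $\Sim$ runs the QFactory rounds (Step~2.1) exactly as the honest $\bob$ would --- this part of $\bob$'s code does not touch $\ket{\psi_{in}}$ --- and, using the trapdoors $t_k^{(i,j)}$, computes the angles $\theta_{i,j}$ that $\cA$ derives. In each measurement round of Step~2.3, on receiving $\delta_{i,j}$ it runs the honest verifier $\verzk'$ (no zero-knowledge is needed, as $\Sim$ is the verifier), computes $\phi'_{i,j}=(-1)^{s^X_{i,j}}\phi_{i,j}+s^Z_{i,j}\pi$ from the extracted $\phi_{i,j}$ and the dependency bits induced by the outcomes it has simulated so far, and recovers $r_{i,j}$ from $\delta_{i,j}=\phi'_{i,j}+\theta_{i,j}+r_{i,j}\pi$. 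It then replies with a uniformly random $s'_{i,j}$ if $(i,j)$ is an internal qubit, and with $s'_{i,m}:=y_i\oplus r_{i,m}$ when $j=m$, so that $\cA$'s decoded bit $\bar{s}_{i,m}=s'_{i,m}\oplus r_{i,m}$ equals $y_i$. Indistinguishability from the real execution is then shown by a short hybrid argument: swapping the honest verifiers of Steps~1.3 and~1.6 for the simulatable knowledge extractors costs only $\approx_q$; swapping the honest-$\bob$ measurement replies for the idealized ones is justified by the correctness and blindness analysis of the underlying UBQC$+$QFactory protocol (the same analysis behind Theorem~\ref{thm:correctness_oqfe} and Lemma~\ref{thm:privacy_against_bob_oqfe}), which says that in a real run the internal replies are uniform and independent of $\cA$'s view and of $\bob$'s input while the output replies un-pad to $f(\ket{\psi_{in}})$; and the soundness of the argument system for $Rel'$ together with the binding of $\textsf{COM}$ guarantees that the angles $\delta_{i,j}$ that $\cA$ sends are consistent with the extracted $\phi_{i,j}$ and the QFactory-generated $\theta_{i,j}$, so that the $r_{i,j}$ that $\Sim$ derives coincide with the pads $\cA$ will use when decoding.

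\textbf{Main obstacle.} The delicate part is the quantum rewinding: $\Sim$ must run $n\cdot m$ knowledge extractions \emph{in sequence}, keeping $\cA$'s internal register usable for the rest of the protocol, which is exactly why $\Pi^\star$ and the argument of knowledge for $\Rel_f$ are required to be simulatable, and the propagation of the knowledge error $\delta(\epsilon)$ across the $n\cdot m$ instances (and into the final distinguishing advantage) is what needs the most careful bookkeeping. A related point is that the argument system for $Rel'$ is \emph{only} sound, not a proof of knowledge, so $\Sim$ cannot read $r_{i,j}$ off its transcript; recovering $r_{i,j}$ must route through the extracted trapdoor and the linear relation defining $\delta_{i,j}$, which is precisely why the trapdoor extraction in Step~1.6 is indispensable.
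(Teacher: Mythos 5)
Your proposal is correct and follows essentially the same route as the paper's (sketch) proof: extract Alice's angles $\phi_{i,j}$ via the knowledge extractor of $\Pi^\star$, extract the trapdoors via the extractor for $\Rel_f$ exactly as in the malicious-Alice proof of the \onetwooqfe{} protocol, query $\mathcal{F}_{\sf{CQ}}$ with the extracted input, use the soundness of $\Pi'$ (plus binding of the commitment) to argue the $\delta_{i,j}$ are consistent with the extracted data, and then run the semi-honest/\ubqc{}-style simulation that programs uniform internal outcomes and the padded output layer. You simply spell out in more detail the steps the paper compresses (in particular the sequential, simulatable extractions and the recovery of $r_{i,j}$ through the trapdoor rather than from $\Pi'$, which is only sound and not an argument of knowledge), so no substantive difference or gap.
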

The proof can be found in Appendix~\ref{app:simulation_malicious_alice_oqfe}.

\begin{remark} We emphasize that by instantiating our \oqfe{} protocol with the zero-knowledge proof of knowledge system of \cite{ananth2020concurrent} together with the statistical binding post-quantum hiding commitment scheme of \cite{BDLOP18}, our Protocol~\ref{protocol:oqfe_protocol} is simulation-based secure against \textit{unbounded} Alice.
\end{remark}

\section{Fully-simulatable (Black-Box) \texorpdfstring{\qtwopc{}}{} over classical channel} \label{sec:full_simulation_two_pc}

In this section we will describe a general compiler, $C$, that allows transforming a \qtwopc{} protocol achieving one-sided simulation security into a \qtwopc{} protocol achieving full-simulation security, under the assumption that Bob has a classical description of his input. To construct $C$ we will make use of the following primitives:
\begin{enumerate}
    \item The \oqfe{} Protocol~\ref{protocol:oqfe_protocol}, $\pi_{\sf{Q2PC}}$, with classical inputs, which we call $\sf{OS-Q2PC}$. Using Theorem~\ref{th:oqfeproof}  we know that the protocol $\sf{OS-Q2PC}$ securely computes $\mathcal{F}_{\sf{CC}}$ with one-sided simulation.
    \item Post-quantum Zero-Knowledge post-quantum proof of knowledge $(P_C^{(1)}, V_C^{(1)})$ for $NP$ relation $Rel_C$, which we call \zkpok. 
    \item Classical Zero-Knowledge $(P_Q^{(2)}, V_C^{(2)})$ for \QMA{} relation $Rel_Q$, denoted $\sf{ZK}$.
     \item Classical Commitment scheme, post-quantum hiding and statistical binding, which we call $\sf{PQCOM} = (Com, Dec)$;
\end{enumerate}

The starting point is our Protocol~\ref{protocol:oqfe_protocol}, achieving simulation security against Alice and privacy against Bob in the black-box setting. We will make use of Protocol~\ref{protocol:oqfe_protocol} in a black-box manner. \\

\begin{breakablealgorithm}
\caption{Compiler $C$ for achieving full-simulation \qtwopc{}}
\label{protocol:full_sim_qtwopc}
\vspace{\baselineskip}
\noindent \textbf{Private Inputs:}
\begin{enumerate}
    \item Bob:  $\ket{\psi_{in}}$ with classical description $y_{\ket{\psi}}$; 
    \item Alice: $x \in \{0, 1\}^n$;
\end{enumerate}
\noindent \textbf{Protocol Steps:}
\begin{enumerate}
    \item Bob computes $Com(y_{\ket{\psi}}) \rightarrow (com_{y}, dec_{y})$ and sends $com_y$ to Alice;
    \item Bob and Alice will run $(P_C^{(1)}, V_C^{(1)})$ for the \NP{} relation:
        $$ Rel_C = \{ x = com_{y} , w = (dec_{y}, y_{\ket{\psi}}) \text{ such that $w$ is the decommitment of $x$} \}  $$
    \item Alice and Bob run the one-sided simulation protocol $\sf{OS-Q2PC}$ on Alice's input $x$ and Bob's input  $\ket{\psi_{in}}$. We denote the messages sent by Bob to Alice by $\{m_i\}_{i = 1}^N$;
    \item Bob and Alice will run the post-quantum zero-knowledge $(P_Q^{(2)}, V_C^{(2)})$, for Bob to prove to Alice that for every $i \in [N]$, the message $m_i$ was computed correctly according to the $\sf{OS-Q2PC}$ on Alice's input $x$ and Bob's quantum input corresponding to the value committed in $com_y$;
\end{enumerate}

\end{breakablealgorithm}

\begin{theorem} Protocol~\ref{protocol:full_sim_qtwopc} is a secure black-box \qtwopc{} (as defined in Def.~\ref{def:Qtwopc}), assuming that Bob has a classical description of his input.
\label{thm:full_sim_qtwopc}
\end{theorem}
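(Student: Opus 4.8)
The plan is to prove security against a malicious Alice and against a malicious Bob separately, in each case by exhibiting a black-box simulator as required by Definition~\ref{def:Qtwopc}. The high-level structure mirrors standard ``commit-then-prove'' compilers from the classical two-party computation literature (e.g.\ \cite{DBLP:journals/joc/LindellP09}), but we must be careful that all rewinding/extraction is done in the quantum-polynomial-time sense and that the extractors invoked are exactly those guaranteed by the primitives listed (post-quantum PoK, classical ZK for \QMA, post-quantum hiding/statistically binding commitment, and the one-sided simulator of $\sf{OS\text{-}Q2PC}$ from Theorem~\ref{th:oqfeproof}).

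\textbf{Malicious Bob.} Here we need to build $\Sim_\bob$ that extracts Bob's quantum input and queries $\mathcal{F}_{\sf{CC}}$. First I would have $\Sim_\bob$ run the honest-Alice code up through Step~2, and then invoke the post-quantum proof-of-knowledge extractor of $(P_C^{(1)}, V_C^{(1)})$ against the malicious Bob to recover a witness $(dec_y, y_{\ket\psi})$ for $Rel_C$, i.e.\ the classical description of Bob's committed input; by statistical binding this description is (except with negligible probability) the value that will be consistent with the rest of the protocol. Having $y_{\ket\psi}$, $\Sim_\bob$ sends it to the ideal functionality $\mathcal{F}_{\sf{CC}}$ and receives Bob's output share. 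It then needs to complete Steps~3 and~4 convincingly: for Step~3 it simulates Alice's messages of $\sf{OS\text{-}Q2PC}$ — but note Alice receives the output in $\sf{OS\text{-}Q2PC}$, so here $\Sim_\bob$ simply plays honest Alice with input $x$ drawn according to... no: $\Sim_\bob$ does not know $x$. The resolution is that in the ideal execution against corrupted Bob it is Alice's input that must be hidden, and the one-sided guarantee (Equation~\ref{eq:gen_bob_sec_ours}) says Bob's view in $\sf{OS\text{-}Q2PC}$ is computationally independent of $x$, so $\Sim_\bob$ may run $\sf{OS\text{-}Q2PC}$'s Alice-messages on a dummy input $0^n$. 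Finally, for Step~4, $\Sim_\bob$ acts as the honest verifier $V_C^{(2)}$ of the classical ZK proof; if Bob's proof is accepting then by statistical soundness of $\sf{ZK}$ the transcript $\{m_i\}$ was indeed consistent with the committed input, so the output $\Sim_\bob$ already submitted to $\mathcal{F}_{\sf{CC}}$ matches the real one, and if it rejects $\Sim_\bob$ sends $\abort$. Indistinguishability of real and ideal then follows by a hybrid argument: swap the real extractor transcript for the simulated one (PoK + statistical binding), swap $x$ for $0^n$ inside $\sf{OS\text{-}Q2PC}$ (one-sided privacy, Equation~\ref{eq:gen_bob_sec_ours}), and use soundness of $\sf{ZK}$ to argue the output is correct.

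\textbf{Malicious Alice.} Here $\Sim_\alice$ must extract Alice's classical input $x$. The plan is: $\Sim_\alice$ plays honest Bob through Steps~1–2 (committing to a dummy value, e.g.\ $Com(0)$, since by post-quantum hiding Alice cannot tell), and relies on the one-sided \emph{simulation} guarantee of $\sf{OS\text{-}Q2PC}$ against malicious Alice (Equation~\ref{eq:gen_alice_sec_ours} / Theorem~\ref{th:oqfeproof}): there is a simulator for $\sf{OS\text{-}Q2PC}$ that, given black-box access to malicious Alice, extracts her input $x$ and forces the output received by Alice to be whatever the ideal functionality returns. $\Sim_\alice$ invokes that inner simulator during Step~3, thereby learning $x$, forwards $x$ to $\mathcal{F}_{\sf{CC}}$, obtains Alice's output share $y_A$, and has the inner simulator deliver $y_A$ as Alice's output of $\sf{OS\text{-}Q2PC}$. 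For Step~4 $\Sim_\alice$ runs the classical zero-knowledge simulator of $(P_Q^{(2)}, V_C^{(2)})$ to produce an accepting transcript without a witness (crucially, this ZK simulator is needed precisely because Bob's committed value was a dummy, so the statement is false). Indistinguishability is again a hybrid argument: replace Bob's real commitment by $Com(0)$ (post-quantum hiding), replace the real $\sf{OS\text{-}Q2PC}$ execution by its one-sided simulator (Theorem~\ref{th:oqfeproof}), and replace the real $\sf{ZK}$ proof by the ZK simulator (zero-knowledge property of $\sf{ZK}$). One must also check that the PoK of Step~2 run by honest Bob-as-prover remains sound/benign here — it is, since honest Bob simply proves knowledge of the decommitment of the commitment he actually sent.

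\textbf{Main obstacle.} The delicate point is composing the quantum extractors and simulators in the right order while keeping everything black-box and quantum-polynomial-time: the Step~2 PoK extraction against malicious Bob must be interleaved with (not destroy) the later $\sf{OS\text{-}Q2PC}$ messages, and in the malicious-Alice case we are nesting the $\sf{OS\text{-}Q2PC}$ simulator inside $\Sim_\alice$ while simultaneously running the $\sf{ZK}$ simulator — one has to argue these rewinding procedures do not interfere (e.g.\ the ZK proof of Step~4 comes strictly after Step~3, so the simulators act on disjoint message blocks) and that the combined running time is still polynomial. A secondary subtlety is that $\sf{ZK}$ is a classical zero-knowledge argument for a \QMA\ statement whose witness is Bob's quantum input; we need that statistical/computational soundness suffices to guarantee correctness of the delivered output in the malicious-Bob case, and that the ZK simulator works against a quantum malicious verifier Alice — both are exactly the properties asserted for the $\sf{ZK}$ primitive in the enumerated assumptions, so no new machinery is required, only careful bookkeeping in the hybrid sequence.
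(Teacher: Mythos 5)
Your proposal is correct and follows essentially the same route as the paper: against malicious Bob you build the same simulator (extract $(dec_y, y_{\ket{\psi}})$ via the post-quantum PoK extractor, rely on statistical binding, run $\sf{OS\text{-}Q2PC}$ as Alice on the dummy input $0$, query $\mathcal{F}_{\sf{CC}}$, and use soundness of the \QMA{} zero-knowledge argument to tie the delivered output to the committed input), and against malicious Alice you invoke the one-sided simulator of Theorem~\ref{th:oqfeproof} together with the hiding of the commitment and the zero-knowledge property of the added proofs. Your malicious-Alice simulator (dummy commitment plus ZK simulation of Step~4) merely spells out what the paper compresses into the remark that Alice-side security is preserved under sequential composition of the two zero-knowledge protocols.
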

The proof can be found in Appendix~\ref{app:theorem_full_sim}.

\paragraph{Instantiation}. As before we will instantiate $\sf{PQCOM}$ with the scheme of \cite{BDLOP18} and $(P_C^{(1)}, V_C^{(1)})$ with the construction of \cite{ananth2020concurrent}. For $(P_Q^{(2)}, V_C^{(2)})$ we can use the construction of \cite{vidick_zhang_ZK_20}. \\

\section{\texorpdfstring{\zkpoqk{}}{} Compiler} \label{sec:compiler_zkpoqk}

In this section, we will present a general compiler that constructs post-quantum zero-knowledge classical proof of quantum knowledge for \QMA{} (\zkpoqk) from simpler primitives: classical proof of quantum knowledge, post-quantum commitment scheme, post-quantum zero-knowledge proof of classical knowledge, and post-quantum private-key encryption scheme.  Then, we will show how to instantiate the required primitives for this compiler and in Section~\ref{sec:full_simulation_two_pc} we will show how to use the resulting \zkpoqk{} to obtain full-simulation secure \qtwopc{} over a classical channel.
But first, we will proceed with some definitions required for the construction of our compiler.

\begin{definition}[\mesind] We say a Proof of (Quantum) Knowledge $(P, V)$ satisfies the \mesind{} property if the messages computed by $V$ are independent of the messages received from $P$. 
\end{definition}
Next we propose the compiler for constructing \zkpoqk. This relies on the following primitives:

\begin{itemize}
    \item Classical Proof of Quantum Knowledge $(P_Q^{(1)}, V_C^{(1)})$ for \QMA{} relation $Rel_Q$, denoted \poqk{}, which also must satisfy the \mesind{} property; 
    \item Classical Commitment scheme, post-quantum hiding and post-quantum binding, which we call $\sf{PQCOM} = (Com, Dec)$;
    \item Classical post-quantum Zero-Knowledge post-quantum proof of (classical) knowledge $(P_C^{(2)}, V_C^{(2)})$ for \NP{} relation $Rel_C$, which we call \zkpok{};
    \item Post-quantum private-key encryption scheme $PQE = (Gen, Enc, Dec)$
\end{itemize}

Then, we can achieve the \zkpoqk{} $(P_Q, V_C)$ for the \QMA{} relation $Rel_Q$ using the following compiler: \\

\begin{breakablealgorithm}
\caption{ \zkpoqk{} Protocol, $(P_Q, V_C)$ for the \QMA{} relation $Rel_Q$}
\label{protocol:zkcpoqk}
\begin{enumerate}
    \item $P_Q$ runs $Gen(1^\lambda)$ and obtains the secret key $sk$ for the $PQE$;
    \item $P_Q$ computes $Com(sk) \rightarrow (com_{sk}, dec_{sk})$ and sends $com_{sk}$ to $V_C$; 
    \item $P_Q$ and $V_C$ will run the $(P_C^{(2)}, V_C^{(2)})$ for the \NP{} relation:
        $$ Rel_C = \{ x = com_{sk} , w = (dec_{sk}, sk) \text{ such that $w$ is the decommitment of $x$} \}  $$
    \item Consider the classical proof of quantum knowledge $(P_Q^{(1)}, V_C^{(1)})$ for $Rel_Q$, and the number of messages they exchange is $N$. Then:
    \item For $i \in [N]$:
    \begin{enumerate}
    \item Let $m_i$ be the message $P_Q^{(1)}$ would have sent to $V_C^{(1)}$. Then, $P_Q$ will compute $Enc(sk, m_i) \rightarrow enc_i$ and will send $enc_i$ to $V_C$;
    \item Let  $m_i'$ be the message that $V_C^{(1)}$ would have sent to $P_Q^{(1)}$, $V_C$ sends exactly the same message $m_i'$ to $P_Q$;
    \end{enumerate}
    \item $P_Q$ and $V_C$ will run a post-quantum zero knowledge $(P_C^{(3)}, V_C^{(3)})$, for $P_Q$ to prove to $V_C$ that for all $i \in [N]$, $enc_i$ were computed using the secret key $sk$, which is committed in $com_{sk}$, and that their decryptions were computed according to the $(P_Q^{(1)}, V_C^{(1)})$ protocol and $V_C^{(1)}$ would have accepted the resulting transcript;
    \item If $V_C^{(1)}$ and $V_C^{(2)}$ accept, then $V_C$ outputs $1$, otherwise outputs $0$.
\end{enumerate}
\end{breakablealgorithm}
\vspace{\baselineskip}

We want to show that the proposed construction Protocol~\ref{protocol:zkcpoqk} has the post-quantum zero-knowledge property and that it will inherit the classical proof of quantum knowledge property of ($P_Q^{(1)}, V_C^{(1)}$).


\begin{theorem} Assume there exists a \mesind{} \poqk{} $(P_Q^{(1)}, V_C^{(1)})$ with knowledge error $\kappa_1$ and a \zkpok{} $(P_C^{(2)}, V_C^{(2)})$ with knowledge error $\kappa_2$. Then there exists a \zkpoqk{} $(P_Q, V_C)$ with knowledge error $\kappa_1 \kappa_2$. Moreover, if $(P_C^{(2)}, V_C^{(2)})$ and $(P_Q^{(1)}, V_C^{(1)})$ are simulatable then $(P_Q, V_C)$ is simulatable.
\label{thm:zkpoqk}
\end{theorem}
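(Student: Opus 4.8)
The plan is to establish the two conclusions of the theorem separately, following the structure of the compiler in Protocol~\ref{protocol:zkcpoqk}: first the \emph{proof of quantum knowledge} property (with the claimed knowledge error $\kappa_1\kappa_2$), and then the \emph{zero-knowledge/simulatability} property. For completeness one would also observe that the correctness of $(P_Q, V_C)$ is inherited immediately from the correctness of $(P_Q^{(1)}, V_C^{(1)})$, the completeness of the two zero-knowledge subprotocols, and the correctness of the encryption scheme. The core of the argument is the extractor; I would build $E$ for $(P_Q, V_C)$ out of the knowledge extractor $E_2$ of the \zkpok{} $(P_C^{(2)}, V_C^{(2)})$ and the quantum extractor $E_1$ of the \poqk{} $(P_Q^{(1)}, V_C^{(1)})$, using the \mesind{} property of the latter as the glue.

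For the extraction direction: suppose a malicious prover $P^\star$ makes $V_C$ accept with probability $\epsilon > \kappa_1\kappa_2$. First, run $E_2$ on the portion of $P^\star$'s behaviour corresponding to step~3 (the \zkpok{} for the opening of $com_{sk}$); since $V_C$ accepting implies in particular that $V_C^{(2)}$ accepted, and since the relevant sub-prover succeeds with probability at least $\epsilon$, the extractor $E_2$ recovers a valid decommitment and hence $sk$ (with success related to $\epsilon$ via the knowledge error $\kappa_2$). Here I would be careful to invoke the \emph{simulatability} of $(P_C^{(2)}, V_C^{(2)})$ — or at worst a rewinding argument — to argue that after running $E_2$ we still have a prover state that makes the \emph{rest} of the protocol accept with good probability; this is the standard subtlety in composing knowledge extractors against quantum provers, and it is precisely why the theorem hypothesizes simulatability when it wants the chained statement. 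Once $sk$ is in hand, I construct an adversarial prover $\tilde P$ against $(P_Q^{(1)}, V_C^{(1)})$ as follows: $\tilde P$ internally runs $P^\star$, decrypts each $enc_i$ using $sk$ to recover the plaintext $m_i$, and forwards $m_i$ to the external verifier $V_C^{(1)}$; the messages $m_i'$ it receives back it feeds to $P^\star$. The key observation making this well-defined is \mesind: $V_C^{(1)}$'s messages do not depend on the prover's messages, so $\tilde P$ can actually supply $P^\star$ with the same $m_i'$ that $V_C^{(1)}$ would have sent, without needing to have already committed to the $enc_i$ — there is no circularity. The soundness of the final zero-knowledge proof in step~6 guarantees that (up to negligible error) the decrypted transcript $(m_i, m_i')$ is one that $V_C^{(1)}$ accepts, so $\tilde P$ succeeds against $(P_Q^{(1)}, V_C^{(1)})$ with probability essentially $\epsilon/\kappa_2$-ish; applying $E_1$ then yields a quantum witness for $Rel_Q$, with overall failure controlled by $\delta$ as a function of the product knowledge error $\kappa_1\kappa_2$. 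I would then bundle $E := E_1 \circ E_2$ (with the bookkeeping above) as the required extractor.

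For the zero-knowledge/simulatability direction: I would build the simulator $\Sim$ for $(P_Q, V_C)$ against a malicious verifier $V^\star$ by a hybrid argument. Start with the real interaction. First switch the step~6 proof to its simulated version, using the zero-knowledge property of $(P_C^{(3)}, V_C^{(3)})$ — now we no longer need to know a \QMA{} witness to produce that transcript. Next switch the step~3 proof to simulated using the zero-knowledge property of $(P_C^{(2)}, V_C^{(2)})$, so we no longer need to actually know $sk$ for that step. Then replace the ciphertexts $enc_i$ (which are currently encryptions of the honest $P_Q^{(1)}$-messages) by encryptions of, say, zeros; indistinguishability of this hybrid follows from the post-quantum CPA security of $PQE$ together with the hiding of $com_{sk}$ (the commitment no longer reveals $sk$ to the reduction, so the encryption challenger's key can be embedded). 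Finally replace $com_{sk}$ by a commitment to $0^\lambda$, using post-quantum hiding. At this point the whole transcript is produced without the witness and without $sk$, so this final hybrid is our simulator $\Sim$; by the chain of indistinguishabilities its output is $\approx_q$ the real view. The simulatability clause — that $(P_Q,V_C)$ inherits the simulatability of the two subprotocols in the sense of Definition (Simulatability) — follows because the state the extractor leaves behind differs from the real post-execution prover state only through these same hybrid steps.

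The step I expect to be the main obstacle is the composition of the two knowledge extractors against a \emph{quantum} malicious prover: running $E_2$ to pull out $sk$ may disturb $P^\star$'s internal quantum state, and one must argue that the residual state is still good enough to extract the quantum witness via $E_1$ afterwards. This is exactly the ``sequential extraction'' difficulty familiar from post-quantum proofs of knowledge, and it is handled by leaning on the simulatability hypothesis (so that $E_2$'s action is indistinguishable from an honest-verifier interaction, leaving the prover essentially undisturbed up to the quantum-indistinguishability error), plus a careful accounting of how the knowledge errors $\kappa_1,\kappa_2$ and the corresponding $\delta$ functions compose. The rest of the argument — the \mesind-enabled construction of $\tilde P$, and the zero-knowledge hybrid chain — is comparatively routine.
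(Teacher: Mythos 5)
Your proposal follows essentially the same route as the paper's proof: extract $sk$ with the \zkpok{} extractor, then build a prover for $(P_Q^{(1)}, V_C^{(1)})$ that decrypts the $enc_i$ (well-defined thanks to \mesind{}) and is kept honest by the soundness of the step-6 zero-knowledge proof, run the \poqk{} extractor on the residual state, and prove zero-knowledge and simulatability by the same hybrid chain over the commitment's hiding, the sub-protocols' zero-knowledge, and the encryption's security. The only notable difference is bookkeeping: the paper obtains the product knowledge error $\kappa_1\kappa_2$ and handles the residual-state issue by directly invoking Unruh's sequential-composition theorem for proofs of knowledge (feeding the first extractor's output state into the second constructed prover), reserving the simulatability hypothesis solely for the ``moreover'' clause, whereas you lean on simulatability already in the extraction step.
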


The proof of Theorem~\ref{thm:zkpoqk} can be found in Appendix~\ref{App_G_proofs_Sec_8}. 









\paragraph{Instantiating \zkpoqk.} In this part we will show how can we instantiate our general compiler and determine the properties/parameters of the resulting \zkpoqk{} construction.
For the \poqk{} we can instantiate the proof of quantum knowledge $(P_Q^{(1)}, V_C^{(1)})$ with the construction of \cite{vidick2020classical} which works for relations from a subset of \QMA, called $\QMA^*$ (see Def.~\ref{def:qma_assump}).  For the post-quantum commitment scheme we can employ the scheme of \cite{BDLOP18}. For the \zkpok{} $(P_C^{(2)}, V_C^{(2)})$ we will employ again the construction of \cite{ananth2020concurrent}. Finally, for the Post-quantum private-key encryption scheme $PQE$ we can use the construction given in \cite{regev2009lattices}.

\paragraph{Acknowledgments}
The authors thank Thomas Vidick and anonymous reviewers for their constructive comments, which helped us to improve the manuscript. MC acknowledges support from H2020 project PRIVILEDGE \#780477. AC acknowledges support from the French National Research Agency through the project ANR-17- CE39-0005 quBIC. EK acknowledges support from the EPSRC Verification of Quantum Technology grant (EP/N003829/1), the EPSRC Hub in Quantum Computing and Simulation (EP/T001062/1), grant FA9550-17-1-0055, and the UK Quantum Technology Hub: NQIT grant (EP/M013243/1) and funding from the EU Flagship Quantum Internet Alliance (QIA) project. AM gratefully acknowledges funding from the AFOSR MURI project ``Scalable Certification of Quantum Computing Devices and Networks’’. This work was partly done while AM was at the University of Edinburgh, UK where it was supported by EPSRC Verification of Quantum Technology grant (EP/N003829/1).

\bibliographystyle{abbrv}
\bibliography{ref}




\newpage
\begin{center}
  \LARGE Supplementary Material  
\end{center}
\appendix


\section{Computational Complexity Classes and Cryptographic Primitives} \label{app:cp}
 \label{app:ct}

We provide some well-known relations and the class of languages associated with it.  Classically, a relation over finite sets $\{0,1\}^{*} \times \{0,1\}^{*}$ is a subset $R\subseteq \{0,1\}^{*} \times \{0,1\}^{*}$, and the language associated with R is $L_R=\{x: \exists y : (x,y) \in R\}$. 

 \begin{definition}[$\NP$] 
 The class $\NP$ consists of all languages $L \subseteq \{0,1\}^{*}$ for which there  exists a uniformly generated family of classical, deterministic, poly-size circuits $\{V_{x}: x\in\{0,1\}^{*}\}$ and a polynomial $m$, such that the following holds:
 \begin{enumerate}
     \item (Completeness) For all $x \in L$ there exists an $m(|x|)$-bit witness $w$ such that $V_x(w) = 1$ 
     \item (Soundness) For all $x \notin L$ and for all $m(|x|)$-bit witness $w$, $V_x(w) = 0$. 
 \end{enumerate}
 \end{definition}
 
 \begin{definition}[$\MA$]
 The class $\MA$ consists of all languages $L \subseteq \{0,1\}^{*}$ for which there  exists a uniformly generated family of classical, randomized, poly-size circuits $\{V_{x}: x\in\{0,1\}^{*}\}$ and a polynomial $m$, such that the following holds:
 \begin{enumerate}
     \item (Completeness) For all $x \in L$ there exists an $m(|x|)$-bit witness $w$ such that $\Pr(V_x(w) = 1) \geq 2/3$ 
     \item (Soundness) For all $x \notin L$ and for all $m(|x|)$-bit witness $w$, $\Pr(V_x(w) = 0) \geq 1/3$. 
 \end{enumerate}
 \end{definition}

\begin{definition}[$\MA$-relation]
A relation R is an MA-relation if there is a $\ppt$ Verifier V such that:
\begin{enumerate}
    \item (Completeness) $(x,w)\in L_R \implies \Pr [V_{|x|}(x,w) = 1] \geq 2/3 $
    \item (Soundness) $x \notin L_R \implies \Pr [V_{|x|}(x,w) = 1] \leq 1/3.$
\end{enumerate}
where $ V = \{V_n\} $ are the uniformly generated family of circuits. 
\end{definition}

In the quantum case we replace the ``witness'' $w$ (the second argument) with a quantum state $\ket{\psi}$ and define the class \QMA{} with  polynomial-size quantum circuits $Q=\{Q_n\}_{n\in \N}$ such that for every $n$, $Q_n$ takes as input a string $x\in\{0,1\}^n$ and a quantum state $\sigma$ on $p(n)$ qubits (for some polynomial $p(n)$) and returns a single bit as output. 

 \begin{definition}[\QMA]
 \label{def:qma}
 The class \QMA{} consists of all languages $L \subseteq \{0,1\}^{*}$ for which there  exists a uniformly generated family of quantum poly-size circuits $\{Q_{x}: x\in\{0,1\}^{*}\}$ and a polynomial $m$, $p$ where each $V_x$ has $m(|x|)$ input qubits, $k(|x|)$ auxiliary qubits and its output is given by the first output qubit such that the following holds:
 \begin{enumerate}
     \item (Completeness) For all $x \in L$ there exists an $m(|x|)$-qubit witness $\ket{\psi}$ such that $\Pr(V_x \text{ accepts } \ket{\psi}) \geq 2/3$ 
     \item (Soundness) For all $x \notin L$ and for all $m(|x|)$-qubit witness $\ket{\psi}$, $\Pr(V_x \text{accepts} \ket{\psi}) \leq 2/3$.
 \end{enumerate}
 \end{definition}
Note that the completeness and soundness can be amplified to $1 - 2^{-poly(|x|)}$ and $2^{-poly(|x|)}$, respectively.

\begin{definition}[\QMA-relation]
  \label{def:qma-relation}
  Let $A$ be a problem in \QMA{} (See Definition~\ref{def:qma}), and let $Q$ be a $\qpt$ verifier, with completeness $\alpha$ and soundness $\beta$. Then, we say that $R_{Q, \gamma}$ is a \QMA-relation such that the following holds
    \begin{enumerate}
    \item (Completeness) $(x,\ket{\psi}) \in R_{Q,\alpha} \implies \Pr [Q_{|x|}(x,\ket{\psi}) = 1] \geq \alpha $
    \item (Soundness)$(x,\rho) \not\in R_{Q,\beta} \implies \Pr [Q_{|x|}(x,\rho) = 1] < \beta.$
\end{enumerate}
  \end{definition}
  
\begin{definition}[$\QMA^*$, \cite{vidick2020classical}]
\label{def:qma_assump}
Let $(Q,\alpha,\beta)$ be a \QMA{} relation. We define $\QMA^*$ as a relation that satisfies the following properties: 
\begin{enumerate}
\item The completeness parameter $\alpha$ is negligibly close to $1$, and the soundness parameter $\beta$ is bounded away from $1$ by an inverse polynomial. 
\item For any $x\in \{0,1\}^n$ there is a local Hamiltonian $H=H_x$ that is efficiently constructible from $x$ and satisfies the following. First, we assume that $H$ is expressed as a linear combination of tensor products of Pauli operators with real coefficients chosen such that $-\mathrm{Id} \leq H\leq \mathrm{Id}$. Second, whenever there is $\sigma$ such that $(x,\sigma)\in R_{Q,\alpha}$, then  $\Tr(H\sigma)$ is negligibly close to $-1$ and moreover any $\sigma$ such that  $\Tr(H\sigma)\leq -1 + \delta$ satisfies $\Pr(Q_{|x|}(x,\sigma)=1)\geq 1-r(|x|)q(\delta)$ for some polynomials $q,r$ depending on the relation only. Third, whenever $x\in N_{Q,\beta}$ then the smallest eigenvalue of $H$ is larger than $-1+1/s(|x|)$, where $s$ is another polynomial depending on the relation only. 
	\end{enumerate}
\end{definition}

\paragraph{Interactive quantum machines~\cite{unruh2012quantum,coladangelo2020non}.} An \textit{interactive quantum machine} is a machine $M$ with two registers: a register $\textsf{S}$ for its internal state, and a register $\textsf{N}$ for sending and receiving messages (the network register). Upon activation, $M$ expects in $\textsf{N}$ a message, and in $\textsf{S}$ the state at the end of the previous activation. At the end of the current activation, $\textsf{N}$ contains the outgoing message of $M$, and $\textsf{S}$ contains the new internal state of $M$. A machine $M$ gets as input: a security parameter $\mu \in \mathbb{N}$, a classical input $x \in \{0,1\}^*$, and quantum input $\ket{\Phi}$, which is stored in $\textsf{S}$. Formally, machine $M$ is specified by a family of circuits $\{M_{\mu x}\}_{\mu \in \mathbb{N}, x \in \{0,1\}^*}$, and a family of integers $\{r_{\mu x}\}_{\mu \in \mathbb{N}, x \in \{0,1\}^*}$. $M_{\mu x}$ is the quantum circuit that $M$ performs on the registers $\textsf{S}$ and $\textsf{N}$ upon invocation. $r_{\mu x}$ determines the total number of messages/invocations. We might omit writing the security parameter when it is clear from the context. We say that $M$ is \textit{quantum-polynomial-time} (QPT) if the circuit $M_{\mu x}$ has polynomial size in $\mu + |x|$, the description of the circuit is computable in deterministic polynomial time in $\mu + |x|$ given $\mu$ and $x$, and $r_{\mu, x}$ is polynomially bounded in $\mu$ and $x$.

Usually, both these registers are assumed to be quantum but for this work, we require the register $\textsf{N}$ to be strictly classical. In this work, we model one of the parties (Bob) as a quantum interactive machine while the other party (Alice) is only required to perform classical (stochastic) operations. We will denote the interactive machines for Alice and Bob as $\alice$ and $\bob$, respectively, with internal registers $\textsf{S}$ and $\textsf{S}'$  and the classical network register as $\textsf{N} $ and $\textsf{N}'$, respectively.  Finally, we assume that Alice (completely classical party) sends the first message as well as receives the last message.

\begin{definition}[k-regular]\label{def:k_regular} A deterministic function $f \colon \mathcal{D} \rightarrow \mathcal{R}$ is \textbf{k-regular} if $ \, \, \forall y \in \Ima f$, we have ${|f^{-1}(y)| = k}$.
\end{definition}

\begin{definition}[Trapdoor One-Way Function]\label{def:trapdoor_function} 
  A family of functions $\{f_k : \mathcal{D} \rightarrow \mathcal{R} \}$
   is a \textbf{trapdoor function} if:
\begin{itemize}
\item There exists a PPT algorithm {\tt Gen} which on input $1^n$ outputs $(k, t_k)$, where $k$ represents the index of the function.
\item $\{f_k : \mathcal{D} \rightarrow \mathcal{R}\}_{k \in \mathcal{K}}$ is a family of one-way functions, namely:
\begin{itemize}
\item There exists a PPT algorithm that can compute $f_k(x)$ for any index $k$, outcome of the PPT parameter-generation algorithm \text{Gen} and any input $x \in \mathcal{D}$;
\item Any QPT algorithm $\cA$ can invert $f_k$ with at most negligible probability over the choice of $k$: \\
  $ \underset{\substack{k \leftarrow Gen(1^n) \\  x \leftarrow \mathcal{D} \\ rc \leftarrow \{0, 1\}^{*}}} \Pr [f(\mathcal{A}(k, f_k(x)) = f_k^{-1}(x)] \leq \negl$\\
  where $rc$ represents the randomness used by $\mathcal{A}$
\end{itemize}
\item There exists a PPT algorithm {\tt Inv}, which on input $t_k$ (which is called the trapdoor information) output by {\tt Gen}($1^n$) and $y = f_k(x)$ can invert $y$ (by returning all preimages of $y$\footnote{While in the standard definition of trapdoor functions it suffices for the inversion algorithm {\tt Inv} to return one of the preimages of any output of the function, in our case we require a two-regular trapdoor function where the inversion procedure returns both preimages for any function output.})
with overwhelming probability over the choice of $(k, t_k)$ and uniform choice of $x$.
\end{itemize}
\end{definition}

\textbf{Instantiation.} A trapdoor one-way function can be instantiated from the construction of \cite{MP12} and a 2-regular variant can be found in \cite{cojocaru2019qfactory}.

\begin{definition}[Hardcore Predicate]\label{def:hardcore_predicate}
  A function $hc \colon \mathcal{D} \rightarrow \{0, 1\}$ is a \textbf{hardcore predicate} for a function $f$ if:
  \begin{itemize}
\item There exists a PPT algorithm that, for any input $x$, can compute $hc(x)$;
\item Any QPT algorithm $\mathcal{A}$ when given $f(x)$, can compute $hc(x)$ with negligible better than $1/2$ probability: \\
$ \underset{\substack{x \leftarrow \mathcal{D}(n) \\ rc \leftarrow \{0, 1\}^{*}}} \Pr [\mathcal{A}(f(x), 1^n) = hc(x)] \leq \frac{1}{2} + \negl$, where $rc$ is the randomness used by $\mathcal{A}$;
\end{itemize}
\end{definition}

\begin{definition}[Commitment Scheme] $CS = (Sen, Rec)$ is a 2-phase protocol between 2 polynomial-time interactive algorithms: sender $Sen$ and receiver $Rec$. In the \textit{commitment phase} $Sen$ with input $m$ interacts with $Rec$ to produce a commitment $com$ and the private output $d$ of $Sen$.
\begin{enumerate}
    \item \textbf{Correctness}: On the \textit{decommitment phase}, $Rec$ on input $m$ and $d$ accepts $m$ as decommitment of $com$.
    \item \textbf{Computational (post-quantum) Hiding}: For any QPT adversary $Rec^*$ interacting with $Sen$, the probability distributions describing the output of $Rec^*$: $\{\langle Sen(0), Rec^* \rangle \}$ and $\{\langle Sen(1), Rec^* \rangle \}$ are computationally indistinguishable.
    \item \textbf{Statistical Binding}: For any commitment $com$ generated during the commitment phase by a malicious unbounded sender $Sen^*$, there exists negligible $negl$ such that $Sen^*$ with probability at most $negl$ outputs 2 decommitments $(m_0, d_0)$ and $(m_1, d_1)$ with $m_0 \neq m_1$ such that $Rec$ accepts both decommitments.
\end{enumerate}
For \textbf{non-interactive commitment scheme} $(Com, Dec)$ we use the notation:
\begin{enumerate}
    \item Commitment phase: $Com(m) \rightarrow (com, dec)$, where $com$ is the commitment of the message $m$ and $dec$ is the corresponding decommitment information.
    \item Decommitment phase: $Dec(com, dec, m) = 1$
\end{enumerate}
\end{definition}

\textbf{Instantiation.} A post-quantum hiding, statistical binding commitment scheme can be instantiated from the scheme proposed in \cite{BDLOP18}.

\begin{definition}[Proof/Argument system] A pair of \ppt\ interactive algorithms $\Pi=(\prover,\verifier)$ constitutes a {\em proof system} (resp., an {\em argument system}) for an \NP-language $L$, if the following condition holds:
\begin{description}
	\item[Completeness:] For every $x\in L$ and $w$ such that $(x,w)\in\Rel_\mathsf{L}$, it holds that:
	\[
		\Pr[\langle \prover(w), \verifier \rangle (x) = 1]=1.
	\]
	\item  [Soundness:] For every interactive (resp., PPT interactive) algorithm $\prover^{\star}$, 
there exists a negligible function $\nu$ such that for every  $x \notin L$ and every $z$:
$$\Pr[\langle \prover^{\star}(z), \verifier \rangle (x) =1]< \negl.
$$
	\end{description}
\end{definition}

\begin{definition}[Proof of Knowledge\footnote{This definition can be easily extended to the CRS model. In that case the extractor would have the additional power of programming the CRS. We make this algorithm explicit when it is required for our constructions.}]\label{def:pok}
A pair $(\Prov,\Ver)$ of $\ppt$ interactive machines is a {\em proof of knowledge with  knowledge error} $k(\cdot)$ for polynomial-time relation $\Rel$ if the following properties hold:
\begin{itemize}[topsep=0pt]
\item {\em Completeness:}
for every $(x,w)\in\Rel$, it holds that $$\prob{\langle\Prov(w),\Ver\rangle(x)=1}=1-\mathsf{negl}(|x|).$$
\item \emph{Knowledge Soundness:} there exists a probabilistic oracle machine $\Extract$, called the {\em extractor}, running in expected probabilistic polynomial time, such that for every interactive machine $\proverstar$ and for every input $x$ accepted by $\Ver$ when interacting with $\proverstar$ with probability  $\epsilon(x) > k(x)$, we have
\[\Pr\left(  \left((x, w ) \in R \right): w \leftarrow \Extract^\proverstar(x) \right) \geq p\left(\epsilon(x)-k(x), \frac{1}{\poly(|x|)}\right). \]
\end{itemize}
\end{definition}

The notion of an argument of knowledge is essentially the same but it requires the knowledge soundness property to hold against PPT adversaries and for a sufficiently long input~\cite{DBLP:conf/crypto/BellareG92}.

\noindent Recently, proof systems have also been extended to \QMA-relations in~\cite{broadbent2019zero,coladangelo2020non}. The main difference from Quantum Proofs of (classical) Knowledge is that in the case of \QMA{} relations, the notion of a witness is in a different manner than $\NP$ relations. For any $0\leq \gamma \leq 1$, a quantum relation is defined as follows:
\[R_{Q,\gamma} = \{(x,\sigma) :  Q\text{ accepts } (x, \sigma) \text{ with probability at least } \gamma\}.
\]
The parameter $\gamma$ quantifies the expected success probability for the verifier and roughly speaking, $\gamma$ is a measure of the ``quality'' of a witness $\ket{\psi}$ (or a mixture thereof, as represented by the density matrix $\sigma$) that is sufficient for the witness to be acceptable for the relation $R$. 

\begin{definition}[Proof of Quantum Knowledge~\cite{broadbent2019zero,coladangelo2020non}]
  \label{def:poqk}
  Let $R_{Q,\gamma}$ be a \QMA{} relation.
  A proof system $(P,V)$ is a Proof of Quantum Knowledge for $R_{Q,\gamma}$ with knowledge error $\kappa(n) > 0$ and quality $q$, if there exists a polynomial $p > 0$ and a  polynomial-time machine $\Extract$ such that for any quantum interactive machine $P^*$ that makes $V$ accept some instance $x$ of size $n$ with probability at least $\eps > \kappa(n)$, we have:
  \[Pr\left[ \left( (x, \sigma) \in R_{Q,q(\eps,\frac{1}{n})} \right): \sigma \leftarrow \Extract^{\ket{P^*(x,\rho)}}(x) \right]  \geq
  p\left(\eps-\kappa(n),\frac{1}{n}\right).
  \]
\end{definition}

Next, we describe generalised proofs of quantum knowledge (or quantum agree-and-prove (AaP) schemes). In an AaP protocol, the auxiliary inputs that the prover and the verifier receive before they begin interacting are captured using the  \emph{input generation algorithm}. Informally, the input generation algorithm models `prior knowledge' which the prover and the verifier may possess.
 \begin{definition}[Input Generation Algorithm,~\cite{vidick2020classical}]
\label{def:input-gen}
An input generation algorithm $I$ for an agree-and-prove scenario $\mathcal{S}$ is a machine $I$ taking a unary encoding of the security parameter $\lambda$ as input and producing a CQ state $\rho_{\reg{AUX}_V \reg{AUX}_P}$ specifying the auxiliary inputs for the verifier (in the classical register $\reg{AUX}_V$) and prover (in the quantum register $\reg{AUX}_P$) respectively as output. We may use the shorthand $\rho_{\reg{AUX}_P} \equiv \Tr_{\reg{AUX}_V}\big( \rho_{\reg{AUX}_V \reg{AUX}_P} \big)$ and $\rho_{\reg{AUX}_V} \equiv \Tr_{\reg{AUX}_P}\big( \rho_{\reg{AUX}_V \reg{AUX}_P} \big)$.
\end{definition}

 \begin{definition}[Agree-and-Prove Scenario for quantum relations,~\cite{badertscher2019agree,vidick2020classical}]
An agree-and-prove (AaP) scenario for quantum relations is a triple $(\cF,\cR,\cC)$ of interactive oracle machines satisfying the following conditions: 
\begin{enumerate}
\item The \textbf{setup functionality} $\cF$ is a QPT ITM taking a unary encoding of a security parameter $\lambda$ as input. The ITM $\cF$ runs an initialization procedure \texttt{init}, and in addition returns the specification of an oracle (which we also model as an ITM) $\cO_\cF(i,\texttt{q},arg)$. The oracle function takes three arguments: $i\in \{I,P,V\}$ denotes a `role', $\texttt{q}$ denotes a keyword specifying a query type, and $arg$ denotes the argument for the query. There are three different options for the `role' parameter, which exists to allow $\cF$ to release information selectively depending on the party asking for it. The roles $I$, $P$ and $V$ correspond respectively to the \emph{input generator} (Definition \ref{def:input-gen}), the prover, and the verifier.

\item The \textbf{agreement relation} $\cC$ is a QPT oracle machine taking a unary encoding of the security parameter $\lambda$ and a statement as inputs, and producing a decision bit as output.

\item The \textbf{proof relation} $\cR$ is a QPT oracle machine taking a unary encoding of the security parameter $\lambda$, a (classical) statement $x$ and a (quantum) witness $\rho_{W}$ as inputs, and outputting a decision bit.
\end{enumerate}
\label{def:AaP}
\end{definition}

One can model proofs and arguments of knowledge using the AaP scenarios, where the agreement phases are trivial and \NP{} or a \QMA{} relation are the proof relation. 

 \begin{definition}[Nondestructive interaction,~\cite{vidick2020classical}]
\label{def:nondestructive-interaction}
Let $P = (\{P_{\lambda x}\}, \{r^{P}_{\lambda x}\})$ be an interactive quantum machine, and let $V = (p, \{V_{\lambda x u}\}, \{r^V_{\lambda x}\})$ be an interactive classical machine. Fix a security parameter $\lambda$. A \emph{nondestructive interaction} $\left (V(x) , P(x') \right )_{\rho_{\reg{AB}}}$ between $V$ and $P$ for some CQ state $\rho_{\reg{TS}}$ is an interaction in which the execution of $\left (V(x) , P(x') \right )_{\rho_{\reg{AB}}}$ is unitary (including the standard-basis measurements of the network register that take place during the execution) for all possible random inputs $u$ to $V$. More formally, for any choice of $r^V_{\lambda x}$ random strings $u_1, \dots, u_{r^V_{\lambda x}}$ used during the interaction $\left (V(x) , P(x') \right )_{\rho_{\reg{TS}}}$, there exists a unitary $U$ acting on registers (communication spaces) $\reg{N}$, $\reg{A}$ and $\reg{B}$ such that the joint state of the registers $\reg{N}$, $\reg{A}$ and $\reg{B}$ is identical after $U$ has been applied to them to their joint state after the execution of $\left (V(x) , P(x') \right )_{\rho_{\reg{AB}}}$ using the random strings $u_1, \dots, u_{r^V_{\lambda x}}$.

\end{definition}

\begin{definition}[Quantum Money Scheme, \cite{vidick2020classical}]
\label{def:qmoney}
A quantum money scheme is specified by the following objects, each of which is parametrized by a security parameter $\lambda$:
\begin{itemize}
\item A algorithm $\mathtt{Bank}$ taking a string $r$ as a parameter which initialises a database of valid \emph{money bills} in the form of a table of tuples $(\mathsf{id, public, secret}, \ket{\$}_{\mathsf{id}})$. \textsf{id} represents a unique identifier for a particular money bill; \textsf{public} and \textsf{secret} represent, respectively, public and secret information that may be necessary to run the verification procedure for the bill labeled by \textsf{id}; and $\ket{\$}_{\mathsf{id}}$ is the quantum money state associated with the identifier \textsf{id}. The string $r$ should determine a classical map $H_r$ such that $H_r(\mathsf{id}) = (\mathsf{public, secret})$.

\item A verification procedure $\texttt{Ver}(x, \textsf{public}, \textsf{secret}, \rho_W)$ that is a QPT algorithm which decides when a bill is valid.
\end{itemize}
In addition the scheme should satisfy the following conditions:
\begin{enumerate}
\item Completeness: for any valid money bill $(\mathsf{id, public, secret}, \ket{\$}_{\mathsf{id}})$ in the database created by $\mathtt{Bank}$,
\[\Pr\big( \texttt{Ver}(\reg{id}, \textsf{public}, \textsf{secret}, \proj{\$}_{\mathsf{id}})\big) \,\geq \, c_M(\lambda)\;,\]
for some function $c_M(\cdot)$.
We refer to $c_M$ as the \emph{completeness parameter} of the money scheme. 
\item No-cloning: 
Consider the following game played between a challenger and an adversary: the challenger selects a valid money bill $(\mathsf{id, public, secret}, \ket{\$}_{\mathsf{id}})$ and sends $(\mathsf{id, public,} \ket{\$}_{\mathsf{id}})$ to the adversary; the adversary produces a state $\sigma_{AB}$. Then for 
any~\footnote{Many quantum money schemes are information-theoretically secure; however, it is also possible to consider computationally secure schemes by replacing `any' with `any QPT'.} adversary in this game, the following holds  \begin{equation*}
\underset{r}{\mathrm{Pr}} \big(
\texttt{Ver}(\reg{id}, \textsf{public}, \textsf{secret}, \Tr_B(\sigma_{AB})) = 1
\end{equation*}
\begin{equation*}
 \texttt{Ver}(\reg{id}, \textsf{public}, \textsf{secret}, \Tr_A(\sigma_{AB})) = 1
\big) \,\leq\, \mu_M(\lambda)\;,     
\end{equation*}
for some function $\mu_M(\cdot)$. 
We refer to $\mu_M$ as the \emph{cloning parameter} of the money scheme. Note that the probability of the adversary's success is calculated assuming that the string $r$ which \texttt{Bank} takes is chosen uniformly at random.
\end{enumerate}
\end{definition}

\begin{definition}[Agree-and-prove scenario for Quantum Money, \cite{vidick2020classical}]
\label{def:AaPqmoney}
 An agree-and-prove scenario $(\cF_M, \cC_M, \cR_M)$ for a quantum money scenario with completeness parameter $c_M$ and cloning parameter $\mu_M$' consists of
\begin{enumerate}
\item \textbf{Setup functionality} $\cF_M(1^\lambda)$: The setup should run an initialization procedure $\mathtt{init}_M$ that instantiates a database $B_M$ whose records are of the form (and the distribution) that $\mathtt{Bank}$ would have produced running on a uniformly random input $r$. The setup should also return a specification of how the following oracles should be implemented: 
\begin{itemize}
\item $\mathcal{O}_{\cF_M}(I, \texttt{id})$: returns an identifier $\mathsf{id}$ such that the bill $(\mathsf{id, public, secret}, \ket{\$}_{\mathsf{id}})$ is in $B_M$.
\item $\mathcal{O}_{\cF_M}(\cdot, \texttt{public}, \mathsf{id})$: Returns the \textsf{public} string associated with \textsf{id}. Returns $\perp$ if no record in $B_M$ with the identifier \textsf{id} exists.
\item $\mathcal{O}_{\cF_M}(I, \texttt{getMoney}, \mathsf{id})$: If no record in $B_M$ with identifier \textsf{id} exists, returns $\perp$. Otherwise, returns $\ket{\$}_{\mathsf{id}}$ the first time it is called. If called again with the same \textsf{id} argument, returns $\perp$.
\item $\mathcal{O}_{\cF_M}(V, \texttt{secret}, \mathsf{id})$: accesses $B_M$ and returns the \textsf{secret} string associated with \textsf{id}. Returns $\perp$ if no record in $B_M$ with the identifier \textsf{id} exists.
\end{itemize}
\item \textbf{Agreement relation} $\mathcal{C}^{\mathcal{O}_{\cF_M}}(1^\lambda, \mathsf{id})$: outputs 1 if and only if a record in $B_M$ with identifier $\mathsf{id}$ exists.
\item \textbf{Proof relation} $\mathcal{R}^{\mathcal{O}_{\cF_M}}(1^\lambda, x, \rho_W)$: interprets $x$ as an \textsf{id} (outputting $\perp$ if this fails), sets $\textsf{public} \leftarrow \mathcal{O}_{\cF_M}(V, \texttt{public}, x)$ and $\textsf{secret} \leftarrow \mathcal{O}_{\cF_M}(V, \texttt{secret}, x)$, and executes $\texttt{Ver}(x, \textsf{public}, \textsf{secret}, \rho_W)$. 
\end{enumerate}
\end{definition}

\begin{theorem}[Proposition 4.3,~\cite{vidick2020classical}]
\label{thm:no-nondestructive}
Let $\lambda$ be a security parameter, let $(\cF, \cC, \cR)$ be an agree-and-prove scenario, and let $\cK = (\cI, P_1, P_2, V_1, V_2)$ be a protocol for $(\cF, \cC, \cR)$ with a classical honest verifier $V = (V_1, V_2)$, knowledge error $\kappa$ and extraction distance $\delta$. Let $\hat P = (\hat{P}_1,\hat{P}_2)$ be a prover for $\cK$.

Let $\hat{I}$ be any input generation algorithm, and $x$ and $\rho_{\reg{TS}}$ an instance and a CQ state respectively such that the agree phase of $\cK$, executed with $\hat{I}$, $V_1$ and $\hat{P}_1$, has a positive probability of ending with $x$ being agreed on, and such that the joint state of $st_V$ and $st_P$ conditioned on $x$ being agreed on is $\rho_{\reg{TS}}$.

Suppose further that (i) the interaction $\left (V_2(x), \hat P_2(x) \right )_{\rho_{\reg{TS}}}$ is nondestructive, (ii) the oracle $\cO_\cF$ does not keep state during the second phase of the protocol, i.e.\ any query to it by $V_2$ or $\hat{P}_2$ can be repeated with the same input-output behavior, and (iii) the success probability of $\hat P_2$ conditioned on instance $x$ being agreed on is at least $\kappa$. Then there exists a (possibly inefficient) procedure $A$ 
such that the following holds. 

Let $\cR^{\cO_\cF}_{\lambda x}(\cdot)$ be the function such that $\cR^{\cO_\cF}_{\lambda x}(\rho) = \cR^{\cO_\cF}(1^\lambda, x, \rho)$, and let the single-bit-valued function $\big(\cR^{\cO_\cF}_{\lambda x}\big)^{\otimes 2}(\cdot)$ be the function whose output is the $\mathrm{AND}$ of the outcomes obtained by executing the tensor product of two copies of $\cR^{\cO_\cF}_{\lambda x}(\cdot)$ on the state that is given as argument. Then the procedure $A$, given as input $x$, a copy of a communication transcript from the agree phase that led to $x$, and black-box access to $V_2$ and $\hat{P}_2$ as interactive machines (including any calls they might make to $\cO_\cF$) running on $\rho_{\reg{TS}}$, with power of initialisation for $\hat P_2$, can produce a state $\sigma$ such that
\begin{equation*}
\label{eq:A-success-prob}
\mathrm{Pr}[\big(\cR^{\cO_\cF}_{\lambda x}\big)^{\otimes 2}(\sigma) = 1] > 1 - 2\delta - \mathsf{negl}(\lambda).
\end{equation*}
\end{theorem}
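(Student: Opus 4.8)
The statement is, at heart, a \emph{cloning} claim: from the single copy of the quantum state $\rho_{\reg{TS}}$ held by $\hat P_2$ we must manufacture a bipartite state $\sigma$ on two registers so that $\cR^{\cO_\cF}_{\lambda x}$ accepts \emph{each} half. The plan is to obtain each half from a separate invocation of the knowledge extractor guaranteed by $\cK$, running that extractor \emph{twice on the same copy} of $\rho_{\reg{TS}}$; the reuse is licensed precisely by hypotheses (i) and (ii), which make the extractor's interaction with $\hat P_2$ reversible, so that the prover's state can be restored between the two runs.

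First I would invoke knowledge soundness. By hypothesis (iii), conditioned on the agree phase outputting $x$, the prover $\hat P_2$ makes $V_2$ accept with probability at least $\kappa$; since $\kappa$ is the knowledge error and $\delta$ the extraction distance of $\cK$, the extractor $E$, given $x$, a copy of the agree-phase transcript, and black-box (rewinding) access to $\hat P_2$ running on $\rho_{\reg{TS}}$, outputs a state $\tau_{\reg A}$ with $\Pr[\cR^{\cO_\cF}_{\lambda x}(\tau_{\reg A})=1]\ge 1-\delta$. This is copy one, deposited in a fresh register $\reg A$.

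The crux is to produce copy two without a second independent copy of $\rho_{\reg{TS}}$. By (i), every activation of the pair $(V_2,\hat P_2)$, \emph{including the standard-basis measurements of the network register}, is realized by a unitary $U$ for each fixed verifier randomness $u$; by (ii), each second-phase query to $\cO_\cF$ is reproducible and so can be treated as a fixed, invertible map. Since $E$ touches $\hat P_2$ only through $U$ and $U^\dagger$ and keeps only classical transcript data (from which it reconstructs $\tau_{\reg A}$, possibly inefficiently---this is why $A$ need not be efficient), the forward and rewinding steps of $E$ cancel, and after $E$ terminates the prover register $\reg S$ is restored to within negligible trace distance of $\rho_{\reg{TS}}$. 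I would then run $E$ a \emph{second} time on the restored state to obtain $\tau_{\reg B}$ in register $\reg B$, which again satisfies $\Pr[\cR^{\cO_\cF}_{\lambda x}(\tau_{\reg B})=1]\ge 1-\delta-\mathsf{negl}(\lambda)$, the extra $\mathsf{negl}$ accounting for the imperfect restoration. Taking $\sigma$ to be the joint state on $\reg A\otimes\reg B$ and applying a union bound---each factor failing with probability at most $\delta$ plus negligible error---yields $\Pr[(\cR^{\cO_\cF}_{\lambda x})^{\otimes2}(\sigma)=1]>1-2\delta-\mathsf{negl}(\lambda)$, exactly as claimed.

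The main obstacle is justifying that the generic black-box extractor can be implemented in this \emph{prover-state-preserving} manner. The danger is that a naive extractor might measure the prover's private register, or copy $\reg A$ out of a register entangled with $\reg S$, in which case rewinding would disturb $\reg S$ by an amount controlled only by the gentle-measurement lemma, i.e.\ $O(\sqrt{\delta})$---too lossy for the additive $2\delta$ target. The resolution I would pursue is to keep the first extraction entirely coherent and confined to the network register: because (i) purifies exactly those standard-basis measurements and (ii) makes the oracle calls repeatable, the only classical record retained is what $E$ needs to reconstruct $\tau_{\reg A}$, and inverting the purified interaction returns $\reg S$ to $\rho_{\reg{TS}}$ up to negligible error. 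Establishing rigorously that this coherent bookkeeping leaves $\reg S$ within negligible trace distance of $\rho_{\reg{TS}}$ before the second call---so that the second extraction inherits the same $1-\delta$ guarantee and the two $\delta$-losses enter only at the final, independent evaluations of $\cR$ on $\reg A$ and $\reg B$---is where the nondestructiveness hypothesis does its real work.
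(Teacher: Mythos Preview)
The paper does \emph{not} prove this theorem. It is stated in the preliminaries (Appendix~\ref{app:ct}) as Proposition~4.3 of~\cite{vidick2020classical} and invoked as a black box in the proof of Theorem~\ref{cor:imposs}; no argument for it appears anywhere in the present paper. So there is no ``paper's own proof'' to compare against.

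That said, your proposal captures the essential mechanism of the original Vidick--Zhang argument: use the knowledge extractor once to deposit a witness in register $\reg A$, exploit nondestructiveness to rewind the prover to (negligibly close to) $\rho_{\reg{TS}}$, run the extractor a second time to get a witness in $\reg B$, and union-bound. You also correctly flag the real technical point---that the extractor's interaction with $\hat P_2$ must be implementable so that the prover's private register is left undisturbed after the first extraction. One place where your write-up is a bit loose: you say ``the forward and rewinding steps of $E$ cancel,'' but a generic black-box extractor need not undo every forward step it takes. The actual justification leans on two facts you mention but do not fully connect: the verifier is classical (so everything crossing $\reg N$ is classical), and nondestructiveness forces the standard-basis reads of $\reg N$ to have \emph{deterministic} outcomes for each fixed verifier randomness. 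Together these mean that $E$'s entire interaction with $\hat P_2$ can be replayed as a sequence of unitaries indexed by classical data that $A$ (being inefficient) can record and later invert exactly---this is what licenses the second run, rather than any internal cancellation within $E$ itself. With that clarification your outline matches the intended proof.
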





\section{Remote State Preparation: Security and Construction from \texorpdfstring{\cite{cojocaru2019qfactory}}{}} \label{app_B_qfactory}

In this section we present a concrete remote state preparation protocol proposed in~\cite{cojocaru2019qfactory}. 
\medskip
\begin{breakablealgorithm}
\caption{4-states QFactory: classical delegation of $\ket{+_{\theta}}$ states ($\theta \in \{0, \pi/2, \pi, 3\pi/2\}$) (\cite{cojocaru2019qfactory})} \label{protocol:rsp}
\vspace{\baselineskip}
\textbf{Requirements:} 
A 2-regular trapdoor one-way family $\mathcal{F}$ and homomorphic hardcore predicate $\{h_k\}$.
\begin{enumerate}
    \item Preimages superposition
    \begin{enumerate}
    \item Alice runs the algorithm $(k,t_k) \leftarrow \text{Gen}_{\mathcal{F}}(1^n)$.
    \item Alice instructs Bob to prepare one register at $\otimes^n H\ket{0}$ and second register initiated at $\ket{0}^{m}$.
    \item Bob receives $k$ from Alice and applies $U_{f_k}$ using the first register as control and the second as target.
    \item Bob measures the second register in the computational basis, obtains the outcome $y$. The combined state is given by ${(\ket{x}+\ket{x'}) \otimes \ket{y}}$ with $f_k(x)=f_k(x')=y$ and $y\in \Ima f_k$. 
\end{enumerate}
\item Output preparation
\begin{enumerate}
    \item Bob applies $U_{h_k}$ on the preimage register $\ket{x}+\ket{x'}$ as control and another qubit initiated at $\ket{0}$ as target. Then, measures all the qubits, but the target in the $\{\frac{1}{\sqrt{2}}(\Ket{0} \pm \Ket{1})\}$ basis, obtaining the outcome $b = (b_1, ..., b_{n})$. 
    Bob applies on the unmeasured qubit (representing the output state) the operation $HR(-\pi/2)$.
    Now, Bob returns both $y$ and $b$ to Alice. 
    \item Alice using the trapdoor $t_k$ computes the preimages of $y$:
    \item Then compute: $\theta_2 := h_k(x) \xor h_k(x')$ and $\theta_1 := (\theta_2 \cdot \langle b, x \oplus x' \rangle) \oplus h_k(x)h_k(x')$
\end{enumerate}
\item \textbf{Outputs}: The quantum state that Bob has generated is (with overwhelming probability~\footnote{\label{note} the probability comes from the probability of $\mathcal{F}$ being a 2-regular homomorphic-hardcore family of functions}) the state $\Ket{+_{\theta}}$, where $\theta \in \{0, \pi/2, \pi, 3\pi/2\}$ state described using the two bits $(\theta_1, \theta_2)$, where $\theta_2$ is also known as the basis of the state. The output of Alice is the classical description $(\theta_1, \theta_2)$.
\end{enumerate}
\end{breakablealgorithm}
\vspace{\baselineskip}

In any run of the protocol, honest or malicious, the state that Alice believes that Bob has is the one described in Protocol \ref{protocol:rsp}. Therefore, the task that a malicious Bob wants to achieve, is to be able to guess, as good as it can, the description of the output state that Alice (based on the public communication) thinks Bob has produced. In particular, in our case, Bob needs to guess the bit $\theta_2$ (corresponding to the basis) of the (honest) output state.

\begin{definition}[4 states basis blindness]\label{def:4basisblind}
  We say that a protocol $(\pi_A, \pi_B)$ achieves \textbf{basis-blindness} with respect to an ideal list of 4 states \\
  $S = \{S_{\theta_1,\theta_2}\}_{(\theta_1,\theta_2) \in \{0,1\}^2}$ if:
  \begin{itemize}
  \item $S$ is the set of states that the protocol outputs, i.e.:
    \[\pr{\ket{\phi} = S_{B_1B_2} \in S \mid ((\theta_1,\theta_2),\ket{\phi}) \leftarrow (\pi_A \| \pi_B) } \geq 1 - \negl\]
  \item and no information is leaked about the index bit $\theta_2$ of the output state of the protocol, i.e for all QPT adversary $\cA$:
  \[\pr{ \theta_2 = \tilde{\theta_2} \mid ((\theta_1, \theta_2), \tilde{\theta_2}) \leftarrow (\pi_A \| \cA)} \leq 1/2 + \negl \]
  \end{itemize}
\end{definition}

\begin{theorem}[4-states QFactory is secure (\textup{\cite{cojocaru2019qfactory}})] \label{thm:security_qfac_2_0}
  Protocol~\ref{protocol:rsp} satisfies $4$-states basis blindness with respect to the ideal list of states: \\
  $S = \{\Ket{+}, \Ket{-}, \Ket{+_{\pi/2}}, \Ket{+_{3\pi/2}}\}$.
\end{theorem}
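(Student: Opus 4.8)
The plan is to establish the two clauses of Definition~\ref{def:4basisblind} separately, following the analysis of \cite{cojocaru2019qfactory}: first the \emph{correctness} clause (the honest protocol outputs a state in $S$ whose description $(\theta_1,\theta_2)$ Alice can recover) and then the \emph{basis-blindness} clause (no QPT $\mathcal{A}$ predicts $\theta_2$ better than guessing).

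For correctness, I would trace the quantum register through Protocol~\ref{protocol:rsp}. By 2-regularity of $\mathcal{F}$ (Definition~\ref{def:k_regular}), once Bob measures the second register and obtains $y \in \Ima f_k$, the first register collapses to $\frac{1}{\sqrt 2}(\ket{x}+\ket{x'})$ with $f_k(x)=f_k(x')=y$ and $x \neq x'$. Applying $U_{h_k}$ with this register as control onto a fresh $\ket{0}$ yields $\frac{1}{\sqrt 2}(\ket{x}\ket{h_k(x)}+\ket{x'}\ket{h_k(x')})$, and measuring the $n$ control qubits in the Hadamard basis with outcome $b$ leaves the target in $\frac{1}{\sqrt 2}\bigl(\ket{h_k(x)} + (-1)^{\langle b,\, x\oplus x'\rangle}\ket{h_k(x')}\bigr)$ up to a global phase. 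I would then split on $\theta_2 := h_k(x)\oplus h_k(x') \in \{0,1\}$: when $\theta_2 = 0$ the target is a computational-basis state $\ket{h_k(x)}$, and when $\theta_2=1$ it is $\ket{\pm}$ with the sign fixed by $\langle b,\, x\oplus x'\rangle$. A direct computation of $HR(-\pi/2)$ on each of these four cases shows the output equals $\ket{+_\theta}$ with $\theta = \theta_1\cdot\pi + \theta_2\cdot\tfrac{\pi}{2}$ and $\theta_1 = (\theta_2\cdot\langle b,\, x\oplus x'\rangle)\oplus h_k(x)h_k(x')$, which ranges exactly over $S=\{\ket{+},\ket{-},\ket{+_{\pi/2}},\ket{+_{3\pi/2}}\}$. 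Since Alice holds the trapdoor $t_k$, she can invert $y$, recover both preimages, and compute $(\theta_1,\theta_2)$ from $(y,b)$, matching Bob's state with overwhelming probability.

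For basis-blindness, the key observation is that $\theta_2 = h_k(x)\oplus h_k(x')$ is precisely the hardcore bit attached to the two preimages of $y$. I would argue by reduction: suppose a QPT adversary $\mathcal{A}$ (playing the role of $\pi_B$) outputs a guess $\tilde\theta_2$ with $\Pr[\tilde\theta_2 = \theta_2] \geq \tfrac12 + \eps$ for non-negligible $\eps$. Build $\mathcal{B}$ against the hardcore-predicate security (Definition~\ref{def:hardcore_predicate}) of $\{h_k\}$, adapted to the two-preimage setting: on input $k$, $\mathcal{B}$ runs $\mathcal{A}(k)$ (which is exactly Alice's honest first and only message before the guess), receives $(y,b,\tilde\theta_2)$, and outputs $\tilde\theta_2$ as its prediction for $h_k(x)\oplus h_k(x')$. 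Using the homomorphic property of $h_k$ to rewrite $h_k(x)\oplus h_k(x') = h_k(x\oplus x')$, the quantity to be predicted is the protected hardcore bit, unpredictable given only $k$; hence $\mathcal{B}$ inherits advantage $\eps$, contradicting the security of the family and forcing $\eps$ to be negligible.

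I expect the basis-blindness reduction to be the main obstacle, for two reasons. First, since $\mathcal{A}$ is a fully malicious quantum Bob, it need not run the honest circuit and may return an arbitrary (even adversarially adaptive) $y$; I must therefore invoke the \emph{homomorphic} hardcore property, which guarantees unpredictability of $h_k(x\oplus x')$ for any $y \in \Ima f_k$ that Bob produces, and fold the case $y \notin \Ima f_k$ (detectable by Alice via failed inversion) into the analysis. Second, the reduction must faithfully simulate Alice's view \emph{without} the trapdoor, which holds here precisely because the sole message Alice sends before Bob's guess is $k$ itself, so $\mathcal{B}$ needs nothing beyond its own challenge key. Establishing these two points reduces the blindness clause cleanly to the defining unpredictability of the 2-regular homomorphic hardcore family, completing the proof.
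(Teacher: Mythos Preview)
The paper does not give a standalone proof of this theorem; it is imported from \cite{cojocaru2019qfactory}. The security argument is, however, sketched in the Generic Construction subsection of Appendix~\ref{app_B_qfactory}, and your approach matches it: correctness by tracing the honest circuit, and blindness by reducing a basis-bit predictor to the hardcore-predicate property of $h$ with respect to $g$.

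One point where your write-up is more convoluted than necessary: the paper's key observation is that, by the homomorphic property of $h$, one has $\theta_2 = h_k(x)\oplus h_k(x') = h(x_0)$, where $x_0$ is the secret element Alice draws during $\Gen_{\cF}$ (indeed, $\Gen_{\cF}$ already outputs $hp = h(z_0)$, and the protocols that use QFactory simply set $\theta_2 \gets hp$). Thus $\theta_2$ is fixed at key generation, \emph{before} Bob does anything, and is completely independent of the image $y$ he reports. Your concern about an ``adversarially adaptive $y$'' and about folding in the case $y\notin\Ima f_k$ is therefore unnecessary: whatever Bob returns, Alice's basis bit is $h(x_0)$, and the only information Bob holds about $x_0$ is $g(x_0)$, which is embedded in the public key $k$. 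The reduction is then the immediate one---feed $k$ to the adversary and forward its guess as a prediction of $h(x_0)$ from $g(x_0)$---with no conditioning on Bob's messages required. Your argument is correct, but recognising that $\theta_2$ is a function of the key alone removes the main obstacle you anticipated.
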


An 8-state QFactory protocol producing states in the set $\{k\pi/4\}_{k \in \{0, ..., 7\}}$ can be obtained from 2 runs of 4-state QFactory Protocol~\ref{protocol:rsp} given in~\cite{cojocaru2019qfactory}.
\subsection{Generic Construction}

We will denote with $g$ the injective, homomorphic, post-quantum OWF and with $h$ the hardcore predicate of $g$, which is also homomorphic with respect to the operation of $g$. In more details, we require: \\
Consider a fixed element of the domain of $g$, $x_0$ and for now a public function $h$ having the same domain as $g$ \\
Then, we define the function $f(x, c) = g(x + c \cdot x_0)$, where $c \in \{0, 1\}$. As $g$ is injective, we can see that $f$ is 2-regular (and one-way). Now, this function needs to be constructed and applied by Bob, but we don't want to reveal him the value of $x_0$, which is where the homomorphic property (for a one-time operation) steps in:
$$ g(x) * g(x_0) = g(x + x_0) \text{ for any operations ``*'' and ``+'' } $$
Then, to compute $f$: $f(x, c) = g(x) * (c \cdot g(x_0))$, therefore it is sufficient to send him $g(x_0)$ (and the description of $g$) for Bob to apply $f$. And as $g$ is one-way, then $x_0$ is also hidden from Bob. \\
Then, after applying a unitary corresponding to function $f$ and a series of measurements, Bob obtains the quantum state: $H^{B_1}X^{B_2}\ket{0} \in \{\Ket{0}, \Ket{1}, \Ket{+}, \Ket{-}\}$, a single qubit gate whose description is represented by the 2 bits $B_1$ and $B_2$. We call $B_1$ the basis bit, and $B_2$ the output bit. The target is to ensure that $B_1$ is completely hidden from Bob. \\
The formal description of the 2 bits is the following:
\begin{equation}
    \begin{split}
        & B_1 = h(x+x_0) \oplus h(x) \\
        & B_2 = (B_1 \cdot \langle b , (x \xor (x + x_0) \rangle) \xor h(x)h(x + x_0)
    \end{split}
\end{equation}
where $x$ is a randomly chosen preimage of $g$ and $b$ is a random bit-string. \\
But now, as we were saying we wanted to ensure $B_1$ is completely hidden from Bob who only received from Alice $g(x_0)$. \\
The, if we impose that $h$ is homomorphic in the sense: 
$$h(x + x_0) \oplus h(x) = h(x_0)$$,
then we have:
$$B_1 = h(x_0)$$
And if additionally, we impose that $h$ is a hardcore predicate with respect to function $g$, then Bob while he has $g(x_0)$ he knows nothing about $B_1 = h(x_0)$. \\
Moreover, as $B_1 = h(x_0)$, then Alice knows from the very beginning the value of the basis bit.

\subsection{Function Description}

The generation algorithm $Gen_{F}$ will output:
\begin{enumerate}
    \item $k$ - the public description of a 2-regular trapdoor (post-quantum) function $f_k$;
    \item $t_{k}$ - the trapdoor information corresponding to $f_k$;
    \item $hp$ - a hardcore predicate associated with $f_k$
\end{enumerate}

To construct $Gen_{F}$ we rely on a family of injective homomorphic trapdoor functions $\mathcal{G} = \{g_{k'}\}_{k'}$ and a hardcore predicate $h$ for $\mathcal{G}$.

\procedure [linenumbering]{ $\Gen_{\cF}(1^\lambda)$}{%
    (K', t_{K'}) \sample {Gen}_{\cG}(1^{\lambda}) \\%
    z_0 \sample Dom(g_{K'}) \\%
    y_0 \gets g_{K'}(z_0)\\%
    t_{k} \gets (t_{K'}, z_0)\\%
    k \gets (K', y_0)\\%
    f_k(z, c) := g_{K'}(z) + c \cdot y_0 =  g_{K'}(z + c \cdot z_0) \, \, \text{, where $c \in \{0, 1\}$} \\
    hp \gets h(z_0) \\
    \pcreturn (k, t_{k}, hp)%
}%

To construct the injective homomorphic trapdoor functions $\mathcal{G} = \{g_{k'}\}_{k'}$ we rely on the construction of \cite{MP12}.

In other words, to sample a function $f_{k}$, we first sample a matrix $K' \in \Z_q^{m \times n}$ using the construction of \cite{MP12} (that provides an injective and trapdoor function), a uniform vector $s_0 \in \Z_q^{n}$, an error $e_0 \in \Z_q^m$ according to a small Gaussian\footnote{but big enough to make sure the function is secure} and a random bit $d_0$, and we compute: 
\begin{equation}
    \begin{split}
        & z_0 = (s_0, e_0, d_0) \\
        & y_0 = K's_0 + e_0 + d_0 \times \begin{pmatrix}
                                    \frac{q}{2} & 0 & \dots & 0
                                        \end{pmatrix}^T \\
        &g_{K'}(s, e, d) = K' s + e + d \times \begin{pmatrix}
                                    \frac{q}{2} & 0 & \dots & 0
                                        \end{pmatrix}^T
    \end{split}    
\end{equation}
The function $f_{k}$ will then be defined as follow:
\begin{align}
  f_{k}(s,e,c,d) = K's + e + c \times y_0 + d  \times \begin{pmatrix}
    \frac{q}{2} & 0 & \dots & 0
  \end{pmatrix}^T
\end{align}
Note that $c$ and $d$ are bits, and the error $e$ is chosen in a bigger space\footnote{but small enough to make sure the partial functions $f(\cdot,\cdot,c,\cdot)$ are still injective} than $e_0$ to ensure that the function $f_{K,y_0}$ has two preimages with good probability. Moreover, if we define $h(s,e,c,d) = d$, it is easy to see that the hardcore property will directly come from the fact that under $\sf{LWE}$ assumption, no adversary can distinguish a $\sf{LWE}$ instance $K's_0 + e_0$ from a random vector, so it is not possible to know if we added or not a constant vector.

\section{One-Sided Simulation Based Secure Protocol for \texorpdfstring{\onetwooqfe}{}}\label{app:onesided}
Our scheme makes use of the following tools:
\begin{enumerate}
   \item Non-interactive post-quantum hiding, computationally binding commitment scheme ${\sf COM} = (Com, Dec)$;
 	\item A 2-regular Trapdoor One-Way Function $\mathcal{F} = (Gen_{\mathcal{F}}, Eval_{\mathcal{F}}, Inv_{\mathcal{F}})$.
    \item An argument of knowledge post-quantum zero-knowledge protocol $\Pi:=( \provzk,\verzk)$  for the NP-relation $Rel_f = \{(x = (k, r_f^B, com_f), w = (r_f^A, dec_f) \} $ such that $dec_f$ is the decommitment of $com_f$ and $k$ is (part of) the output of probabilistic algorithm $Gen_{\mathcal{F}}$ when run with internal random coins $r_f = r_f^A \oplus r_f^B $.  
\end{enumerate}

\begin{remark}
If we can employ a zero-knowledge proof in combination with a post-quantum hiding and statistically binding commitment we would get statistical security against malicious Alice in the one-sided simulation framework. The commitment scheme can be instantiated from any non-interactive statistically binding commitment scheme. Alternatively, we can rely on the protocol proposed by Baum et al \cite{BDLOP18} which proposes a statistical binding commitment scheme. 
Since we want to prove that our protocol is simulatable against malicious Alice, we require the post-quantum secure zero-knowledge protocol to enjoy the property of the argument of knowledge. One zero-knowledge proof of knowledge system that has all these features is the one proposed in~\cite{ananth2020concurrent} which is based on the LWE assumption.
As a result, we make an important observation that if we instantiate our \onetwooqfeprottwo{} protocol with the zero-knowledge proof of knowledge system of \cite{ananth2020concurrent} together with the statistical binding post-quantum hiding commitment scheme of \cite{BDLOP18} would imply that the protocol \onetwooqfeprottwo{} is simulation-based secure against \textit{unbounded} Alice.
\end{remark}

\begin{breakablealgorithm}
\caption{\texorpdfstring{\onetwooqfe}{} Protocol, \onetwooqfeprottwo{}, against Malicious Alice} \label{protocol:quot_1_2_malicious_alice}

\noindent \textbf{Inputs:}
\begin{enumerate}
    \item Sender (Bob): single qubit state $\Ket{\psi_{in}}$
    \item Receiver (Alice): $b \in \{0,1\}$
\end{enumerate}
\begin{enumerate}[leftmargin=*]
\item \textbf{Alice's Verification and Setup}
\begin{enumerate}
	\item[1.1] Alice samples uniformly at random $r_f^A$ from $\{0, 1\}^{\lambda}$.  
	\item[1.2] Alice runs $Com(r_f^A) \rightarrow (com_f, dec_f)$ and sends $com_f$ to Bob.
	\item[1.3] Bob samples $r_f^B$ uniformly at random from $\{0, 1\}^{\lambda}$ and sends it to Alice.
	\item[1.4] Alice now run computes $r_f = r_f^A \oplus r_f^B$. She then runs $Gen_{\mathcal{F}}$ using internal random coins $r_f$ and obtains $(k, t_k, hp)$. 
	\item[1.5] Alice now sends $k$ to Bob and runs the interactive algorithm $\provzk$ on input the statement to be proven $x = (k, r_f^B, com_f)$ and the witness $w = (r_f^A, dec_f)$.
	\item[1.6] Bob runs the interactive algorithm $\verzk$ on input the statement $x$. Let $c$ be the output of $\verzk$. If $c=0$ then Bob aborts, otherwise he waits to receive another message from Alice.
	\item[1.7] Alice assigns $\theta_2 = hp$ and encodes her input as $\phi_b := b \cdot \frac{\pi}{2}$, uniformly samples $r_A  \overset{\$}{\leftarrow} \{0, 1\}$ and computes the angle $\delta$ (See, Eq~\ref{eq:delta}) and sends $\delta$ to Bob. Bob continues to the next stage. 
\end{enumerate}
\end{enumerate}
\begin{enumerate}[resume]
\item \textbf{4-states QFactory} (Protocol \ref{protocol:rsp}) 
\item \textbf{Computation on Bob's side} (Steps 4-7 of Protocol \ref{protocol:quot_1_2_hbc_alice})  
\item \textbf{Output on Alice's side} (Step 8 of Protocol \ref{protocol:quot_1_2_hbc_alice}) 
\end{enumerate}
\textbf{Output:} Alice obtains: $s_b = M_Z [R_x\left(-\frac{\pi}{2} \cdot b \right)\ket{\psi_{in}}]$.
\vspace{0.5\baselineskip}
 \end{breakablealgorithm}
\vspace{\baselineskip}

\begin{theorem}Protocol \onetwooqfeprottwo{} securely computes \onetwooqfefunct{} with one-sided simulation.
\label{lemma:one_sided_simulation}
\end{theorem}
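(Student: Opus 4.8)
The task splits, by Definition~\ref{def:one}, into (i) simulation-based security against a malicious Alice (Eq.~\ref{eq:gen_alice_sec_ours}) and (ii) input-indistinguishability against a malicious Bob (Eq.~\ref{eq:gen_bob_sec_ours}). The plan is to handle the two separately and to reuse the analysis already carried out for the semi-honest-Alice protocol $\onetwooqfeprotonemath$ (Protocol~\ref{protocol:quot_1_2_hbc_alice}): $\onetwooqfeprottwomath$ (Protocol~\ref{protocol:quot_1_2_malicious_alice}) is obtained from it by (a) letting the $Gen_{\mathcal{F}}$-randomness be $r_f=r_f^A\oplus r_f^B$, where Alice commits to $r_f^A$ and Bob contributes $r_f^B$, and (b) inserting a post-quantum zero-knowledge argument of knowledge $\Pi=(\provzk,\verzk)$ for $Rel_f$ that forces these messages to be honestly formed.

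\emph{Privacy against malicious Bob.} I would argue that a malicious Bob's view of $\onetwooqfeprottwomath$ is indistinguishable from his view of $\onetwooqfeprotonemath$, so that the claim reduces to Theorem~\ref{thm:privacy_against_bob}. Two hybrids suffice: first replace the transcript of $\Pi$ by the output of its zero-knowledge simulator on the $b$-independent statement $x=(k,r_f^B,com_f)$; then replace $com_f$ by a commitment to $0$, using post-quantum hiding. After this, $r_f^A$ influences Bob's view only through $k=Gen_{\mathcal{F}}(1^n;\,r_f^A\oplus r_f^B)$ and is uniform and independent of everything Bob has seen, so $r_f$ may be resampled; we are then exactly in the setting of $\onetwooqfeprotonemath$, where Alice draws the $Gen_{\mathcal{F}}$-randomness herself, and Theorem~\ref{thm:privacy_against_bob} applies. (Recall that the heart of that theorem is that basis-blindness of QFactory, Theorem~\ref{thm:security_qfac_2_0}, makes the hardcore bit $hp=\theta_2$ unpredictable from $k$, so $\delta=\phi_b+\theta_2\cdot\tfrac{\pi}{2}+r_A\cdot\pi$ is a one-time pad of $\phi_b$; hiding of $com_f$ is precisely what keeps this true after Bob has adversarially chosen $r_f^B$.)

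\emph{Simulation against malicious Alice.} I would construct an ideal-world adversary $\mathcal{S}$ that runs the malicious Alice $\mathcal{A}$ internally and plays honest Bob against it, with two deviations. First, in the setup phase $\mathcal{S}$ answers with a uniform $r_f^B$, receives $k$, and runs $\verzk$; if $\verzk$ rejects, it sends $\abort$ to $\onetwooqfefunctmath$ and halts with $\mathcal{A}$'s output, while if $\verzk$ accepts with non-negligible probability it runs the (quantum, expected-polynomial-time) knowledge extractor of $\Pi$ to recover a witness $(r_f^A,dec_f)\in Rel_f$. Since $Gen_{\mathcal{F}}$ is deterministic in its coins, re-running it on $r_f^A\oplus r_f^B$ reproduces $(k,t_k,hp)$, so $\mathcal{S}$ now holds the trapdoor $t_k$; it runs the QFactory sub-protocol honestly and uses $t_k$ to compute the description $(\theta_1,\theta_2)$ of the prepared state just as honest Alice would. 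Second, when $\mathcal{A}$ sends the angle $\delta$ (transmitted as an element of $\mathbb{Z}_4$ scaled by $\tfrac{\pi}{2}$, hence always one of four valid values), $\mathcal{S}$ inverts the honest encoding, setting $\phi_b:=(\delta-\theta_2\cdot\tfrac{\pi}{2})\bmod\pi$, $b:=2\phi_b/\pi\in\{0,1\}$, and $r_A:=\big((\delta-\phi_b-\theta_2\cdot\tfrac{\pi}{2})/\pi\big)\bmod 2$; it submits $b$ to $\onetwooqfefunctmath$ and receives $s_b$. Finally, rather than run Bob's teleportation circuit, $\mathcal{S}$ samples $m_0\leftarrow\{0,1\}$ uniformly, sets $\bar{s}_b:=s_b\oplus\theta_1\oplus r_A\oplus m_0\cdot b$, sends $(m_0,\bar{s}_b)$ to $\mathcal{A}$, and outputs whatever $\mathcal{A}$ outputs.

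Indistinguishability of $\mathcal{S}$'s output from the real execution I would establish by a short hybrid argument. Switching from a plain run of $\verzk$ to extraction leaves $\mathcal{A}$'s residual state computationally indistinguishable from its real counterpart, by the simulatable-argument-of-knowledge property of $\Pi$, with the negligible-acceptance branch absorbed into the $\abort$ case; from then on the QFactory rounds and the receipt of $\delta$ are handled by $\mathcal{S}$ exactly as honest Bob would, hence identically distributed. It remains to check that $(m_0,\bar{s}_b)$ has the right distribution: with the decoded $(b,r_A)$ the identity $\delta=\phi_b+\theta_2\cdot\tfrac{\pi}{2}+r_A\cdot\pi$ holds by construction, so Bob's circuit is run on a bona fide honest input, and the correctness analysis behind Theorem~\ref{thm:correctness_1_2_oqfe_semihonest} tells us that in such a run $m_0$ is a uniform by-product bit and $\bar{s}_b\oplus\theta_1\oplus r_A\oplus m_0\cdot b=M_Z[R_x(-\tfrac{\pi}{2}b)\ket{\psi_{in}}]$, which is exactly the value $s_b$ returned by $\onetwooqfefunctmath$; hence the real $(m_0,\bar{s}_b)$ is distributed exactly as $\mathcal{S}$ samples it. Combined with $\mathcal{A}$'s preserved state this yields Eq.~\ref{eq:gen_alice_sec_ours}. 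I expect the main obstacle to be the bookkeeping around the quantum knowledge extractor of $\Pi$ --- that it runs in (expected) quantum polynomial time, succeeds whenever $\mathcal{A}$'s acceptance probability is non-negligible, and leaves $\mathcal{A}$ in a state indistinguishable from an honestly-accepted run so that the remaining rounds compose --- which is where the simulatability hypothesis on $\Pi$ (available from the instantiation of~\cite{ananth2020concurrent}) carries the weight; everything else is a routine chaining of the correctness identity, zero-knowledge, hiding, and QFactory's basis-blindness.
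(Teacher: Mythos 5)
Your overall structure matches the paper's: the theorem is split exactly as you do into privacy against malicious Bob (reduced to Theorem~\ref{thm:privacy_against_bob} via the zero-knowledge property of $\Pi$, plus hiding of $\mathsf{COM}$) and simulation against malicious Alice via the same simulator the paper builds --- extract $(r_f^A,dec_f)$ with the AoK extractor, recompute $Gen_{\mathcal{F}}$ on $r_f^A\oplus r_f^B$ to get $(k,t_k,hp)$, decode $b$ from $\delta$ and $\theta_2=hp$, query \onetwooqfefunct{}, and fabricate the last message $(m_{qf},m_0,\bar{s}_b)$ from $s_b$.

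There is, however, one genuine gap in your malicious-Alice argument: you never invoke the \emph{binding} property of the commitment, and so you never rule out an Alice who opens $com_f$ adaptively, i.e.\ produces (via the proof) an opening $r_f^A$ that depends on the $r_f^B$ she received, thereby biasing $r_f=r_f^A\oplus r_f^B$. Your final step asserts that, once $(b,r_A)$ are decoded, ``Bob's circuit is run on a bona fide honest input'' and the correctness identity of Theorem~\ref{thm:correctness_1_2_oqfe_semihonest} fixes the distribution of $(m_0,\bar{s}_b)$. But that identity (and the very fact that the trapdoor yields a correct $(\theta_1,\theta_2)$ for a state of the form $\ket{+_\theta}$) relies on the guarantees of $Gen_{\mathcal{F}}$ --- $2$-regularity, the hardcore/homomorphic structure underlying QFactory --- which hold only with overwhelming probability over \emph{uniform} coins; a maliciously biased $r_f$ could land in the bad set, in which case the real transcript deviates from what your simulator samples even though the extracted witness is formally valid for $Rel_f$. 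The paper closes exactly this hole: its failure analysis distinguishes the case where the extracted randomness differs from the committed one (reduced, via rewinding Alice on a fresh ${r_f^B}'$ and re-extracting, to breaking the computational binding of $\mathsf{COM}$) from the case where the proof itself is unsound (reduced to the AoK property of $\Pi$). You need to add this binding reduction (or assume statistical binding) to justify that $r_f$ is uniform and hence that the key is honestly distributed before appealing to the correctness analysis; the rest of your argument then goes through as in the paper.
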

The correctness of the modified protocol (\autoref{thm:correctness_1_2_oqfe_semihonest}) follows from the correctness of the commitment scheme $\sf{COM}$ and the correctness of $\Pi$.

To complete the proof we need to prove the following Lemma~\ref{thm:simulation_malicious_alice} and Lemma~\ref{thm:privacy_against_bob_malicious_alice}. 

\begin{lemma}[Simulation-based (computational) security against malicious Alice] Protocol \onetwooqfeprottwo{} is simulation-based secure against malicious Alice.
\label{thm:simulation_malicious_alice}
\end{lemma}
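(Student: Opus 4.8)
The plan is to establish the first requirement of Definition~\ref{def:one} with $\mathcal{F}=\onetwooqfefunctmath$: for every non-uniform QPT $\mathcal{A}$ corrupting Alice in \onetwooqfeprottwo{} (Protocol~\ref{protocol:quot_1_2_malicious_alice}) we build an ideal-world simulator $\mathcal{S}$ whose output is computationally indistinguishable from $\myreal$. The simulator runs $\mathcal{A}$ internally, plays honest Bob, and deviates in exactly three places: (i) it extracts $\mathcal{A}$'s committed coin share $r_f^A$ from the argument of knowledge $\Pi$ and thereby reconstructs the trapdoor; (ii) it uses the trapdoor to read Alice's effective input bit off the measurement angle $\delta$; and (iii) it answers Bob's final message using the value returned by $\onetwooqfefunctmath$ instead of a quantum computation on $\ket{\psi_{in}}$.

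Concretely $\mathcal{S}$ proceeds as follows. It forwards $com_f$, replies with a uniform $r_f^B$, receives the claimed key $k$, and runs $\verzk$ against $\mathcal{A}$'s proof; if $\verzk$ rejects it sends $\abort$ to $\onetwooqfefunctmath$ and outputs $\mathcal{A}$'s output, exactly as honest Bob would. Otherwise it runs the knowledge extractor of $\Pi$ on $\mathcal{A}$ to obtain a witness $(r_f^A,dec_f)$ for $Rel_f$; by the argument-of-knowledge property and the definition of $Rel_f$, except with probability bounded by the knowledge/extraction error, $dec_f$ decommits $com_f$ to $r_f^A$ and $k$ is the key component of $Gen_{\mathcal{F}}(r_f^A\oplus r_f^B)$. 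Running $Gen_{\mathcal{F}}$ on $r_f:=r_f^A\oplus r_f^B$ gives $(k,t_k,hp)$, and $\mathcal{S}$ sets $\theta_2:=hp$. Since \onetwooqfeprottwo{} carries no proof about $\delta$, a malicious Alice effectively just chooses $\delta$; writing $\delta=\tfrac{\pi}{2}c$ with $c\in\{0,1,2,3\}$, the equation $\delta=\phi_b+\theta_2\tfrac{\pi}{2}+r_A\pi$ has, for the known $\theta_2$, the unique solution $b=(c-\theta_2)\bmod 2$ and $r_A=\lfloor((c-\theta_2)\bmod 4)/2\rfloor$, so $\mathcal{S}$ decodes $(b,r_A)$ (mirroring honest Bob's abort on a malformed $k$ or $\delta$). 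It sends $b$ to $\onetwooqfefunctmath$ and receives $s_b$. It then plays honest QFactory-Bob of Protocol~\ref{protocol:rsp} on $k$ (this uses only ancillas, not $\ket{\psi_{in}}$), forwards the transcript $m_{qf}=(y,\beta)$ to $\mathcal{A}$, and uses $t_k$ to compute $\theta_1$ exactly as honest Alice would from $(y,\beta)$. Finally $\mathcal{S}$ samples $m_0\gets\{0,1\}$, sets $\bar{s_b}:=s_b\oplus\theta_1\oplus r_A\oplus m_0\cdot b$, sends $(m_0,\bar{s_b})$, and outputs $\mathcal{A}$'s output.

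Indistinguishability follows from a short hybrid argument. $H_0$ is the real execution. $H_1$ replaces the honest run of $\verzk$ by the extractor of $\Pi$; by the simulatability (state-preserving extraction) property of the post-quantum argument of knowledge, as satisfied by the instantiation of~\cite{ananth2020concurrent}, we get $H_1\approx_q H_0$ and, crucially, $\mathcal{A}$'s residual internal state before it sends $\delta$ is essentially the one it would have in $H_0$, so the subsequent messages $\delta$, $m_{qf}$, $(m_0,\bar{s_b})$ are answered on the correct state. $H_2$ still generates $(y,\beta)$ honestly (this does not depend on $\ket{\psi_{in}}$ and hence is unchanged) but recomputes Bob's last message directly: it takes $m_0\gets\{0,1\}$ and $\bar{s_b}:=s_b'\oplus\theta_1\oplus r_A\oplus m_0\cdot b$ with $s_b'=M_Z[R_x(-\tfrac{\pi}{2}b)\ket{\psi_{in}}]$, rather than performing Bob's entangling, measurement and correction steps. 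By the one-bit-teleportation analysis behind Theorem~\ref{thm:correctness_1_2_oqfe_semihonest}, the real computation leaves the output register in $X^{\theta_1\oplus r_A\oplus m_0\cdot b}R_x(-\tfrac{\pi}{2}b)\ket{\psi_{in}}$ with $m_0$ uniform and independent of the final computational-basis outcome, so $H_2$ is distributed identically to $H_1$. Finally $H_3$ replaces $s_b'$ by the $s_b$ returned by $\onetwooqfefunctmath$, which is itself the computational-basis measurement of $R_x(-\tfrac{\pi}{2}b)\ket{\psi_{in}}$; this is an identical distribution, and $H_3$ is exactly $\ideal_{\onetwooqfefunctmath,\mathcal{S}(z),\alice}$.

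The main obstacle is the step $H_0\to H_1$: we need the post-quantum argument of knowledge $\Pi$ to admit an extractor that both succeeds whenever $\mathcal{A}$'s acceptance probability is non-negligible and leaves $\mathcal{A}$'s (possibly quantum) internal state undisturbed up to negligible trace distance, so that the simulation can continue coherently after extraction; this is exactly why $\Pi$ is required to be a \emph{simulatable} zero-knowledge argument of knowledge and is instantiated with~\cite{ananth2020concurrent}. A secondary, purely syntactic point that must be verified is that, given $\theta_2$, the angle $\delta$ determines $(b,r_A)$ uniquely on every input $\mathcal{A}$ can send and that $\mathcal{S}$'s abort behaviour on a malformed $k$ or $\delta$ faithfully mirrors honest Bob's; these follow from the computational binding of $\mathsf{COM}$ (which pins down $r_f^A$) and from inspection of the protocol.
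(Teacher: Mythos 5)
Your proposal is correct and follows essentially the same route as the paper's proof: extract $r_f^A$ with the argument-of-knowledge extractor of $\Pi$, recompute $(k,t_k,hp)$ from $r_f^A\oplus r_f^B$ to learn $\theta_2$, decode Alice's bit $b$ from $\delta$, query $\onetwooqfefunctmath$, and fake Bob's last message (the paper delegates this step to the semi-honest simulator $\mathcal{S}^A$, while you inline an equivalent construction using an honestly generated QFactory transcript, which is fine since QFactory never touches $\ket{\psi_{in}}$). Your explicit hybrids and the remark that extraction must be state-preserving (simulatability of $\Pi$) make the argument slightly more careful than the paper's, which instead spells out the two failure events (breaking the AoK property, or biasing the coin toss by equivocating the commitment, reduced to binding via rewinding) that you only mention in passing.
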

\begin{proof}\label{proof:simulation_malicious_alice}

We need to show that for any $\qpt$ adversary $Alice^*$, there exists a $\qpt$ adversary $\mathcal{S}$ for the ideal model such that:
\begin{equation*}
\begin{split}
    \{ \ideal_{\onetwooqfefunctmath, \mathcal{S}(z), Alice}(b, \Ket{\psi_{in}}) \}_{b, \Ket{\psi_{in}}, z} \approx_q   \{ \myreal_{\onetwooqfeprottwomath, Alice^*(z), Alice}(b, \Ket{\psi_{in}}) \}_{b, \Ket{\psi_{in}}, z}
 \end{split}
\end{equation*}

In other words, to show that \onetwooqfeprottwo{} is simulation-based secure against malicious receiver $Alice^*$, we have to prove that there exists a $\qpt$ simulator $S$, that by having access only to the ideal functionality \onetwooqfefunct{}, can simulate the output of any malicious $Alice^*$ who runs one execution of \onetwooqfeprottwo{} with an honest sender Bob. The simulator $S$ having oracle access to $Alice^*$ will run as a sender Bob in the real protocol \onetwooqfeprottwo.
By studying the real protocol, we notice that the 2 messages that $Alice^*$ sends to Bob (that she could be cheating about) are $k$ (the public key of the 2-regular trapdoor function) and $\delta$ (the angle that $Alice^*$ instructs Bob to use in his computation and which should depend on her input $b$).
One of the important elements to prove the security is the Argument of Knowledge property of $\Pi$, which guarantees the existence of an extractor that, given an acceptable proof for an NP statement $x$, it extracts the witness for $x$ with overwhelming probability.
At a high level, our simulator works as follows. Upon receiving a proof $\zkp$ from $Alice^*$, the simulator runs 
the PoK extractor of $\Pi$ thus obtaining $r_f^A$,  computes $r_f = r_f^A \oplus r_f^B$ and runs the algorithm  $Gen_{\mathcal{F}}$ with internal random coins $r_f$. The output of $Gen_{\mathcal{F}}$ corresponds to $(k, t_k, hp)$, where $\theta_2 = hp$.
Now using $\delta^*$ and $\theta_2$, the simulator can compute $d := \delta^* - \theta_2 \bmod 4$ and  finally to extracts 
$b^*$ by computing $b^* = d \bmod 2$.

Before providing the formal description of our simulator, we assume, without loss of generality, that the AoK extractor of
$\Pi$ is denoted as the algorithm $E$. $E$ on input the theorem $x$ interacts (in a black-box way) with the malicious prover to extract the witness for $x$. Formally, the simulator $S$ does the following steps.
\begin{enumerate}
    \item{ Receives } $com_f$ { from } $Alice^*$ 
    \item{ Samples } $r_f^B$ { uniformly at random and sends it to }$ Alice^*$ 
    \item{ Receives } $k$  and defines the statement  $x = (k, r_f^B, com_f)$ 
    \item{ Runs } $E(x)$, { where } $x = (k, r_f^B, com_f)$ { thus obtaining } $w = (r_f^A, dec_f)$
    \item{ Verifies the decommitment phase, namely if } $Dec(com_f, r_f^A, dec_f) = 1$. 
    \item{ If not, this means that $Alice^*$ cheated during the commitment and we abort } 
    \item{ Computes } $r_f = r_f^A \oplus r_f^B$ 
    \item{ Runs } $Gen_{\mathcal{F}}$ { using the randomness $r_f$, and since} {the randomness is fixed the output of $Gen_{\mathcal{F}}$ is deterministic.} 
    \item{ Denote this output $(\bar{k}, t_k, hp)$ and assign $\theta_2 = hp$, as in the real protocol } 
    \item{ Upon receiving $\sigma^*$, computes } $d = \, \delta^* - \theta_2 \, \bmod \, 4$ 
    \item{ Computes the input of $Alice^*$ as: } $b^* = d \, \bmod \, 2$ 
    \item{ Invokes the ideal functionality } \onetwooqfefunct{} { on input $b^*$ and obtains $s_{b^*}$} 
    \item{ Runs the simulator for semi-honest Alice (see the proof of Theorem~\ref{thm:privacy_against_bob}): } $S^A(b^*, s_{b^*}$) { using the randomness } $r_f$ { to compute the last round}
\end{enumerate}

We now show this simulator $S$ is a good simulator, i.e. the output of $S$ is computationally indistinguishable from the output of $Alice^*$ in the real-world experiment. We note that there are only two scenarios in which the simulator would fail:

\begin{enumerate}
    \item $Alice^*$, to generate the public key $k$, uses a randomness $\bar{r}_f^A$ different than the one she committed to ($r_f^A$), or $k$ is not in the domain of $Gen_{\mathcal{F}}$.
    \item $Alice^*$  biases the randomness used to run $Gen_{\mathcal{F}}$ by constructing an opening for the commitment which depends on $r_f^B$.
\end{enumerate}

Loosely speaking, the simulator fails if $Alice^*$ breaks the \textit{binding} of the commitment or the argument of knowledge property of $\Pi$.

\begin{enumerate}[label=\roman*]
    \item In the first scenario, we can immediately use $Alice^*$ to construct a reduction to the AoK property of $\Pi$.
    \item In the second scenario, we want to use the malicious Alice to construct a reduction to the binding of the commitment scheme. The reduction works as follows. 
        \begin{enumerate}
    	\item  Interact with $Alice^*$ as Bob would do, and upon receiving the proof $\zkp$, run the extractor to
    obtain the opening of the commitment.  
        \item Rewind $Alice^*$ and sample a randomness ${r_f^B}'$ such that ${r_f^B}' \neq r_f^B$. Then send again ${r_f^B}'$ to $Alice^*$.
        \item As a result we receive from $Alice^*$ a new tuple $(k', {\delta^*}')$.
        \item Run again the extractor of $\Pi$ on input $(x',zkp')$ where $x' = (k', r_f^B, com_f)$ and obtain a new witness $w' = ({r_f^A}', {dec_f}'$).
        \item If ${r_f^A}' \neq r_f^A$, then this means we have found 2 decommitments $(r_f^A, dec_f)$ and $({r_f^A}', {dec_f}')$ for the same commitment $com_f$ with $r_f^A \neq {r_f^A}'$, and as a result we break the binding property of $\sf{COM}$. If ${r_f^A}' \neq r_f^A$ then restart from the beginning.
    \end{enumerate}
\end{enumerate}

\end{proof}

 \begin{lemma}
 [Privacy against Malicious Bob]\label{thm:privacy_against_bob_malicious_alice}
 The \onetwooqfe{} Protocol \onetwooqfeprottwo{} is private against malicious Bob.
 \end{lemma}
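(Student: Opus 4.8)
The plan is to establish condition~(2) of Definition~\ref{def:one} for $\onetwooqfeprottwomath$: for every QPT adversary $\mathcal A$ playing Bob and every quantum input $\ket{\psi_{in}}$, the view of $\mathcal A$ when Alice's input is $b=0$ is $\approx_q$ to its view when $b=1$ (recall that Alice's input always has length one, so the condition $|x|=|x'|$ is vacuous). The starting observation is that Alice's input bit $b$ enters $\onetwooqfeprottwomath$ \emph{only} through the angle $\delta=\phi_b+\theta_2\cdot\frac{\pi}{2}+r_A\cdot\pi$ of step~1.7: the commitment $com_f$, the transcript of $\provzk$, and the key $k$ are all produced before $b$ is used, Bob's subsequent computation (steps~4--7 of Protocol~\ref{protocol:quot_1_2_hbc_alice}) is a fixed quantum operation determined by $\delta$ and the state Bob holds, and the trapdoor $t_k$ (hence $\theta_1$ and the preimages) is used by Alice only to compute her private output $s_b$, which is never transmitted. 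Hence it suffices to show that, conditioned on the rest of Bob's view, $\delta$ leaks nothing efficiently usable about $b$. Since $r_A$ is sampled uniformly by Alice and never revealed, the term $r_A\cdot\pi$ perfectly randomizes the ``$\pi$-component'' of $\delta$, so the only residual $b$-dependence is the bit $\phi_b\oplus\theta_2$, where $\theta_2=hp=h(z_0)$ and $y_0=g_{K'}(z_0)$ is a component of $k$. The whole argument thus reduces to the fact that $\theta_2$ is pseudorandom from Bob's viewpoint --- i.e.\ the hardcore predicate property of $h$ with respect to $g$, equivalently the $4$-states basis blindness of QFactory (Theorem~\ref{thm:security_qfac_2_0}).

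To turn this into a clean reduction I would use a short sequence of hybrids, all with a fixed value of $b$:
\begin{itemize}
\item $H_0$: the real execution of $\onetwooqfeprottwomath$.
\item $H_1$: replace the honest proof produced by $\provzk$ by the output of the zero-knowledge simulator of $\Pi$, run against $\mathcal A$ as a malicious verifier; indistinguishable by the post-quantum zero-knowledge property of $\Pi$.
\item $H_2$: replace $com_f$ by a commitment to $0^\lambda$; indistinguishable by the post-quantum hiding of $\mathsf{COM}$. In $H_2$ the string $r_f^A$ no longer appears anywhere in Bob's view, so $r_f=r_f^A\oplus r_f^B$ is uniform and independent of Bob's view regardless of how $\mathcal A$ chooses $r_f^B$, and therefore $(k,t_k,hp)=Gen_{\mathcal F}(1^n;r_f)$ is distributed exactly as a fresh QFactory key.
\item $H_3$: compute $\delta$ using a fresh uniformly random bit $\theta_2'$ in place of $hp$, leaving the QFactory subprotocol untouched; indistinguishable from $H_2$ because $(k,hp)\approx_q(k,u)$ for uniform $u$ by the hardcore predicate property. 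A distinguisher between $H_2$ and $H_3$ yields a predictor for $hp$ from $k$: in both hybrids everything in Bob's view is reconstructible from $(k,\text{one bit})$, since $com_f$ is a commitment to $0^\lambda$, the proof of $\Pi$ is simulated, the QFactory transcript and Bob's later messages are obtained by running $\mathcal A$ as a black box on $k$ (no $t_k$ is needed to feed $\mathcal A$), and $\delta$ is computed from the given bit together with fresh $r_A$ and the known $b$.
\end{itemize}
Finally, in $H_3$ the pair $(\theta_2',r_A)$ is uniform and independent of everything else, and the map $(\theta_2',r_A)\mapsto \phi_b+\theta_2'\cdot\frac{\pi}{2}+r_A\cdot\pi$ is a bijection onto $\{0,\frac{\pi}{2},\pi,\frac{3\pi}{2}\}$ for \emph{both} $b=0$ and $b=1$; hence $\delta$ is uniform over that set irrespective of $b$. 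Since $\delta$ is the only $b$-dependent quantity in $H_3$, the hybrid $H_3^{b=0}$ and $H_3^{b=1}$ induce identical views. Chaining the indistinguishabilities gives $H_0^{b=0}\approx_q H_3^{b=0}\equiv H_3^{b=1}\approx_q H_0^{b=1}$, which is the statement.

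I expect the main obstacle to be the bookkeeping in the step $H_2\to H_3$: one must verify carefully that, once $com_f$ hides $r_f^A$ and the proof of $\Pi$ is simulated, \emph{nothing} Bob ever sees depends on $t_k$ or on $r_f^A$, so that the reduction to the hardcore predicate of $h$ (equivalently to basis blindness of QFactory) needs only $k$ and a single bit; and that the post-quantum zero-knowledge simulator of $\Pi$ against a quantum malicious verifier composes correctly with the rest of the execution (this is where the post-quantum zero-knowledge instantiation, e.g.\ that of~\cite{ananth2020concurrent}, is invoked). The remaining steps are routine hybrid reasoning over the hiding of the commitment and the pseudorandomness of a single hardcore bit.
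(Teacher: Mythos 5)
Your proposal is correct and follows essentially the same route as the paper: the paper's proof simply invokes privacy of the semi-honest protocol (Theorem~\ref{thm:privacy_against_bob}, itself a reduction to the hardcore-bit/basis-blindness of QFactory) together with the post-quantum zero-knowledge property of $\Pi$, which is exactly what your hybrids unpack. Your write-up is in fact slightly more careful than the paper's one-line proof, since you also make explicit the hybrid relying on the post-quantum hiding of $com_f$ (needed because $com_f$ together with Bob's $r_f^B$ would otherwise determine the coins of $Gen_{\mathcal{F}}$ and hence $\theta_2$), a step the paper leaves implicit for this lemma although it does state it for the analogous Lemma~\ref{thm:privacy_against_bob_oqfe}.
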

 \begin{proof}

  The proof follows directly from \autoref{thm:privacy_against_bob} and the quantum secure zero-knowledge property of $\Pi$.
  
  \end{proof}

\section{Proofs of Results \label{App_C_all_proofs}}




\subsection{Proofs from Section~\ref{sec:1_2_oqfe_semihonest}} \label{App_D_proofs_Sec_5}

\subsubsection{\texorpdfstring{\onetwooqfe{}}{} Correctness: Proof of Theorem~\ref{thm:correctness_1_2_oqfe_semihonest} \texorpdfstring{\\ \\}{}}
\label{app:correctness} 

\noindent\textbf{Theorem~\ref{thm:correctness_1_2_oqfe_semihonest}}\;\textbf{(Correctness)} \textit{In an honest run of \onetwooqfe{} Protocol~\ref{protocol:quot_1_2_hbc_alice}, when both parties follow the protocol specifications, Alice obtains the outcome $s_b = M_Z [R_x\left(-b \cdot \frac{\pi}{2} \right) \ket{\psi_{in}}]$, where $b$ is Alice's input and $\ket{\psi_{in}}$ is Bob's input.}

\begin{proof}
Using the correctness of the 4-states QFactory protocol (Protocol \ref{protocol:rsp}), Alice fixes $\theta_2 \in \{0, 1\}$ and Bob obtains at the end of the protocol the state $\Ket{\psi_A} = \Ket{+_{\theta}}$ where  $\theta$ can be described using the 2 bits $\theta_1\theta_2 \in \{0, 1\}^2$. \\
Now let us examine the computations performed in Steps 3-6 by Bob. \\

After Steps 3 and 4, he performs:
\begin{equation}
    (M_Z \otimes I)(H \otimes I) (CZ \ket{\psi_{in}} \otimes \ket{\psi_A})
\end{equation}
If the measurement outcome is $m_0$ the unmeasured qubit $\ket{\psi_1}$ becomes:
\begin{equation}
    \ket{\psi_1} = X^{m_0}R_z((-1)^{m_0}\theta) H \Ket{\psi_{in}}
\end{equation}

Bob uses now the measurement angle received from Alice:
\begin{equation}
     \delta = \phi_b + \theta_2 \cdot \frac{\pi}{2} + r_A \cdot \pi = (b + \theta_2) \cdot \frac{\pi}{2} + r_A \cdot \pi
\end{equation}
Then, Bob computes performs the following quantum measurement using the angle $\delta$ in Step 5:
\begin{equation}
    (M_Z \otimes I)(H R_z(-\delta) \otimes I_2) \ [CZ(\Ket{\psi_1} \otimes \ket{+})]
\end{equation}
If the measurement outcome is $m_1$ the unmeasured qubit $\ket{out_c'}$ is equal to:
\begin{equation}
    \begin{split}
        & \ket{out_b'} = X^{m_1} H R_z(-\delta)\Ket{\psi_1} \\
        \end{split}
\end{equation}
By replacing $\ket{\psi_1}$ we get:
\begin{equation}
    \begin{split}
        \ket{out_b'} &= X^{m_1} H R_z\left(-((b + \theta_2) \cdot \frac{\pi}{2} + r_A \cdot \pi)\right) X^{m_0}R_z\left((-1)^{m_0} \theta \right) H \Ket{\psi_{in}} \\
        &= X^{m_1}Z^{m_0}HR_z\left[-(-1)^{m_0}((b + \theta_2) \cdot \frac{\pi}{2} + r_A \cdot \pi)\right] \cdot \\
        & \cdot R_z\left[(-1)^{m_0} (\theta_1 \cdot \pi + \theta_2 \cdot \frac{\pi}{2})\right] H \Ket{\psi_{in}} \\
        &= X^{m_1}Z^{m_0}HR_z\left[(-1)^{m_0}( -(b + \theta_2) \cdot \frac{\pi}{2} - r_A \cdot \pi + \theta_1 \cdot \pi + \theta_2 \cdot \frac{\pi}{2} )\right] H \Ket{\psi_{in}} \\
        &= X^{m_1}Z^{m_0}HR_z\left[-(-1)^{m_0}b \cdot \frac{\pi}{2} + (-1)^{m_0} (\theta_1 - r_A) \cdot \pi \right] H \Ket{\psi_{in}} \\
        &= X^{m_1}Z^{m_0}HR_z\left[-(-1)^{m_0}b \cdot \frac{\pi}{2}\right] R\left[ (-1)^{m_0} (\theta_1 - r_A) \cdot \pi \right] H \Ket{\psi_{in}} \\
        \end{split}
\end{equation}
Using the relations: $R_z((-1)^{b_1}b_2\pi) = R_z(b_2 \pi) = Z^{b_2}$ and $R_z((-1)^{b_1}b_2 \frac{\pi}{2}) = Z^{b_1b_2}R_z(b_2\frac{\pi}{2})$, for any $b_1, b_2 \in \{0, 1\}$ we get:
\begin{equation}
    \begin{split}
        \ket{out_b'} &= X^{m_1}Z^{m_0} H  Z^{m_0 \, \cdot \, b} R_z\left(- b \cdot \frac{\pi}{2} \right) R_z\left[ (-1)^{m_0} (\theta_1 - r_A) \cdot \pi \right] H \Ket{\psi_{in}} \\
        &= X^{m_1 \, \oplus \, (m_0 \, \cdot \, b)} Z^{m_0} H  R_z\left[- b \cdot \frac{\pi}{2} \right] Z^{\theta_1 \oplus r_A}  H \Ket{\psi_{in}} \\
        &= X^{m_1 \, \oplus \, (m_0 \, \cdot \, b) \, \oplus \, \theta_1 \oplus \, r_A} Z^{m_0}H R_z\left(- b \cdot \frac{\pi}{2} \right) H\Ket{\psi_{in}} 
    \end{split}
\end{equation}

Then, Bob applies the final corrections in Step 5:
\begin{equation}
    \begin{split}
        & \ket{out_b} = X^{m_1} Z^{m_0} \ket{out_b'} \\
        & \ket{out_b} = X^{(m_0 \, \cdot \, b) \, \oplus \, \theta_1 \oplus \, r_A} H R_z\left(- b \cdot \frac{\pi}{2} \right) H\Ket{\psi_{in}}
    \end{split}
\end{equation}
In step 6, Bob measures this quantum state in the computational basis and sends the outcome $\bar{s_b}$ together with $m_0$ and $m_{qf}$ to Alice.
\begin{equation}
    \begin{split}
        \bar{s_b} = M_Z \ket{out_b} = M_Z X^{(m_0 \, \cdot \, b) \, \oplus \, \theta_1 \oplus \, r_A} H R_z\left(- b \cdot \frac{\pi}{2} \right) H\Ket{\psi_{in}} \\   
        \bar{s_b} = [(m_0 \, \cdot \, b) \, \oplus \, \theta_1 \oplus \, r_A] \oplus M_Z H R_z\left(- b \cdot \frac{\pi}{2} \right) H\Ket{\psi_{in}}
    \end{split}
\end{equation}
On Alice side, in Step 7, she first uses her trapdoor key $t_k$ and computes from $m_{qf}$ the value of $\theta_1$. \\
Then, we can see that Alice by computing: $s_b = \bar{s_b} \oplus (m_0 \, \cdot \, b) \, \oplus \, \theta_1 \oplus \, r_A$ she will obtain:
\begin{equation}
    s_b = M_Z H R_z\left(- b \cdot \frac{\pi}{2} \right) H\Ket{\psi_{in}} = M_Z R_x \left(- b \cdot \frac{\pi}{2} \right) \Ket{\psi_{in}}
\end{equation}

\end{proof}

\subsubsection{Security against Semi-honest Alice: Proof of Theorem~\ref{thm:simulation_hbc_alice} \texorpdfstring{\\ \\}{}}
\label{app:simulation_hbc_alice}
\noindent\textbf{Theorem~\ref{thm:simulation_hbc_alice}}\;\textbf{(Simulation-based statistical security against semi-honest Alice)} \textit{The \onetwooqfe{} Protocol, \onetwooqfeprotone{}, (Protocol~\ref{protocol:quot_1_2_hbc_alice}) securely computes \onetwooqfefunct{} in the presence of semi-honest adversary Alice.}

 \begin{proof}\label{proof:simulation_hbc_alice}
We need to prove that there exists a $\ppt$ simulator $\mathcal{S}^A$ that given Alice's input and output can simulate the view of Alice in \onetwooqfeprotone, such that the output of the simulator and the real view of Alice are indistinguishable to any \textit{unbounded distinguisher}:
    \begin{equation}
        \{\mathcal{S}^A(b, s_b)\}_{\substack{b \in \{0, 1\} \\ \Ket{\psi_{in}} \in \mathcal{H}_2 }} \approx_u \{view_{A}^{\onetwooqfeprotonemath}(b, \Ket{\psi_{in}})\}_{\substack{b \in \{0, 1\} \\ \Ket{\psi_{in}} \in \mathcal{H}_2 }}  
    \end{equation}

The view of Alice in \onetwooqfeprotone{} consists of an input $b$, the randomness $r^A$ and all the messages received from Bob during the protocol. More specifically, we have:
\begin{equation}
    {view}_{A}^{\onetwooqfeprotonemath}(b, \Ket{\psi_{in}})\} = \{b, r^A, (m_{qf}, m_0, \bar{s}_b)\} 
\end{equation}
The internal random coins $r^A$ consists of the random bit $r_A$ and the randomness used for the algorithm $Gen_{\mathcal{F}}$, denoted by $r_f^A$.

We construct the simulator $\mathcal{S}^A$ in the following way: \\

\procedure[linenumbering]{ $\mathcal{S}^A(b, s_b) $}{%
          \tilde{r}^A \sample  \mathcal{R}^A \pccomment{ $\mathcal{R}^A$ is the space of randomness of Alice, and $\tilde{r}^A$ includes $\tilde{r}_f^A$ and $\tilde{r}_A$ }\\%
          \tilde{\bar{s}}_b \sample \{0, 1\} \\
          (\tilde{k}, \tilde{t}_{k}, hp) \gets Gen_{\mathcal{F}}(1^\lambda) \\
          \tilde{y} \sample Im(f_{\tilde{k}}) \, , \,\tilde{m} \sample \{0, 1\}^n \\
          \tilde{{m}}_{qf} \gets (\tilde{y}, \tilde{m}) \\%
          \tilde{{\theta}}_1 \gets Inv_{\mathcal{F}}(\tilde{m}_{qf}, \tilde{t}_{k}) \\    
        \pcif (b == 1) \pcthen \\
          \t \tilde{\gamma} \gets \tilde{\bar{s}}_b \oplus s_b \oplus \tilde{{\theta}}_1 \oplus \tilde{r}_A \pccomment{$\tilde{\gamma} := \tilde{m}_0\cdot b$ can be computed from Step 8 of Protocol~\ref{protocol:quot_1_2_hbc_alice}. }\\%
            \t  \tilde{m}_0 \gets \tilde{\gamma} \\
          \pcelse \\
            \t \tilde{\gamma} \sample \{0, 1\} \\
            \t \tilde{m}_0 \gets \tilde{\gamma} \\ 
          \pcreturn (b, \tilde{r}^A, (\tilde{m}_{qf}, \tilde{m}_0, \tilde{\bar{s}}_b))%
}
\\

In order to prove that the output of $\mathcal{S}^A$ and the real view of Alice are indistinguishable, we need to show that the following distributions $D_1$ and $D_2$ are indistinguishable (where $D_1$ corresponds to Alice's view and $D_2$ with $\mathcal{S}^A$'s output): 

\begin{equation}
    \begin{split}
    & D_1 = \{b,  r_A,  m_{qf}, m_0, \bar{s}_b \}_{b, \Ket{\psi_{in}}} \\
    & D_2 = \{ b, \tilde{r}_A, \tilde{m}_{qf}, \tilde{m}_0, \tilde{\bar{s}}_b \}_{b, \Ket{\psi_{in}}}
\end{split}
\end{equation}
From step 7 of simulator $\mathcal{S}^A$ and Step 8 of Protocol~\ref{protocol:quot_1_2_hbc_alice}, we can also write the distribution $D_1$ and $D_2$ as:

\begin{equation}
    \begin{split}
    & D_1' = \{b, r_A,  m_{qf}, m_0, {\bar{s}}_b \}_{b, \Ket{\psi_{in}}} \\
    & D_2' = \{ b, \tilde{r}_A, \tilde{m}_{qf}, \tilde{\gamma}, \tilde{\bar{s}}_b \}_{b, \Ket{\psi_{in}}}
\end{split}
\end{equation}
Independent of the value of the bit $b$, $m_0$ and $\tilde{\gamma}$ are indistinguishable. 

Since ($r_A$, $\tilde{r}_A$) and ($\bar{s}_b$, $\tilde{\bar{s}}_b$) are sampled uniformly and independently at random from \zo, the task of the distinguisher can be equivalently seen as distinguishing these two distributions:

\begin{equation}
    \begin{split}
    & D_1'' = \{b,   m_{qf}, s_b \oplus {\theta}_1 \}_{b, \Ket{\psi_{in}}} \\
    & D_2'' = \{ b, \tilde{m}_{qf}, s_b \oplus \tilde{\theta}_1 \}_{b, \Ket{\psi_{in}}}
\end{split}
\end{equation}
Since $\theta_1$ and $\tilde{\theta}_1$ are generated from $m_{qf}$ and $\tilde{m}_{qf}$, respectively, which are sampled uniformly and independently at random, the task of distinguisher can be equivalently seen as distinguishing these two distributions:
\begin{equation}
    \begin{split}
    & D_1''' = \{b,   m_{qf}\}_{b, \Ket{\psi_{in}}} \\
    & D_2''' = \{ b, \tilde{m}_{qf} \}_{b, \Ket{\psi_{in}}}
\end{split}
\end{equation}

Finally, in the real protocol $m_{qf}$ consists of $y \in Im(f_k)$ and a bitstring $m \in \{0, 1\}^n$. They represent outcomes of Bob's measurements inside QFactory protocol (thus before $\ket{\psi_{in}}$ was even used) and irrespective of $b$, in an honest run they occur with equal probability: $ \pr{y} = \frac{1}{| \Ima f |}$, $\pr{m} = \frac{1}{2^n}$. Therefore, as in $\tilde{m_{qf}}$ we sample $\tilde{y}$ uniformly from the the image of $f_{k'}$ and $\tilde{m}$ uniformly at random from $\{0, 1\}^n$, this makes $m_{qf}$ and $\tilde{m_{qf}}$ statistically indistinguishable. 
This shows that $D_1'$ and $D_2'$ are indistinguishable against the unbounded distinguished $\mathcal{D}$, which concludes the proof.
\end{proof}

\subsubsection{Privacy against Malicious Bob: Proof of Theorem~\ref{thm:privacy_against_bob} \texorpdfstring{\\ \\}{}}
\label{app:privacy_against_bob}

\noindent\textbf{Theorem~\ref{thm:privacy_against_bob}}\;\textbf{(Privacy against Malicious Bob)} \textit{The \onetwooqfe{} Protocol~\ref{protocol:quot_1_2_hbc_alice} \onetwooqfeprotone{} is private against malicious Bob.}

\begin{proof}\label{proof:privacy_against_bob}
To show that the \onetwooqfeprotone{} is private against Bob, it suffices to show that Bob cannot distinguish between the cases when Alice has input $b = 0$ or input $b = 1$. 

In other words, we need to show that:
For any $\qpt$ $Bob^*$ (interacting with Alice in \onetwooqfeprotone) and for any auxiliary input $z$, we have:
\begin{equation}
\{view_{Bob^*}(Bob^*(z),Alice(0))\} \approx_q \{view_{Bob^*}(Bob^*(z),Alice(1))\}
    \label{eq:view_general_bob}
\end{equation}

The view of a malicious $Bob^*$ in \onetwooqfeprotone{} when he has auxiliary input $z$ and Alice has input $b$ is defined as:
\begin{equation}
    view_{Bob^*}(Bob^*(z),Alice(b)) = (z, r^B, (k, \delta))   
\end{equation}
where $r^B$ is the randomness of $Bob^*$. \\
The transcript received by $Bob^*$ from Alice during \onetwooqfeprotone{} consists of:
\begin{enumerate}
    \item $k$ - the public key obtained by Alice when running $Gen_{\mathcal{F}}$ corresponding to a 2-regular trapdoor (post-quantum) function $f_k$;
    \item $\delta$ - represents the measurement angle that Bob is instructed to use in the quantum computations he is performing in Stage 3 of the Protocol \ref{protocol:quot_1_2_hbc_alice};
\end{enumerate}
The internal random tape of $Bob^*$ contains $r_B$. We will prove by contradiction that if there exists a $\qpt$ distinguisher that can distinguish between the 2 views of $Bob^*$ in the cases Alice's input is $b = 0$ and respectively $b = 1$, then there exists a $\qpt$ algorithm that can break the \textit{4-states basis blindness} property (Definition~\ref{def:4basisblind}) of QFactory, or equivalently, the hardcore property of the basis bit $\theta_2$. More specifically, we assume that there exists a $\qpt$ algorithm $\mathcal{A}$ that on input $(r^B, k, \delta)$ can output Alice's input $b$ with probability $\frac{1}{2} + \frac{1}{p}$ and we will construct an algorithm $\mathcal{A}'$ that can break the hardcore property of the basis $\theta_2$ with probability $\frac{1}{2} + \frac{1}{p}$.  This implies that if $\mathcal{A}$ succeeds to distinguish the 2 views with inverse polynomial probability, the same applies to the hard-core predicate property, and hence we reach a contradiction.

\procedure[linenumbering]{$\mathcal{A}'(k)$}{
 r_A \sample \{0, 1\} \, , \, b \sample \{0, 1\} \, , \, B_2 \sample \{0, 1\} \\
    \tilde{\delta} \gets b + B_2 + 2r_A \bmod 4 \\
	\tilde{b} \gets \mathcal{A}(k, \tilde{\delta}) \\
	\pcif (\tilde{b} = b) \pcthen \tilde{\theta_2} \gets B_2 \\
    \pcelse  \tilde{\theta_2} \gets B_2 \oplus 1 \\
	 \pcreturn \tilde{\theta_2}  
}\\ 

Now to compute the probability that $\mathcal{A}$ break the hardcore predicate, we first consider 2 cases: i) $B_2 = \theta_2$ and ii) $B_2 \neq \theta_2$, whereas $B_2$ is sampled uniformly, each occur with probability $\frac{1}{2}$. The first case corresponds to the view of the protocol when Alice's input is $b$ and the second case corresponds to the view of the protocol when Alice's input is $1 \oplus b$. 
Therefore, we have:
\begin{equation} \nonumber
    \begin{split}
        &Pr[\mathcal{A}'(k) = \theta_2] = Pr[\mathcal{A}'(k) = \theta_2 \, | \, B_2 = \theta_2] \cdot Pr[B_2 = \theta_2] + \\
        & + Pr[\mathcal{A}'(k) = \theta_2 \, | \, B_2 = 1 \oplus \theta_2] \cdot Pr[B_2 = 1 \oplus \theta_2] \\
        &= \frac{1}{2} (Pr[\mathcal{A} \text{ outputs } b \, | \, \text{Alice input$ = b$}] 
         + Pr[\mathcal{A} \text{ outputs } 1 \oplus b \, | \, \text{Alice input$ = 1\oplus b$}]) \\
         &= \frac{1}{2} \left( \frac{1}{2} + \frac{1}{p} + \frac{1}{2} + \frac{1}{p} \right) = \frac{1}{2} + \frac{1}{p}
    \end{split}
\end{equation}
\end{proof}





\subsubsection{Security against Malicious Alice: Proof of Lemma~\ref{thm:simulation_malicious_alice_oqfe} \texorpdfstring{\\ \\}{}}
\label{app:simulation_malicious_alice_oqfe}

\noindent\textbf{Lemma~\ref{thm:simulation_malicious_alice_oqfe}}\;\textbf{(Simulation-based Security Malicious Alice)} \textit{The \oqfe{} Protocol~\ref{protocol:oqfe_protocol} is simulation-based secure against malicious Alice.}

\begin{proof}
We need to show that for any adversary $Alice^*$ there exists a $\qpt$ adversary $\mathcal{S}$ for the ideal model such that:
\begin{equation*}
\{ \ideal_{\mathcal{F}_{\sf{CQ}}, \mathcal{S}(z), Alice}(\Phi, \Phi_{in}) \}_{\Phi, \Phi_{in}, z} \approx_q \{ \myreal_{\pi_{\sf{Q2PC}}, Alice^*(z), Alice}(\Phi, \Phi_{in}) \}_{\Phi, \Phi_{in}, z}
\end{equation*}

The proof will follow closely the steps of the proof of Theorem~\ref{lemma:one_sided_simulation}, hence we only provide a sketch of the proof highlighting the main differences.

 We have to prove that there exists a $\qpt$ simulator $S$, that by having access only to the ideal functionality $\mathcal{F}_{\sf{CQ}}$, can simulate the output of any malicious $Alice^*$ who runs one execution of $\pi_{\sf{Q2PC}}$ with an honest sender Bob. The simulator $S$ having oracle access to $Alice^*$ will run as a sender Bob in the real protocol.
 The simulator runs the argument of knowledge extractor for $\Pi^\star$ (which exists by definition), and 
 extracts the input of the adversary (i.e., it extracts $\phi_{i,j}$ for all $i\in[n],j\in[m]$). 
 The simulator now can invoke the ideal functionality $\mathcal{F}_{\sf{CQ}}$ to
 obtain the output.
 Then the simulator extracts the trapdoors for all the trapdoor $\sf{OWFs}$ keys following the same procedure
 of the simulator of Theorem~\ref{lemma:one_sided_simulation}, to make sure that $Alice^\star$ was behaving honestly.
 
 From this point on we are guaranteed that $Alice^\star$ behaves honestly by the soundness of $\Pi'$. Moreover, the soundness of $\Pi'$ guarantees that the input that $Alice^\star$ is using to compute the
 values $\{\delta_{i,j}\}_{i\in[n],j\in [m]}$ is compatible with the input  $\{\phi_{i,j}\}_{i\in[n],j\in[m]}$ extracted by the simulator.
 Hence, the simulator can act as the semi-honest simulator for the \ubqc{} stage of the protocol.
\end{proof}




\subsection{Proofs of Section~\ref{sec:full_simulation_two_pc}}  \label{App_F_proofs_Sec_7}

\subsubsection{Proof of Theorem~\ref{thm:full_sim_qtwopc} \texorpdfstring{\\ \\}{}} \label{app:theorem_full_sim}

\noindent\textbf{Theorem~\ref{thm:full_sim_qtwopc}}\; \textit{Protocol~\ref{protocol:full_sim_qtwopc} is a secure black-box \qtwopc{} (as defined in Def.~\ref{def:Qtwopc}), assuming that Bob has a classical description of his input.}

\begin{proof}
Firstly, to show that Protocol~\ref{protocol:full_sim_qtwopc} is simulation-based secure against malicious $Bob^*$, we have to prove that there exists a QPT simulator $S$, that by having access only to the ideal functionality of \qtwopc{} , can simulate the output of any malicious $Bob^*$ who runs one execution of Protocol~\ref{protocol:full_sim_qtwopc} with an honest Alice. The simulator $S$ having oracle access to $Bob^*$ will run as Alice in the real protocol. \\

\noindent We construct simulator $S$ as follows:
\begin{enumerate}
    \item Receives $com_y$ from $Bob^*$;
    \item Defines statement $x = com_y$;
    \item Runs extractor $E$ for the proof of knowledge $(P_C^{(1)}, V_C^{(1)})$ for $Rel_C$, thus obtaining $w := (dec_y, y_{\ket{\psi}})$;
    \item Verifies the decommitment phase, namely if $Dec(com_y, y_{\ket{\psi}}, dec_y) = 1$;
    \item If the extraction fails then $S$ aborts;
    \item $S$ will run as Alice using the input $0$ during the $\sf{OS-Q2PC}$ protocol; 
    \item Invokes the ideal functionality for \qtwopc{} on inputs $(x, \ket{\psi_{in}})$, where $\ket{\psi_{in}}$ is the quantum state having the classical description $y_{\ket{\psi}}$. $S$ obtains the outcome of the ideal functionality $out$;
    \item Runs the verifier $V_C^{(2)}$ for the zero-knowledge $(P_Q^{(2)}, V_C^{(2)})$. If $V_C^{(2)}$ rejects then we abort;
\end{enumerate}

We now need to show the correctness of our simulator $S$, i.e. the output of $S$ is computationally indistinguishable from the output of $Bob^*$ in the real-world experiment. We notice that the simulator $S$ fails if $Bob^*$ breaks the binding of the commitment, the proof of knowledge property of $(P_C^{(1)}, V_C^{(1)})$ or the soundness property of $(P_Q^{(2)}, V_C^{(2)})$.

Finally, to obtain the full-simulation secure black-box \qtwopc{} protocol we just need to plug our one-sided simulation (black-box) \qtwopc{} Protocol~\ref{protocol:oqfe_protocol} which achieves simulation-based security against Alice and privacy against quantum Bob, in our compiler $C$ (Protocol~\ref{protocol:full_sim_qtwopc}) in order to boost to simulation security against Bob. \\
The simulation-based security against a $QPT$ Alice is preserved, due to the sequential composition of the two post-quantum zero-knowledge protocols $(P_C^{(1)}, V_C^{(1)})$ and $(P_Q^{(2)}, V_C^{(2)})$.
\end{proof}




\subsection{Proofs of Section~\ref{sec:compiler_zkpoqk}}  \label{App_G_proofs_Sec_8}

To prove Theorem~\ref{thm:zkpoqk} we need to show the following three lemmas.


\begin{lemma}
Assuming $(P_Q^{(1)}, V_C^{(1)})$ is a proof of knowledge with knowledge error $\kappa_1$ and $(P_C^{(2)}, V_C^{(2)})$ is a proof of knowledge with knowledge error $\kappa_2$, then $(P_Q, V_C)$ is a classical post-quantum proof of quantum knowledge system with knowledge error $\kappa_1 \kappa_2$.
\end{lemma}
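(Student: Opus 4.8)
The plan is to assemble the quantum-knowledge extractor for $(P_Q,V_C)$ by running, in sequence, the proof-of-knowledge extractor $E_2$ of $(P_C^{(2)},V_C^{(2)})$ and the proof-of-quantum-knowledge extractor $E_1$ of $(P_Q^{(1)},V_C^{(1)})$. Fix a malicious quantum prover $P^*$ that makes $V_C$ accept the instance $x$ with probability $\epsilon>\kappa_1\kappa_2$. First I would define the composed extractor $E$ (with black-box access to $P^*$) as follows: run $P^*$ through steps~1--2, record its internal state and the commitment $com_{sk}$; view the residual machine ``$P^*$ from step~3 on'' as a prover $\widetilde P$ for the $\NP$ relation $Rel_C$ on statement $com_{sk}$ and run $E_2^{\widetilde P}(com_{sk})$ to obtain a candidate $(dec_{sk},\widetilde{sk})$; then build a $\mathsf{CPoQK}$ prover $\widehat P$ for $(P_Q^{(1)},V_C^{(1)})$ that runs $P^*$ from step~4 on, intercepting each ciphertext $enc_i$, decrypting it under $\widetilde{sk}$ to recover a message $m_i$, forwarding $m_i$ to the external verifier $V_C^{(1)}$, relaying $V_C^{(1)}$'s replies $m_i'$ back to $P^*$, and discarding the step-6 zero-knowledge proof; finally run $E_1$ on $\widehat P$ and output the state $\sigma$ it returns.

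\textbf{Correctness of the extraction.} By inspection of Protocol~\ref{protocol:zkcpoqk}, $V_C$ accepts only if $V_C^{(2)}$ accepted in step~3 and the step-6 zero-knowledge proof $(P_C^{(3)},V_C^{(3)})$ accepted. Using the computational binding of $\mathsf{PQCOM}$ together with the soundness of $(P_C^{(3)},V_C^{(3)})$, whenever the step-6 proof accepts the ciphertexts are --- except with negligible probability --- encryptions under the value $sk$ committed in $com_{sk}$, and decrypting them under $sk$ yields a transcript that $V_C^{(1)}$ accepts. Consequently, on the event that $E_2$ returns the true committed key $\widetilde{sk}=sk$, the emulated prover $\widehat P$ reproduces exactly this decrypted transcript, so $\widehat P$ makes $V_C^{(1)}$ accept with essentially the probability with which $V_C$ accepts, conditioned on the key being correct. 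The proof-of-knowledge guarantee of $(P_C^{(2)},V_C^{(2)})$ makes $E_2$ return the true $sk$ with probability at least $p_2(a-\kappa_2,1/n)$ when the step-3 acceptance probability $a$ exceeds $\kappa_2$, and the proof-of-quantum-knowledge guarantee of $(P_Q^{(1)},V_C^{(1)})$ then yields $(x,\sigma)\in Rel_Q$ (for the quality parameter inherited from $(P_Q^{(1)},V_C^{(1)})$) with probability at least $p_1(b-\kappa_1,1/n)$ when $\widehat P$'s acceptance probability $b$ exceeds $\kappa_1$. One also checks that $E$ runs in expected quantum-polynomial time, being a composition of two such machines.

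\textbf{Main obstacle.} The delicate point, and where the bulk of the work lies, is turning the two additive knowledge errors $\kappa_1$ and $\kappa_2$ into the single multiplicative error $\kappa_1\kappa_2$: the hypothesis $\epsilon>\kappa_1\kappa_2$ does not by itself force either $a>\kappa_2$ or $b>\kappa_1$, so a naive case analysis that invokes the two sub-extractors separately does not suffice. The leverage I would use is that the verifier's randomness in the two sub-protocols is independent (the step-3 challenges of $V_C^{(2)}$ versus the $\mathsf{CPoQK}$ challenges $m_i'$ of $V_C^{(1)}$) and that $P^*$ never sees the intermediate accept/reject bits, which lets one write $\epsilon$ as an average, over $P^*$'s post-step-2 state, of (step-3 acceptance probability) times (conditional $\mathsf{CPoQK}$-phase success probability). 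A Markov/averaging argument then shows that with probability polynomially bounded below in $\epsilon-\kappa_1\kappa_2$ both factors simultaneously exceed their respective knowledge errors, on which event the chained extractor succeeds; combining this with the two polynomial bounds above gives extraction probability at least $p(\epsilon-\kappa_1\kappa_2,1/n)$ for a suitable polynomial $p$, establishing that $(P_Q,V_C)$ is a classical post-quantum proof of quantum knowledge with knowledge error $\kappa_1\kappa_2$.
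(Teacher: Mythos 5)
Your extractor is built exactly as in the paper: run the prover through the commit phase, wrap it as a prover for the NP relation and run the PoK extractor to get $sk$, then wrap it again as a prover for the underlying classical proof of quantum knowledge by decrypting each $enc_i$ under the extracted key and relaying the verifier's (message-independent) replies, and justify that the decrypted transcript is accepting via the soundness of the step-6 zero-knowledge proof (the paper argues precisely this reduction). Up to that point your proposal and the paper's proof coincide.

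The gap is in the step you yourself flag as the crux: the claim that, writing $\epsilon$ as an average of $a\cdot b$ (step-3 acceptance probability times conditional $\mathsf{CPoQK}$-phase success), a Markov/averaging argument gives that \emph{both} $a>\kappa_2$ and $b>\kappa_1$ hold simultaneously with probability polynomial in $\epsilon-\kappa_1\kappa_2$. This implication is false. From $ab>\kappa_1\kappa_2$ one only gets the disjunction $a>\kappa_2$ \emph{or} $b>\kappa_1$ (since $a\le\kappa_2$ and $b\le\kappa_1$ would force $ab\le\kappa_1\kappa_2$), whereas your chained extractor needs both: $a>\kappa_2$ to have any guarantee of recovering $sk$, and $b>\kappa_1$ to have any guarantee from the quantum-knowledge extractor. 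Concretely, a prover whose step-3 acceptance probability is exactly $\kappa_2$ and whose conditional success in the remaining phase is $1$ has $\epsilon=\kappa_2$, so $\epsilon-\kappa_1\kappa_2=\kappa_2(1-\kappa_1)$ can be constant, yet the event ``both factors exceed their knowledge errors'' has probability $0$ and neither sub-extractor's hypothesis is met; your argument as written gives nothing in this regime. The paper does not attempt such an averaging argument: after constructing the same two wrapped provers and chaining the extractors, it obtains the quantitative bound by invoking the sequential-composition theorem of Unruh (Theorem~3 of~\cite{unruh2012quantum}), concluding that the composed extractor succeeds with probability at least $\frac{1}{p(\eta)}\bigl(\Pr[\langle P_Q^*(x,\rho),V_C(x)\rangle=1]-\kappa_1(\eta)\kappa_2(\eta)\bigr)^d$, which is what yields knowledge error $\kappa_1\kappa_2$. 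To repair your write-up you would need either to appeal to such a composition theorem or to prove a composition lemma of that strength directly; the simultaneous-thresholds argument you sketch cannot be the basis for it.
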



\begin{proof}
We will construct our extractor $E$ by running sequentially two extractors: $E_1$ (for $(P_C^{(2)}, V_C^{(2)}$) and $E_2$ (for $(P_Q^{(1)}, V_C^{(1)}$).  
Consider a malicious prover $P_Q^*$ for our \zkpoqk{} Protocol~\ref{protocol:zkcpoqk}. \\
Then, we need to construct an extractor $E$ which needs to output the witness $w$ by having black-box access to $P_Q^*$. We will now construct two provers ${P_C^{(2)}}^*$ and ${P_Q^{(1)}}^*$ which will use $P_Q^*$ as a black-box.  Firstly, we will define the left interface as corresponding to the prover $P_Q^*$ and the right interface corresponding to the verifier. \\

\noindent We construct ${P_C^{(2)}}^*$ as follows:
\begin{enumerate}
    \item For each round of $(P_C^{(2)}, V_C^{(2)})$ (corresponding to only the first proof of knowledge):
    \begin{enumerate}
        \item Calls $P_Q^*$ and receives from $P_Q^*$ (left interface) a message $m_i$;
        \item Forwards $m_i$ to the right interface;
        \item Upon receiving a message $n_i$ from the right interface, ${P_C^{(2)}}^*$ sends $n_i$ to $P_Q^*$.
    \end{enumerate}
\end{enumerate}
Given that $(P_C^{(2)}, V_C^{(2)})$ is a proof of knowledge system, we know there must exist an extractor $E_1$ such that when given black-box access to our constructed ${P_C^{(2)}}^*$, $E_1$ will be able to output the witness $w_1 := sk$ as well as a state $\rho_1'$. 
We will use $\rho_1'$ as an input for the second extractor $E_2$, while we will use the witness $w_1$, when constructing our second prover ${P_Q^{(1)}}^*$. \\

\noindent We construct ${P_Q^{(1)}}^*$ as follows: 
\begin{enumerate}
    \item For each round of $(P_Q^{(1)}, V_C^{(1)})$ (corresponding to only the second proof of knowledge): 
        \begin{enumerate}
            \item Calls $P_Q^*$ on input $\rho_1'$ and receives from $P_Q^*$ a message $m_i'$;
            \item Computes $Dec(m_i', sk)$ and forwards this decrypted message to the right interface;
            \item Upon receiving a message $n_i'$ from the right interface,  ${P_Q^{(1)}}^*$ sends $n_i'$ to $P_Q^*$.
        \end{enumerate}
\end{enumerate}

Now, given that $({P_Q^{(1)}}, V_C^{(1)})$ is a proof of quantum knowledge system, there must exist an extractor $E_2$ which on input $\rho_1'$, when given oracle access to the newly constructed prover ${P_Q^{(1)}}^*$, $E_2$ will be able to output the witness which represents exactly the witness $w$ of our initial relation $Rel_Q$. 

Now, we can argue that $E_2$ will work correctly. If instead, we assume that $E_2$ does not work correctly, then we have a reduction to breaking the soundness of the final zero-knowledge (in step 6), which ensures that the messages $m_i'$ received from $P_Q^*$ are encrypted under the secret key $w_1 = sk$.

Our main extractor $E$ will just run sequentially the two extractors: $E_1$, followed by $E_2$. \\
Then, we will use the result of \cite{unruh2012quantum} (Theorem 3)
which will imply that $(P_Q, V_C)$ has knowledge error $\kappa_1 \cdot \kappa_2$ and our extractor $E$ will succeed with probability at least:
$$ \frac{1}{p(\eta)} (Pr[\langle P_Q^*(x, \rho), V_C(x) \rangle = 1] - \kappa_1(\eta) \kappa_2(\eta))^d $$
where $p$ is a polynomial function, $d$ is a constant and $\eta$ is the security parameter.
\end{proof}


\begin{lemma}
If $(P_C^{(2)}, V_C^{(2)})$ and $(P_Q^{(1)}, V_C^{(1)})$ are simulatable then $(P_Q, V_C)$ is simulatable.
\end{lemma}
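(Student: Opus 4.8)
The plan is to verify that the extractor $E$ from the previous lemma --- which runs $E_1$ (the extractor of $(P_C^{(2)}, V_C^{(2)})$) and then $E_2$ (the extractor of $(P_Q^{(1)}, V_C^{(1)})$), while otherwise playing the honest verifier $V_C$ of Protocol~\ref{protocol:zkcpoqk} --- disturbs the prover's internal state only negligibly, given that $E_1$ and $E_2$ each do. The proof is a short hybrid argument over the two phases in which an honest verifier gets replaced by an extractor.

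First I would fix a malicious prover $P_Q^*$ and condition on $V_C$ accepting. Let $H_0$ be the real interaction, whose output is the residual state $\tilde\rho$ of $P_Q^*$. Let $H_1$ be the same experiment except that in the inner proof-of-knowledge phase (Step~3) the honest $V_C^{(2)}$ is replaced by $E_1$ given black-box access to the wrapper ${P_C^{(2)}}^*$ (which internally runs $P_Q^*$), after which the remaining steps are executed as before on the state ${P_C^{(2)}}^*$ is left in. Let $H_2$ be the same as $H_1$ except that in the inner CPoQK phase (Steps~4--5) the honest $V_C^{(1)}$ is replaced by $E_2$, run on the residual state coming out of Step~3 and given black-box access to the wrapper ${P_Q^{(1)}}^*$ (which uses the key $sk$ extracted by $E_1$ to decrypt); the final zero-knowledge proof of Step~6 and the decision of Step~7 are then carried out normally. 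By construction $H_2$ is exactly the distribution of the state $\rho'$ that $E$ outputs at the end of an accepting run, so it suffices to show $H_0 \approx_q H_1 \approx_q H_2$.

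For $H_0 \approx_q H_1$: the only modification is the sub-protocol $(P_C^{(2)}, V_C^{(2)})$ played by ${P_C^{(2)}}^*$. On the runs in which this sub-protocol is accepted, simulatability of $(P_C^{(2)}, V_C^{(2)})$ says that the residual state of ${P_C^{(2)}}^*$ after $E_1$ is $\approx_q$ its residual state after honest $V_C^{(2)}$; since ${P_C^{(2)}}^*$ merely relays messages to and from $P_Q^*$, this residual state is the internal state of $P_Q^*$ at the end of Step~3 together with the shared message register. Steps~4--7, including recording the final decision bit, amount to a fixed CPTP map applied to this state and to fresh honest-verifier randomness, and $\approx_q$ is preserved under CPTP maps; the conditioning on the remaining acceptance events rescales both experiments by factors that differ by at most a negligible amount because, by the knowledge-soundness bounds already used in the previous lemma, $E_1$ reproduces the acceptance profile of the inner sub-protocol up to negligible error. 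The step $H_1 \approx_q H_2$ is proved identically, now invoking simulatability of $(P_Q^{(1)}, V_C^{(1)})$ and using that, by knowledge soundness of the inner \zkpok{} and the binding of the commitment, the extracted $sk$ is (with overwhelming probability) the committed key, so that ${P_Q^{(1)}}^*$ is a legitimate prover against $V_C^{(1)}$ and Steps~6--7 are again a fixed CPTP map on the residual state. The triangle inequality for $\approx_q$ then gives $\tilde\rho \approx_q \rho'$, which is the claim.

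The main obstacle is the bookkeeping around conditioning on acceptance: the component simulatability guarantees are stated relative to \emph{their own} accepting transcripts, whereas the output distributions in $H_0, H_1, H_2$ are conditioned on acceptance of the \emph{whole} protocol. One therefore has to argue that the overall acceptance event decomposes into acceptance of $(P_C^{(2)}, V_C^{(2)})$, acceptance of $(P_Q^{(1)}, V_C^{(1)})$ (which is enforced, up to the soundness error, by the final zero-knowledge verifier of Step~6), and acceptance of that final zero-knowledge proof, and that each substitution of an honest verifier by its extractor changes the acceptance probability by at most a negligible amount --- the change in the conditional residual state being exactly what simulatability of the component controls. Once this accounting is made precise the hybrids compose without further difficulty.
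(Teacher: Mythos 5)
Your proposal is correct and follows essentially the same route as the paper: a two-hybrid argument in which the honest $V_C^{(2)}$ is first replaced by $E_1$ and then $V_C^{(1)}$ by $E_2$, each step justified by the simulatability of the corresponding component, with the two negligible losses summed by the triangle inequality. Your extra bookkeeping about conditioning on overall acceptance, the CPTP post-processing of Steps 4--7, and the correctness of the extracted $sk$ is a (welcome) refinement of the paper's terser argument rather than a different approach.
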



\begin{proof}
As explained above, the extractor $E$ for the proof of knowledge $(P_Q, V_C)$ will first run the extractor $E_1$ for the proof of knowledge $(P_C^{(2)}, V_C^{(2)})$ and then the extractor $E_2$ for the proof of knowledge $(P_Q^{(1)}, V_C^{(1)})$. \\  
To show that $(P_Q, V_C)$ has the simulatability property we will use a hybrid argument. The first hybrid $Hyb_1$ corresponds to the real world. In the second hybrid $Hyb_2$, we replace the honest verifier $V_C^{(2)}$ with the extractor $E_1$. The indistinguishability of these two hybrids follows from a direct reduction to breaking the simulatability property of $(P_C^{(2)}, V_C^{(2)})$. Then, we introduce a third hybrid $Hyb_3$, where we replace the honest verifier $V_C^{(1)}$ with the extractor $E_2$. Similarly, the indistinguishability between $Hyb_2$ and $Hyb_3$ follows from the simulatability property of $(P_Q^{(1)}, V_C^{(1)})$. Then consider that for $(P_C^{(2)}, V_C^{(2)})$, we have that: $|Pr[D(\rho_1) = 1] - Pr[D(\rho_1')]| \leq negl_1$, for any QPT distinguishers $D$, where $\tilde{\rho_1}$ and $ \rho_1'$ represent the outputs of $P_C^{(2)}$ and $E_1$ respectively, when $V_C^{(2)}$ accepts. Analogously, for $(P_Q^{(1)}, V_C^{(1)})$, we would have: $|Pr[D(\rho_2) = 1] - Pr[D(\rho_2')]| \leq negl_2$. Hence, for $(P_Q, V_C)$ we have that the probability of distinguishing between $\tilde{\rho}$ and $ \rho'$ is at most $negl_1 + negl_2 = negl$ (where $\tilde{\rho}$ and $ \rho'$ represent the outputs of $P_Q$ and $E$ respectively, when $V_C$ accepts).
\end{proof}


\begin{lemma}
$(P_Q, V_C)$ is a post-quantum zero-knowledge proof system.
\end{lemma}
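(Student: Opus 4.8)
The plan is to prove that $(P_Q,V_C)$ (Protocol~\ref{protocol:zkcpoqk}) is post-quantum zero-knowledge by exhibiting a QPT simulator $\Sim$ that, given black-box access to any malicious QPT verifier $V_C^*$ and the statement $x \in L_{Rel_Q}$, produces a transcript indistinguishable from the real interaction, \emph{without} access to a witness $\rho$. The key observation is that $P_Q$ uses its knowledge of $\rho$ in only one place that is visible to $V_C^*$: the encrypted messages $enc_i = Enc(sk,m_i)$ in step 5(a), where $m_i$ is the (witness-dependent) message of the underlying \poqk{} prover $P_Q^{(1)}$. Everything else $P_Q$ sends --- the commitment $com_{sk}$ in step 2, the proof of knowledge in step 3, and the final zero-knowledge proof in step 6 --- can be handled by the zero-knowledge/simulatability of the component subprotocols. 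The crucial point is that, by the \mesind{} property, $V_C^*$ computes its own messages $m_i'$ independently of the $enc_i$, so the simulator does not need the plaintexts $m_i$ to drive the interaction forward.

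The concrete plan is a hybrid argument. First I would define $\Sim$: it runs $Gen(1^\lambda)$ to get $sk$, honestly computes $com_{sk} \gets Com(sk)$, and runs the honest prover $P_C^{(2)}$ in step 3 (since the NP-witness $(dec_{sk},sk)$ is known to $\Sim$ --- it generated $sk$ itself). In step 5, instead of encrypting the true \poqk{} messages, $\Sim$ sends $enc_i \gets Enc(sk,0^{|m_i|})$ (encryptions of fixed dummy strings of the appropriate length); it still faithfully forwards $V_C^*$'s messages $m_i'$, which it can do because by \mesind{} these do not depend on the $enc_i$ (and in fact we need not run $P_Q^{(1)}$ at all). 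Finally in step 6, $\Sim$ invokes the zero-knowledge simulator $\Sim^{(3)}$ of the post-quantum ZK protocol $(P_C^{(3)},V_C^{(3)})$ on the false statement, against the residual verifier. I would then argue indistinguishability through a chain of hybrids: $Hyb_0$ is the real interaction with $P_Q$; in $Hyb_1$ we replace the step-6 proof by $\Sim^{(3)}$ (indistinguishable by post-quantum ZK of $(P_C^{(3)},V_C^{(3)})$); in $Hyb_2$ we replace each $enc_i = Enc(sk,m_i)$ by $Enc(sk,0^{|m_i|})$ (indistinguishable by the post-quantum CPA-security of $PQE$ --- note $sk$ is no longer used anywhere except inside the $Enc$ oracle once step 6 is simulated, so a standard reduction to semantic security applies, with a further hybrid over the $N$ ciphertexts). $Hyb_2$ is exactly the view produced by $\Sim$, and transitivity gives $\Sim$'s output $\approx_q$ the real view.

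The main obstacle I expect is the subtlety around extracting $sk$ from $com_{sk}$ and the ordering of reductions: in $Hyb_2$ we want to replace plaintexts while still being able to run the reduction as the CPA adversary, which requires that after step~6 is simulated, the secret key $sk$ is \emph{only} accessed through encryption queries --- this is true because $V_C^*$ never decrypts (it only sees ciphertexts), and the step-3 proof of knowledge uses $sk$ honestly on the prover side but this can be routed through the CPA challenger as part of the reduction's state; one must check that the proof-of-knowledge transcript in step~3 does not itself leak $sk$ to a distinguisher, which holds since $(P_C^{(2)},V_C^{(2)})$ is itself (post-quantum) zero-knowledge. A secondary delicate point is that the hybrids manipulate quantum machines ($V_C^*$ keeps quantum internal state across rounds), so each indistinguishability step must be phrased as a reduction that plays the relevant security game while internally simulating the rest of the protocol as a quantum circuit; this is routine given the post-quantum security definitions assumed, but care is needed to ensure the reductions are QPT. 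Once these points are handled, combining this lemma with the previous two lemmas establishes Theorem~\ref{thm:zkpoqk}.
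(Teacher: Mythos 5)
Your overall architecture is close in spirit to the paper's (hybrids that first replace the step-6 proof by its zero-knowledge simulator and then swap the ciphertexts for encryptions of zeros), and you correctly isolate the only witness-dependent messages as the $enc_i$ and the step-6 proof. But there is a genuine gap at your step $Hyb_1 \rightarrow Hyb_2$: the reduction to the post-quantum security of $PQE$ does not go through for your simulator as designed. Your simulator commits to the \emph{same} key $sk$ that is used for the dummy encryptions and honestly runs the step-3 proof of knowledge with witness $(dec_{sk},sk)$. A CPA/IND reduction only gets oracle access to encryptions under an unknown key and never learns that key, so it cannot produce $com_{sk}$ nor play the honest step-3 prover; ``routing $sk$ through the CPA challenger as part of the reduction's state'' is not an available move, and ``the PoK is ZK so it does not leak $sk$'' is an intuition, not a reduction. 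To make that intuition usable you need explicit additional hybrids that first remove every use of $sk$ outside the encryptions: replace $com_{sk}$ by a commitment to $0^{l}$ (post-quantum hiding of $\sf{PQCOM}$) and replace the honest step-3 prover by the zero-knowledge simulator of $(P_C^{(2)},V_C^{(2)})$; only then can the plaintexts be switched by a reduction to the encryption scheme. This is exactly how the paper proceeds: its simulator sends $com_{0^l}$, runs the simulator $S_1$ for the step-3 proof, encrypts zeros under a \emph{fresh} key $sk'$, and runs the simulator $S_2$ for step 6, with hybrids ordered accordingly (hiding of the commitment, ZK of $(P_C^{(2)},V_C^{(2)})$, security of $Enc$, ZK of the final proof). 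You could in principle keep your simulator (honest $com_{sk}$ and honest step-3 proof), but then your chain needs the commitment/ZK hybrids both on the way out and on the way back, so it is simpler to adopt the paper's simulator. Two smaller remarks: your ordering of simulating step 6 \emph{before} switching the ciphertexts is the sound one (after the switch the step-6 statement is false, so the honest prover is unavailable); and the \mesind{} property plays no role in the zero-knowledge argument --- the simulator just answers whatever the malicious $V_C^*$ sends --- it is needed only for the knowledge-extraction lemma.
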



\begin{proof}
We will construct our extractor simulator $S$ by running sequentially two simulators: $S_1$ (for $(P_C^{(2)}, V_C^{(2)}$) and $E_2$ (for $(P_C^{(3)}, V_C^{(3)}$).  
Consider a malicious verifier $V_C^*$ for our \zkpoqk{} Protocol~\ref{protocol:zkcpoqk}. \\
Then, we need to construct a simulator $S$ that by having black-box access to $V_C^*$, needs to output a transcript that is (computationally) indistinguishable from the view of $V_C^*$ when interacting with prover $P_Q(x, w)$ for any $(x, w) \in Rel_Q$.

We will now construct two verifiers ${V_C^{(2)}}^*$ and ${V_C^{(3)}}^*$ which will use $V_C^*$ as a black-box. \\
We will define the left interface as corresponding to the prover and the right interface corresponding to $V_C^*$;

First, $S$ will compute $Com(0^l) \rightarrow (com_{0^l}, dec_{0^l})$, where $l$ is the size of $sk$, and will send $com_{0^l}$ to $V_C^*$, which cannot be distinguished from the real $com_{sk}$ due to the post-quantum hiding property of $Com$.

We construct ${V_C^{(2)}}^*$ as follows:
\begin{enumerate}
    \item For each round of $(P_C^{(2)}, V_C^{(2)})$ (corresponding to only the first zero-knowledge):
    \begin{enumerate}
        \item Calls $V_C^*$ and receives from $V_C^*$ (right interface) a message $m_i$;
        \item Forwards $m_i$ to the left interface;
        \item Upon receiving a message $n_i$ from the left interface, ${V_C^{(2)}}^*$ sends $n_i$ to $V_C^*$.
    \end{enumerate}
\end{enumerate}

Consequently, as $(P_C^{(2)}, V_C^{(2)})$ is a zero-knowledge system, there must exist a simulator $S_1$ such that when given black-box access to our constructed ${V_C^{(2)}}^*$, $S_1$ will be able to output a transcript that is indistinguishable from $\{View_{V_C^{(2)}} \langle P_1(x_1, w_1), {V_C^{(2)}}^*(x_1, \rho_1) \rangle\}$.

And $S$ will run this simulator $S_1$.

Next, corresponding to step 5 of Protocol~\ref{protocol:zkcpoqk}, $S$ will run $Gen(1^\lambda)$ and obtain $sk'$. For $i \in [N]$, $S$ will send $\tilde{enc}_i := Enc(sk', 0^{l_i})$ (where $l_i$ is the length of $m_i$) to $V_C^*$, which cannot be distinguished form the real $enc_i$, due to the post-quantum property of $Enc$.

Now we construct  ${V_C^{(3)}}^*$ as follows:
\begin{enumerate}
    \item For each round of $(P_C^{(3)}, V_C^{(3)})$ (corresponding to only the last zero-knowledge):
    \begin{enumerate}
        \item Calls $V_C^*$ and receives from $V_C^*$ (right interface) a message $m_i'$;
        \item Forwards $m_i'$ to the left interface;
        \item Upon receiving a message $n_i'$ from the left interface, ${V_C^{(2)}}^*$ sends $n_i'$ to $V_C^*$.
    \end{enumerate}
\end{enumerate}

Consequently, as $(P_C^{(3)}, V_C^{(3)})$ is a zero-knowledge system, there must exist a simulator $S_2$ such that when given black-box access to our constructed ${V_C^{(3)}}^*$, $S_2$ will be able to output a transcript that is indistinguishable from $\{View_{V_C^{(3)}} \langle P_3(x_2, w_2), {V_C^{(3)}}^*(x_2, \rho_2) \rangle\}$.
Finally, $S$ will run simulator $S_2$.
    
To show the correctness of our simulator $S$, we will use a hybrid argument. The first hybrid $Hyb_1$ corresponds to the real world. In the second hybrid $Hyb_2$ we replace $com_{sk}$ with $com_{0^l}$. The indistinguishability of these two hybrids follows from a reduction to breaking the post-quantum hiding property of $Com$. In the third hybrid $Hyb_3$, we replace the interaction during the first zero-knowledge proof $(P_C^{(2)}, V_C^{(2)})$ with the output of the simulator $S_1$ and the indistinguishability between the second and third hybrid holds by a reduction to the zero-knowledge property of $(P_C^{(2)}, V_C^{(2)})$. In the fourth hybrid $Hyb_4$, we replace the encrypted messages $enc_i$ with $\tilde{enc}_i$ and the indistinguishability of hybrids 3 and 4 is due to the post-quantum property of $Enc$. Finally, we have the hybrid $Hyb_5$ in which we replace the interaction during the second zero-knowledge proof $(P_C^{(3)}, V_C^{(3)})$ with the output of the simulator $S_2$ and the last two hybrids are indistinguishable due to the zero-knowledge property of $(P_C^{(3)}, V_C^{(3)})$.
\end{proof}

\end{document}